\newcommand{\shorten}[1]{}
\newcounter{mycounter}
\numberwithin{equation}{section}
\numberwithin{figure}{section}
\spnewtheorem{observation}[theorem]{Observation}{\bfseries}{\rmfamily}
\spnewtheorem{rrule}[theorem]{Reduction Rule}{\bfseries}{\rmfamily}
\spnewtheorem{construction}[theorem]{Construction}{\bfseries}{\rmfamily}
\Crefname{construction}{Construction}{Constructions}
\let\c@lemma\relax\makeatother
\spnewtheorem{lemma}[theorem]{Lemma}{\bfseries}{\rmfamily}
\Crefname{observation}{Observation}{Observations}
\Crefname{problem}{Problem}{Problems}
\Crefname{lemma}{Lemma}{Lemmas}
\Crefname{section}{Section}{Sections}
\Crefname{rrule}{Reduction Rule}{Reduction Rules}
\Crefname{figure}{Figure}{Figures}
\author{René van Bevern\thanks{René van Bevern is supported by grant 16-31-60007 mol\textunderscore{}a\textunderscore{}dk of the Russian Foundation for Basic Research.} \and 
Vincent Froese\and Christian~Komusiewicz\thanks{Christian Komusiewicz is supported by grant KO~3669/4-1 of Deutsche Forschungsgemeinschaft.}}
\date{\today{}}
\journalname{Theory of Computing Systems}
\institute{René van Bevern\at Novosibirsk State University, Novosibirsk, Russian Federation, \email{rvb@nsu.ru}
\at
Sobolev Institute of Mathematics, Siberian Branch of the Russian Academy of Sciences, Novosibirsk, Russian Federation
\and Vincent Froese\at Technische Universität Berlin, Germany, \email{vincent.froese@tu-berlin.de}
\and Christian Komusiewicz\at Friedrich-Schiller-Universität Jena, Germany, \email{christian.komusiewicz@uni-jena.de}}
\crefname{property}{Property}{Properties}
\crefname{corollary}{Corollary}{Corollaries}
\crefname{problem}{Problem}{Problems}
\newcommand{\decprob}[3]{%
\pagebreak[3]
  \begin{problem}[\boldmath#1]
    \begin{compactdesc}
        \item[\normalfont\it Input:] #2
        \item[\normalfont\it Question:] #3
    \end{compactdesc}
  \end{problem}
}%
\newcommand{\occ}{\ensuremath{\alpha}}
\newcommand{\num}{\ensuremath{\gamma}}
\newcommand{\symdiff}{\triangle}
\newcommand{\true}{\ensuremath{\mathrm{true}}}
\newcommand{\false}{\ensuremath{\mathrm{false}}}
\newcommand{\ffed}{\textsc{$F$-free Editing}}
\newcommand{\ffvd}{\textsc{$F$-free Vertex Deletion}}
\newcommand{\ffedv}{\textsc{$F$-free Editing with $F$-Packing}}
\newcommand{\tffedv}{\textsc{$F$-free Editing with Cost-$t$ Packing}}
\newcommand{\gfedv}[1]{\textsc{\ensuremath{#1}-free Editing with \ensuremath{#1}-Packing}}
\newcommand{\gfdev}[1]{\textsc{\ensuremath{#1}-free Deletion with \ensuremath{#1}-Packing}}
\newcommand{\gfedset}[1]{\ensuremath{#1}-free editing set}
\newcommand{\gfdeset}[1]{\ensuremath{#1}-free deletion set}
\newcommand{\packing}{\ensuremath{\mathcal H}}
\newcommand{\TAGP}{\textsc{Triangle Deletion with  Cost-$t$ Packing}}
\newcommand{\TATP}{\textsc{Triangle Deletion with  Triangle Packing}}
\newcommand{\CEGP}{\textsc{Cluster Editing with  Cost-$t$ Packing}}
\newcommand{\CEPP}{\textsc{Cluster Editing with  $P_3$-Packing}}
\newcommand{\POITS}{\textsc{3-SAT}}
\newcommand{\PLVD}{\textsc{\ensuremath{P_q}-free Vertex Deletion with  \ensuremath{P_q}-Packing}}
\newcommand{\FAST}{\textsc{Feedback Arc Set in Tournaments}}
\newcommand{\FASTGP}{\textsc{Feedback Arc Set in Tournaments with  Cost-$t$ Packing}}
\newcommand{\Time}{\ensuremath{\Gamma}}
\newcommand{\Wext}{\ensuremath{W_\text{ext}}}
\newcommand{\Wint}{\ensuremath{W_\text{int}}}
\title{Parameterizing edge modification problems\\above lower bounds\thanks{An extended abstract of this article appeared in Proceedings of the 11th International Computer Science Symposium in Russia, June 9–13, 2016, St.~Petersburg, Russian Federation~\citep{BFK16}.}}
\begin{document}

\maketitle
\pagestyle{plain}
\begin{abstract}
We study the parameterized complexity of a variant of the \ffed{} problem: Given a graph~$G$ and a natural number~$k$, is it possible to modify at most $k$~edges in~$G$ so that the resulting graph contains no induced subgraph isomorphic to~$F$?  In our variant, the input additionally contains a vertex-disjoint packing~$\packing$ of induced subgraphs of~$G$, which provides a lower bound~$h(\packing)$ on the number of edge modifications required to transform~$G$ into an $F$-free graph.  While earlier works used the number~$k$ as parameter or structural parameters of the input graph~$G$, we consider instead the parameter~$\ell:=k-h(\packing)$, that is, the number of edge modifications above the lower bound~$h(\packing)$.  We develop a framework of generic data reduction rules to show fixed-parameter tractability with respect to~$\ell$ for~\textsc{$K_3$-Free Editing}, \textsc{Feedback Arc Set in Tournaments}, and \textsc{Cluster Editing} when the packing~$\packing$ contains subgraphs with bounded solution size.  For~\textsc{$K_3$-Free Editing}, we also prove NP-hardness in case of edge-disjoint packings of~$K_3$s and~$\ell=0$, while for \textsc{$K_q$-Free Editing} and $q\ge 6$, NP-hardness for~$\ell=0$ even holds for vertex-disjoint packings of~$K_q$s.
  In addition, we provide NP-hardness results for \ffvd{}, were the
  aim is to delete a minimum number of vertices to make the input
  graph $F$-free.
\end{abstract}

\paragraph{Keywords.} NP-hard problem, fixed-parameter algorithm, subgraph packing, kernelization, graph-based clustering, feedback arc set, cluster editing
\section{Introduction}
Graph modification problems are a core topic
of algorithmic research~\cite{LY80,Cai96,Yan81}.
Given a graph~$G$,
the aim is to transform~$G$ by a minimum number of modifications
(like vertex deletions, edge deletions, or edge insertions)
into another graph~$G'$ fulfilling certain properties.
Particularly well-studied are \emph{hereditary} graph properties, which are closed under vertex deletions and are characterized by \emph{minimal forbidden induced subgraphs}: a graph fulfills such a property if and only if it does not contain a graph~$F$ from a property-specific family~$\mathcal{F}$ of graphs as induced subgraph.  All nontrivial vertex deletion problems and many edge modification and deletion problems for establishing hereditary graph properties are NP-complete
\cite{LY80,Alon06,MK86,Yan81,ASS16}.  One approach to cope with the NP-hardness of these problems are \emph{fixed-parameter algorithms} that solve them in \(f(k)\cdot n^{O(1)}\)~time for some exponential function~\(f\) depending only on some desirably small parameter~\(k\).   If the desired graph property has a finite forbidden induced subgraph characterization, then the corresponding vertex deletion, edge deletion, and edge modification problems are \emph{fixed-parameter tractable} parameterized by the number of modifications~$k$, that is, solvable in \(f(k)\cdot n^{O(1)}\)~time~\cite{Cai96}.

\paragraph{Parameterization above lower bounds.} When combined with data reduction and pruning rules, search-tree based fixed-parameter algorithms for the parameter~\(k\) of allowed modifications can yield competitive problem solvers~\cite{HH15,MNS12}.  Nevertheless, the number of modifications is often too large and smaller parameters are desirable.

A natural approach to obtain smaller parameters is ``parameterization above guaranteed values''~\cite{MR99,CPPW13,LNR+14,GP15}. %
The idea is to use a lower bound~$h$ on the solution size and to use~$\ell:=k-h$ as parameter instead of~$k$.
This idea has been applied successfully to \textsc{Vertex Cover}, the
problem of finding at most $k$~vertices such that
their deletion removes all edges (that is, all~$K_2$s) from~$G$. %
Since the size of a smallest vertex
cover is large in many input graphs, 
parameterizations above
the lower bounds ``size of a maximum
matching~$M$ in the input graph'' and ``optimum value~$L$ of the LP~relaxation of the standard ILP-formulation of \textsc{Vertex Cover}''
have been considered.
After a series of
improvements~\cite{RO09,CPPW13,LNR+14,GP15}, the current best running time
is~$3^{\ell}\cdot n^{O(1)}$, where~$\ell:=k-(2\cdot L-|M|)$~\cite{GP15}.

We extend this approach to edge modification problems, where the number~$k$ of modifications tends to be even larger than for vertex deletion problems. For example, in the case of \textsc{Cluster Editing}, which asks to destroy induced paths on three vertices by edge modifications, the number of modifications is often larger than the number of vertices in the input graph~\cite{BBBT09}. Hence, parameterization above lower bounds
seems natural and even more relevant for edge modification problems. Somewhat surprisingly, this approach has not been considered so far.
We thus initiate research on parameterization above lower bounds in this context.
As a starting point, we focus on edge modification problems for graph properties that are characterized by one small forbidden induced subgraph~$F$:

\decprob{\textsc{$F$-free Editing}}%
{A graph~$G=(V,E)$ and a natural number~$k$.}%
{Is there an \emph{\gfedset{F}}~$S\subseteq\binom{V}{2}$ of size at most~$k$ such that~$G\symdiff{} S:=(V,(E\setminus S)\cup(S\setminus E))$ does not contain~\(F\) as induced subgraph?}
In the context of a concrete variant of \ffed{}, we refer to an~\gfedset{F} as \emph{solution} and call a solution \emph{optimal} if it has minimum size.  

\paragraph{Lower bounds from packings of bounded-cost induced subgraphs.}
Following the approach of parameterizing \textsc{Vertex Cover} above the size of a maximum matching, we can parameterize \ffed{} above a lower bound obtained from packings of induced subgraphs containing~\(F\).

\begin{definition}
  A \emph{vertex-disjoint (or edge-disjoint) packing} of induced subgraphs of a graph~$G$ is a set~$\packing=\{H_1,\ldots,H_z\}$ such that each~$H_i$ is an induced subgraph of~$G$ and such that the vertex sets (or edge sets) of the~$H_i$ are mutually disjoint.
\end{definition}
While it is natural to consider packings of $F$-graphs to obtain a lower bound on the solution size, a packing of other graphs that contain~$F$ as induced subgraph might yield better lower bounds and thus a smaller %
parameter above this lower bound. For example, a~$K_4$~contains several triangles and two edge deletions are necessary to make it triangle-free. Thus, if a graph~$G$ has a vertex-disjoint packing of~$h_3$~triangles and~$h_4$~$K_4$s, then at least~$h_3 + 2\cdot h_4$ edge deletions are necessary to make it triangle-free.\footnote{Bounds of this type are exploited, for example, in so-called cutting planes, which are used in speeding up the running time of ILP solvers. } %
Moreover, when allowing arbitrary graphs for the packing, the lower bounds provided by vertex-disjoint packings can be better than the lower bounds provided by edge-disjoint packings of~$F$. A disjoint union of~$h$~$K_4$s, for example, has~$h$ edge-disjoint triangles but also~$h$ vertex-disjoint~$K_4$s. Hence, the lower bound provided by packing vertex-disjoint~$K_4$s is twice as large as the one provided by packing edge-disjoint triangles in this graph.

Motivated by this benefit of vertex-disjoint
packings of arbitrary graphs, we mainly consider lower bounds obtained
from vertex-disjoint packings, which we assume to receive as
input. %
Thus, we arrive at the following problem, where
$\tau(G)$~denotes the minimum size of an~\gfedset{F} for a graph~$G$:
\decprob{\tffedv}%
{A graph~$G=(V,E)$, a vertex-disjoint packing~$\packing$ of induced subgraphs of~$G$ such that~$1\le \tau(H) \le t$ for each $H\in \packing$, and a natural number~$k$.}%
{Is there an \emph{\gfedset{F}}~$S\subseteq\binom{V}{2}$ of size at most~$k$ such that~$G\symdiff{} S:=(V,(E\setminus S)\cup(S\setminus E))$ does not contain~\(F\) as induced subgraph?}
The special case of \tffedv{} where only \(F\)-graphs are allowed in the packing is called \ffedv{}.

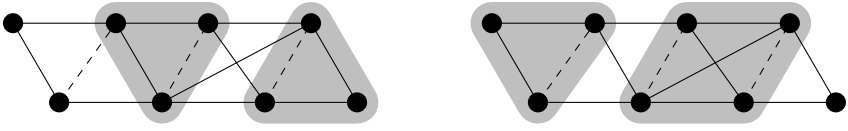
\begin{figure}[t]
  \centering
  \begin{tikzpicture}[x=0.7cm,y=0.7cm]
    \tikzstyle{packing} = [fill,color=lightgray,line cap=round, line
    join=round, line width=16pt]
    \tikzstyle{deleted} = [dashed]

    \tikzstyle{vertex} = [color=black,fill=black,circle]

    \begin{scope}[rotate=30]
      \node[vertex] (x1) at (0:1) {};
      \node[vertex] (x2) at (120:1) {};
      \node[vertex] (x3) at (-120:1) {};
    \end{scope}

    \begin{scope}[shift={(-2.8,-0.5)}]
      \begin{scope}[rotate=-30]
        \node[vertex] (z1) at (0:1) {};
        \node[vertex] (z2) at (120:1) {};
    \end{scope}
  \end{scope}
  
    \begin{scope}[shift={(2.8,-0.5)}]
      \begin{scope}[rotate=-30]
        \node[vertex] (y1) at (0:1) {};
        \node[vertex] (y2) at (120:1) {};
        \node[vertex] (y3) at (-120:1) {};
      \end{scope}
    \end{scope}
    \draw (x1.center)--(x2.center);
    \draw [deleted] (z1.center)--(x2.center);
    \draw (z1.center)--(z2.center);
    \draw (z1.center)--(x3.center);
    \draw (z2.center)--(x2.center);
    \draw (x2.center)--(x3.center) {};
    \draw [deleted] (x3.center)--(x1.center) ;
    \draw (y1.center)--(y2.center);
    \draw (y1.center)--(y3.center);
    \draw [deleted](y3.center)--(y2.center);
    \draw (x3.center)--(y3.center);
    \draw (x3.center)--(y2.center);
    \draw (x1.center)--(y3.center);
    \draw (x1.center)--(y2.center);
    \begin{pgfonlayer}{background}
      \draw[packing] (x1.center)--(x2.center)--(x3.center)--cycle; \draw[packing] (y2.center)--(y3.center)--(y1.center)--cycle;
    \end{pgfonlayer}

    \begin{scope}[shift={(9,0)}]
    \begin{scope}[rotate=30]
      \node[vertex] (x1) at (0:1) {};
      \node[vertex] (x2) at (120:1) {};
      \node[vertex] (x3) at (-120:1) {};
    \end{scope}

    \begin{scope}[shift={(-2.8,-0.5)}]
      \begin{scope}[rotate=-30]
        \node[vertex] (z1) at (0:1) {};
        \node[vertex] (z2) at (120:1) {};
    \end{scope}
  \end{scope}
  
    \begin{scope}[shift={(2.8,-0.5)}]
      \begin{scope}[rotate=-30]
        \node[vertex] (y1) at (0:1) {};
        \node[vertex] (y2) at (120:1) {};
        \node[vertex] (y3) at (-120:1) {};
      \end{scope}
    \end{scope}
    \draw (x1.center)--(x2.center);
    \draw [deleted] (z1.center)--(x2.center);
    \draw (z1.center)--(z2.center);
    \draw (z1.center)--(x3.center);
    \draw (z2.center)--(x2.center);
    \draw (x2.center)--(x3.center) {};
    \draw [deleted] (x3.center)--(x1.center) ;
    \draw (y1.center)--(y2.center);
    \draw (y1.center)--(y3.center);
    \draw [deleted](y3.center)--(y2.center);
    \draw (x3.center)--(y3.center);
    \draw (x3.center)--(y2.center);
    \draw (x1.center)--(y3.center);
    \draw (x1.center)--(y2.center);
    \begin{pgfonlayer}{background}
      \draw[packing] (x2.center)--(z1.center)--(z2.center)--cycle; \draw[packing] (y2.center)--(x1.center)--(x3.center)--(y3.center)--cycle;
    \end{pgfonlayer}
    \end{scope}

  \end{tikzpicture}
  \caption{ An instance of \textsc{Triangle Deletion}. The packing
    graphs have gray background. Left: A vertex-disjoint packing of two triangles
    giving~$\ell=1$. Right: A vertex-disjoint packing of a triangle and a~$K_4$ giving~$\ell=0$.
    The solution consists of the three dashed edges.}
\label{fig:packing-problem}
\end{figure}

\looseness=-1 From the packing~$\packing$, we obtain the lower bound $h(\packing):=\sum_{H\in\packing}\tau(H)$ on the size of an \gfedset{F}, which allows us to use the excess $\ell:=k-h(\packing)$ over this lower bound as parameter, as illustrated in \cref{fig:packing-problem}. Since~$F$ is a fixed graph, we can
compute the bound~$h(\packing)$ in~$f(t)\cdot |G|^{O(1)}$ time using the
generic algorithm~\cite{Cai96} mentioned in the introduction
for each~$H\in \packing$.
In the same time we can also verify whether the
cost-$t$ property is fulfilled.%

Packings of forbidden induced subgraphs have been used in implementations of fixed-parameter algorithms to prune the corresponding search trees tremendously~\cite{HH15}. By showing fixed-parameter algorithms for parameters above these lower bounds, we hope to %
explain the fact that these packings help in obtaining fast algorithms.

\paragraph{Our Results.}We first state the negative results since they
justify the focus on concrete problems and, to a certain extent, also
the focus on parameterizing \emph{edge} modification problems above
lower bounds obtained from
\emph{vertex}-disjoint packings. We show that \gfedv{K_6} is NP-hard
for~$\ell=0$. This proves, in particular, that a general
fixed-parameter tractability result as it is known for the
parameter~$k$~\cite{Cai96} cannot be expected. Moreover, we show that,
if~$F$ is a triangle and~$\packing$ is an \emph{edge-disjoint} packing
of~$h$ triangles in a graph~$G$, then it is NP-hard to decide
whether~$G$ has a triangle deletion set of size~$h$ (that is,
$\ell=0$). Thus, parameterization by~$\ell$ is hopeless for this
packing lower bound. We also consider vertex deletion problems. For
these we show that extending the parameterization ``above maximum
matching'' for \textsc{Vertex Cover} to \textsc{$d$-Hitting Set} in a
natural way leads to intractable problems. This is achieved by showing
that, for all~$q\ge 3$, \PLVD{} is NP-hard even if~$\ell=0$.

  Our positive results are fixed-parameter algorithms and problem kernels (a notion for provably effective polynomial-time data reduction, see \cref{sec:prelim} for a formal definition) for three variants of~\tffedv{}.  Namely, these are the variants in which~$F$~is a triangle (that is, a~$K_3$) or a
path on three vertices (that is, a~$P_3$). The first case is known as
\textsc{Triangle Deletion}, the second one as \textsc{Cluster
  Editing}.  We also consider the case in which the input is a
tournament graph and~$F$ is a directed cycle on three vertices. This
is known as \textsc{Feedback Arc Set in Tournaments}.
Using a general approach described in \cref{sec:approach}, we obtain fixed-parameter algorithms for these variants of \tffedv{} parameterized by~$t$ and~$\ell$.  This implies fixed-parameter tractability for \ffedv{} parameterized by~$\ell$.  Specifically, we obtain the following positive results:
\shorten{Applying this framework plus some problem-specific data reduction and
branching rules yields the following particular results:}
\begin{enumerate}[(i)]
\item For \textsc{Triangle Deletion}, we show an~$O((2t+3)^\ell \cdot (nm + n\cdot 2.076^t))$-time algorithm and an~$O(t\cdot
  \ell)$-vertex problem kernel for cost-$t$ packings.
\item For \textsc{Feedback Arc Set in Tournaments}, we show a
  $2^{\smash{O(\sqrt{(2t+1)\ell})}} \cdot n^{O(1)}$-time algorithm and
  an~$O(t\cdot\ell)$-vertex problem kernel for cost-$t$ packings.
\item For \textsc{Cluster Editing}, we show an~$O(1.62^{(2t+1)\cdot
    \ell} + nm + n\cdot 1.62^t)$-time algorithm and an~$O(t\cdot
  \ell)$-vertex kernel for cost-$t$ packings, and a~$4^\ell\cdot
  n^{O(1)}$-time algorithm for $P_3$-packings.
\end{enumerate}
For the kernelization results, we need to assume that $t\in O(\log n)$
to guarantee polynomial running time of the data reduction. %

\paragraph{Organization of this work.}
In \cref{sec:prelim}, we introduce basic graph-theoretic notation and formally define fixed-parameter algorithms and problem kernelization.  In \cref{sec:approach}, we present the general approach used in our algorithmic and data reduction results.  In \cref{sec:triangle-free}, we present our results regarding \textsc{Triangle Deletion}, in \cref{sec:fast} regarding \textsc{Feedback Arc Set in Tournaments}, and in \cref{sec:cluster-edit} regarding \textsc{Cluster Editing}.  \cref{sec:hardness} shows vertex and edge deletion problems that remain NP-hard for~\(\ell=0\), where \(\ell\)~is the number of modifications that are allowed in addition to a lower bound based on vertex-disjoint packings.  %
We conclude with some open questions in \cref{sec:conclusion}.

\section{Preliminaries}\label{sec:prelim}
In this section, we introduce basic graph-theoretic notation and formally define fixed-parameter algorithms and problem kernelization.
\paragraph{Notation.}
\label{sec:notation}
Unless stated otherwise, we consider undirected, simple, finite graphs~$G=(V,E)$, with a \emph{vertex set}~$V(G):=V$ and an \emph{edge set}~$E(G):=E\subseteq \binom{V}{2}:=\{\{u,v\}\mid u,v\in V \wedge u\neq v\}$.  Let~$n:=|V(G)|$ denote the \emph{order} of the graph and~$m:=|E(G)|$ its number of edges.  %
A~set~$S\subseteq \binom{V}{2}$ is an \emph{edge modification set} for~$G$. For an edge modification set~$S$ for~$G$, let~$G\symdiff{} S:=(V,(E\setminus S)\cup (S\setminus E))$ denote the \emph{graph obtained by applying}~$S$ to~$G$.  If~$S\subseteq E$, then~$S$ is called an \emph{edge deletion set} and we write~$G\setminus S$ instead of~$G\symdiff{} S$.  
The \emph{open neighborhood} of a vertex~$v\in V$ is defined as~$N_G(v):=\{u\in V\mid \{u,v\}\in E\}$. %
Also, for~$V'\subseteq V$, let~$G[V']:=(V',E\cap\binom{V'}{2})$ denote the subgraph of~$G$ \emph{induced by~$V'$}.
A \emph{directed graph (or digraph)}~$G=(V,A)$ consists of a \emph{vertex set}~$V(G)$ and an \emph{arc set}~$A(G):=A\subseteq\{(u,v)\in V^2\mid u\neq v\}$.  A \emph{tournament} on~$n$ vertices is a directed graph~$(V,A)$ with~$|V|=n$ such that, for each pair of distinct vertices~$u$ and~$v$, either~$(u,v)\in A$ or~$(v,u)\in A$.

\paragraph{Fixed-parameter algorithms.} The idea in fixed-parameter algorithms is to accept the exponential running time that seems to be inevitable when exactly solving NP-hard problems, yet to confine it to some small problem-specific parameter.  A problem is \emph{fixed-parameter tractable} with respect to some parameter~\(k\) if there is a \emph{fixed-parameter algorithm} solving any instance of size~\(n\) in $f(k)\cdot n^{O(1)}$~time.  We will also say that a problem is fixed-parameter tractable with respect to some combined parameter~``\(k\)~and~\(\ell\)'' or ``\((k,\ell)\)'' if it is fixed-parameter tractable parameterized by \(k+\ell\).

Fixed-parameter algorithms can efficiently solve instances in which the parameter~$k$ is small, even if the input size~$n$ is large.  All  vertex deletion, edge deletion, and edge modification problems for graph properties characterized by finite forbidden induced subgraphs are fixed-parameter tractable parameterized by the number of modifications~$k$~\cite{Cai96}.

\paragraph{Problem kernelization.} An important technique in
fixed-parameter algorithmics is \emph{(problem) kernelization}
\citep{Kra14}---a formal approach of describing efficient and
correct data reduction.  A \emph{kernelization} is an algorithm
that given an instance $x$ with parameter~$k$, yields an instance $x'$
with parameter~$k'$ in time polynomial in $|x|+k$ such that $(x,k)$~is
a yes-instance if and only if $(x',k')$~is a yes-instance, and if both
$|x'|$ and $k'$ are bounded by some functions $g$ and $g'$ in~$k$,
respectively. The function~$g$ is referred to as the \emph{size} of
the \emph{problem kernel}~$(x',k')$. Kernelizations are commonly
described by giving a set of \emph{data reduction rules} which when
applied to an instance~$x$ of a problem yield an instance~$x'$. We say
that a data reduction rule is \emph{correct} if~$x$ and~$x'$ are equivalent.

All vertex deletion problems for establishing graph properties
characterized by a finite number of forbidden induced subgraphs have a
problem kernel of size polynomial in the parameter~\(k\) of allowed
modifications~\citep{Kra12}.  In contrast, many variants of~\ffed{} do
not admit a problem kernel whose size is polynomial
in~$k$~\cite{KW13,GHPP13,CC15}. 

\section{General Approach}\label{sec:approach}

In this section, we describe the general approach of our
fixed-parameter algorithms. Recall that $\tau(H)$~is the minimum
number of edge modifications required to transform a graph~$H$ into
an~\(F\)-free graph.  We present fixed-parameter algorithms for three
variants of \tffedv{} parameterized by the combination of~$t$
and~$\ell:=k-h(\packing)$,
where~$h(\packing):=\sum_{H\in\packing}\tau(H)$. The idea behind the
algorithms is to arrive at a classic
win-win scenario~\cite{Fell03} where we can either apply data
reduction or show that the packing size~$|\packing|$ is bounded.  This
will allow us to bound~\(k\) in~\(t\cdot \ell\) for yes-instances and,
thus, to apply known fixed-parameter algorithms for the
parameter~\(k\) to obtain fixed-parameter tractability results
for~\((t,\ell)\).

More precisely, we show that, for each induced
subgraph~\(H\) of~\(G\) in a given packing~\(\packing\), we face
essentially two situations. If there is an optimal solution for~$H$
that is a subset of an optimal solution for~\(G\), then we can apply a data reduction rule. Otherwise, we find a
certificate witnessing that $H$ itself needs to be solved
suboptimally or that a vertex pair containing exactly
one vertex from~$H$ needs to be modified. We use the following terminology for these pairs. 
\begin{definition}[External vertex pairs and edges]A vertex pair~$\{u,v\}$ is an \emph{external pair for a packing graph~$H\in \packing$} if exactly one of~$u$ or~$v$
  is in~$V(H)$, an edge is an \emph{external edge for~$H\in \packing$} if exactly one of its endpoints
  is in~$H$. 
\end{definition}
Observe that every pair or edge is an external pair or edge for at most two packing graphs
since the packing graphs are vertex-disjoint. Therefore, the modification of an external vertex
pair can destroy at most two certificates.
This is the main fact used in the proof of the
following bound on~$\ell$.

\begin{lemma}\label[lemma]{lem:gen-bound}
  Let~$(G,\packing,k)$ be an instance of \tffedv{} and let $S$~be a
  size-$k$ solution that contains, for each~$H=(W,F)\in \packing$,
  \begin{enumerate}[(a)]
  \item\label{inside} at least $\tau(H)+1$ vertex pairs from~$\binom{W}{2}$, or
  \item\label{outside} at least one external vertex pair~$\{v,w\}$ for~$H$.
  \end{enumerate}
  Then, $|\packing|\le 2\ell$ and thus, $k\le (2t+1)\ell$.
\end{lemma}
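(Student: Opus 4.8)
The plan is to bound $|\packing|$ by a charging argument that distributes the excess $\ell=k-h(\packing)$ over the packing graphs, and then to convert the bound on $|\packing|$ into a bound on~$k$ using $\tau(H)\le t$. First I would fix the solution~$S$ and isolate the relevant disjoint parts of it. For each $H=(W,F)\in\packing$, write $S_H:=S\cap\binom{W}{2}$ for the \emph{internal} pairs of $S$ in~$H$, and let $S_{\mathrm{ext}}\subseteq S$ be the pairs that are external for at least one packing graph. Since the packing is vertex-disjoint, the sets $\binom{W}{2}$ are pairwise disjoint, so the $S_H$ are pairwise disjoint and each is disjoint from $S_{\mathrm{ext}}$. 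The basic fact I would record is that, because $(G\symdiff S)[W]=H\symdiff S_H$ must be $F$-free, every solution satisfies $|S_H|\ge\tau(H)$; summing over the vertex-disjoint packing graphs re-derives the lower bound $|S|\ge h(\packing)$.

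Next I would split the packing as $\packing=\packing_a\cup\packing_b$, where $\packing_a$ collects the graphs satisfying condition~(a) and $\packing_b:=\packing\setminus\packing_a$, so that every $H\in\packing_b$ satisfies condition~(b). For each $H\in\packing_a$ we have $|S_H|\ge\tau(H)+1$, i.e.\ one internal pair beyond the lower bound. Since the $S_H$ and $S_{\mathrm{ext}}$ are disjoint subsets of~$S$, adding up the internal surplus and the external pairs gives
\[
  \ell=|S|-h(\packing)\ \ge\ |\packing_a|+|S_{\mathrm{ext}}|.
\]

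It remains to bound $|\packing_b|$ in terms of $|S_{\mathrm{ext}}|$. Here I would invoke the observation stated just before the lemma: every pair is external for at most two packing graphs. Each $H\in\packing_b$ contains at least one external pair of~$S$, so in the incidence relation between $\packing_b$ and $S_{\mathrm{ext}}$ every graph has degree at least one while every pair has degree at most two; double counting yields $|\packing_b|\le 2|S_{\mathrm{ext}}|$. Combining the inequalities,
\[
  |\packing|=|\packing_a|+|\packing_b|\ \le\ |\packing_a|+2|S_{\mathrm{ext}}|\ \le\ 2\bigl(|\packing_a|+|S_{\mathrm{ext}}|\bigr)\ \le\ 2\ell,
\]
and finally $k=\ell+h(\packing)\le\ell+t\cdot|\packing|\le\ell+2t\ell=(2t+1)\ell$, using $\tau(H)\le t$ for every $H\in\packing$.

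I expect the only delicate point to be the bookkeeping that prevents the two sources of excess---suboptimally solved packing graphs and external pairs---from being charged twice. This is exactly why it matters that $\{S_H\}_{H\in\packing}$ and $S_{\mathrm{ext}}$ are pairwise disjoint parts of~$S$, and why each graph satisfying both conditions must be assigned to $\packing_a$ only. The factor of two is then forced solely by the possibility that a single external pair serves two packing graphs at once.
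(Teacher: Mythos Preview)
Your argument is correct and follows the same strategy as the paper: split $\packing$ into $\packing_a$ (graphs satisfying~(a)) and $\packing_b:=\packing\setminus\packing_a$, charge one unit of internal surplus to each $H\in\packing_a$, charge external pairs to the graphs in $\packing_b$ using the ``at most two'' observation, and combine. The only cosmetic difference is that you make the disjointness of the charged sets explicit via the notation $S_H$ and $S_{\mathrm{ext}}$, whereas the paper argues this a little more tersely by directly writing $k\ge h_a+h_b+p_a+p_b/2$; both routes yield $|\packing|\le 2\ell$ and then $k\le(2t+1)\ell$.
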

\begin{proof}
    Denote by~$\packing_a\subseteq\packing$ the set of all graphs in~$\packing$ that fulfill
  property~(\ref{inside}) and let $p_a:=|\packing_a|$.
  Let $\packing_b:=\packing\setminus \packing_a$ denote the
  set containing the remaining packing graphs (fulfilling property~(\ref{outside})) and let $p_b:=|\packing_b|$.
  Thus, $|\packing|=p_a + p_b$.
  Furthermore, let $h_a:=\sum_{H\in\packing_a}\tau(H)$~denote the lower bound obtained from the graphs in~$\packing_a$ and let $h_b:=h(\packing)-h_a$~denote the part of the lower bound obtained by the remaining graphs.

  The packing graphs in~$\packing_a$ cause~$h_a+p_a$ edge modifications inside of them. Similarly, the packing graphs in~$\packing_b$ cause at least~$h_b$ edge modifications inside of them, and each packing graph~$H\in \packing_b$ additionally causes modification of at least one external vertex pair for~$H$. Since every vertex pair is an external pair for at most two different packing graphs, at least~$h_b+p_b/2$ edge modifications are caused by the graphs in~$\packing_b$.  This implies that
  \begin{align*}\allowdisplaybreaks
    && k&\ge h_a+h_b+p_a+p_b/2& \\
    \Leftrightarrow && k-h(\packing)&\ge p_a+p_b/2& \\
    \Leftrightarrow && 2\ell&\ge 2p_a+p_b\ge |\packing|.
  \end{align*}
Consequently,  $k = \ell+h(\packing)\le \ell + t\cdot
  |\packing|\le \ell + t\cdot 2\ell=(2t+1)\ell$ .\qed
\end{proof}

\section{Triangle Deletion}
\label{sec:triangle-free}

In this section, we study \textsc{Triangle Deletion}, the problem of destroying all \emph{triangles}~($K_3$s) in a graph by at most $k$~edge deletions.  In \cref{sec:trdpos}, we apply our framework from \cref{sec:approach} to show that \textsc{Triangle Deletion} is fixed-parameter tractable parameterized above the lower bound given by a cost-\(t\) packing.  In \cref{sec:trdneg}, we then show that parameterization above a lower bound given by edge-disjoint packings of triangles does not lead to fixed-parameter algorithms unless P\({}={}\)NP.

\subsection{A fixed-parameter algorithm  for vertex-disjoint cost-\(t\) packings}\label{sec:trdpos}
Before presenting our new fixed-parameter tractability results for \textsc{Triangle Deletion}, let us first summarize the known results concerning the (parameterized)
complexity of \textsc{Triangle Deletion}.
\textsc{Triangle Deletion} is NP-complete~\cite{Yan81}. %
It allows for a trivial reduction to \textsc{3-Hitting Set} since edge deletions do not create new triangles~\cite{GGHN04}. Combining this approach with the currently fastest known algorithms for \textsc{3-Hitting Set}~\cite{wahl07,Bev14} gives an algorithm for \textsc{Triangle Deletion} with running time~$O(2.076^k+nm)$. Finally, \textsc{Triangle Deletion} admits a problem kernel with at most $6k$~vertices~\cite{BKM09}.
  We show that \TAGP{} is fixed-parameter tractable with respect to
the combination of~$t$ and~$\ell:=k-h(\packing)$.
More precisely, we obtain a kernelization and a search tree algorithm.
Both make crucial use of the following generic reduction rule for \TAGP.

\begin{rrule}\label[rrule]{rule:opt-graph}
  If there is an induced subgraph~\(H\in\mathcal H\) and a set~$T\subseteq E(H)$ of~$\tau(H)$ edges such that deleting~$T$ destroys all triangles of~$G$ that contain edges of~$H$, then delete~$T$ from~$G$, \(H\)~from~\(\mathcal H\) and decrease~$k$ by~$\tau(H)$.
\end{rrule}
\begin{lemma}\label{lem:opt-graph-correct}
  \cref{rule:opt-graph} is correct. 
\end{lemma}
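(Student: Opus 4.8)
The plan is to prove that the original instance $(G,\mathcal H,k)$ and the reduced instance $(G\setminus T,\,\mathcal H\setminus\{H\},\,k-\tau(H))$ are equivalent. First I would check that the reduced instance is well-formed. Since $T\subseteq E(H)$ and the packing is vertex-disjoint, deleting $T$ changes no edge with both endpoints in $V(H')$ for any other $H'\in\mathcal H$; hence each remaining packing graph is still an induced subgraph of $G\setminus T$ and its value $\tau(H')$ is unchanged, so $\mathcal H\setminus\{H\}$ remains a valid vertex-disjoint cost-$t$ packing (and $\ell$ is preserved). The easy direction is then immediate: if $S'$ is a triangle deletion set of size at most $k-\tau(H)$ for $G\setminus T$, then $S'\subseteq E(G\setminus T)=E(G)\setminus T$ is disjoint from $T$, and $(G\setminus T)\setminus S'=G\setminus(S'\cup T)$ is triangle-free, so $S'\cup T$ is a triangle deletion set of size at most $k$ for $G$.

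For the converse I would use an exchange argument. Let $S$ be a triangle deletion set of size at most $k$ for $G$ and set $S_H:=S\cap E(H)$. Because deleting $S$ makes $G[V(H)]=H$ triangle-free and the only edges of $S$ lying inside $H$ are those in $S_H$, the set $S_H$ is itself a triangle deletion set for $H$; hence $|S_H|\ge\tau(H)$. I would then replace the inside part of $S$ by the canonical set $T$, defining $S^\ast:=(S\setminus S_H)\cup T$. As $T\subseteq E(H)$ while $S\setminus S_H$ contains no edge of $E(H)$, these parts are disjoint, so $|S^\ast|=|S|-|S_H|+\tau(H)\le|S|\le k$.

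It remains to verify that $S^\ast$ destroys every triangle of $G$, which I would establish by a case distinction over a triangle $\Delta$ of $G$. If $\Delta$ contains an edge of $H$, that is, an edge with both endpoints in $V(H)$, then $\Delta$ is destroyed by $T$ by the defining property of the rule. Otherwise no edge of $\Delta$ lies in $E(H)$, so the edge of $\Delta$ deleted by $S$ is not in $S_H$ and therefore lies in $S\setminus S_H\subseteq S^\ast$; again $\Delta$ is destroyed. Thus $G\setminus S^\ast$ is triangle-free. Finally, $S^\ast\setminus T=S\setminus S_H$ is disjoint from $T$ and hence contained in $E(G\setminus T)$, satisfies $(G\setminus T)\setminus(S\setminus S_H)=G\setminus S^\ast$ triangle-free, and has size $|S|-|S_H|\le k-\tau(H)$; so it is a solution of the required size for the reduced instance.

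The main obstacle is this exchange step: I must ensure that removing $S_H$ and inserting $T$ never leaves a previously covered triangle uncovered. This is exactly where the hypothesis on $T$ is used, as it guarantees coverage of every triangle touching $H$, while the untouched part $S\setminus S_H$ continues to cover all remaining triangles. Crucially, because edge deletion never creates new triangles and the packing is vertex-disjoint, the two responsibilities partition cleanly, and no interaction between $H$ and the rest of $G$ can spoil the replacement.
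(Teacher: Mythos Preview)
Your proposal is correct and follows essentially the same approach as the paper: both directions use the exchange argument with $S_H:=S\cap E(H)$, replace $S_H$ by~$T$, and verify that the resulting set destroys all triangles via the same case split on whether a triangle contains an edge of~$H$. Your version adds an explicit well-formedness check for the reduced packing and spells out the disjointness of $T$ and $S\setminus S_H$, which the paper leaves implicit, but the underlying argument is the same.
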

\begin{proof}
  Let~$(G,\packing,k)$ be the instance
  to which \cref{rule:opt-graph} is applied and let $(G',\packing\setminus\{H\},k-\tau(H))$ with \(G':=G\setminus T\) be the result. We show that~$(G,\packing,k)$ is a yes-instance if and only if $(G',\packing\setminus\{H\},k-\tau(H))$~is.

  First, let~$S$ be a solution of size at most~$k$ for~$(G,\packing,k)$. Let~$S_H:=S\cap E(H)$ denote the set of edges of~$S$ that destroy all triangles in~$H$. By definition, $|S_H|\ge \tau(H)$.  Since~$S_H\subseteq E(H)$, only triangles containing at least one edge of~$H$ are destroyed by deleting~$S_H$. It follows that the set of triangles destroyed by~$S_H$ is a subset of the triangles destroyed by~$T$.  Hence, $(S\setminus S_H)\cup T$ has size at most~$k$ and clearly is a solution for~$(G,\packing,k)$ that contains all edges of~$T$. Thus, \((S\setminus S_H)\) is a solution of size~\(k-\tau(H)\) for \(G\setminus T=G'\) and $(G',\packing\setminus\{H\},k-\tau(H))$ is a yes-instance.

  For the converse direction, let~$S'$ be a solution of size at most~$k-\tau(H)$ for~$(G',\packing\setminus\{H\},k-\tau(H))$. Since~$T\subseteq E(H)$, it holds that every triangle contained in~$G$ that does not contain any edge of~$H$ is also a triangle in~$G'$.  Thus, $S'$~is a set of edges whose deletion in~$G$ destroys all triangles that do not contain any edge of~$H$. Since~$T$ destroys all triangles containing an edge of~$H$, we have that~$T\cup S'$ is a solution for~$G$.  Its size is~\(k\).\qed
\end{proof}

\noindent We now show that, if \cref{rule:opt-graph} is not applicable to~$H$, then we can find a \emph{certificate} for this, which will allow us later to branch efficiently on the destruction of triangles:
\begin{definition}[Certificate]
  A \emph{certificate} for inapplicability of \cref{rule:opt-graph} to an induced subgraph~\(H\in\packing\) is a set~$\mathcal T$ of triangles in~$G$, each containing exactly one distinct edge of~$H$, such that $|\mathcal T|=\tau(H)+1$ or~$|\mathcal T|\le\tau(H)$ and~$\tau(H')>\tau(H)-|\mathcal T|$, where~$H'$ is the subgraph obtained from~$H$ by deleting, for each triangle in~$\mathcal{T}$, its edge shared with~$H$.
\end{definition}
\begin{lemma}\label[lemma]{lem:opt-graph-time}
  Let $\Time(G,k)$~be the time needed to compute a triangle-free
  deletion set of size at most~\(k\) in a graph~\(G\) if it exists.

  In $O(nm+\sum_{H\in \packing} \Time(H,t))$~time, we can apply
  \cref{rule:opt-graph} to all~$H\in \packing$ and output a
  certificate~$\mathcal T$ if \cref{rule:opt-graph} is inapplicable to some~$H\in
  \packing$.
\end{lemma}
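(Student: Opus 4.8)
<br>

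The plan is to prove the two claims of Lemma~\ref{lem:opt-graph-time} essentially separately: a global preprocessing step that enumerates all triangles of~$G$, and then a per-packing-graph local computation that either applies the reduction rule or builds a certificate. For the preprocessing, I would first compute, in $O(nm)$ time, a list of all triangles of~$G$ together with, for each edge of~$G$, the list of triangles containing it. This is the standard triangle-enumeration bound, and it is the term that dominates the running time outside the per-graph work. With this incidence structure in hand, testing whether a given edge of some~$H$ lies in a triangle that uses no other edge of~$H$ becomes a local lookup rather than a fresh scan of~$G$.

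Next, for a fixed $H=(W,F)\in\packing$, I would compute $\tau(H)$ and an optimal triangle-free deletion set $T_0\subseteq E(H)$ in $\Time(H,t)$ time (which is finite because $\tau(H)\le t$). The key observation is that \cref{rule:opt-graph} asks whether \emph{some} set $T$ of $\tau(H)$ edges of~$H$ destroys all triangles of~$G$ meeting~$H$, not merely all triangles inside~$H$. A triangle of~$G$ meets~$H$ in one, two, or three edges of~$H$; those meeting~$H$ in a single edge are exactly the triangles attached to an external structure, and their destruction forces that particular shared edge of~$H$ into~$T$. So I would collect the set $E_{\mathrm{forced}}\subseteq E(H)$ of edges~$e$ of~$H$ such that~$e$ lies in a triangle of~$G$ whose other two edges are external to~$H$ (i.e.\ the triangle shares exactly the edge~$e$ with~$H$); every such~$e$ must belong to any~$T$ that works. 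I would then check whether deleting $E_{\mathrm{forced}}$ leaves~$H$ triangle-free and whether $|E_{\mathrm{forced}}|\le\tau(H)$, extending it to an optimal triangle-free deletion set of~$H$ if it is a proper subset. If a set~$T$ of exactly $\tau(H)$ edges containing $E_{\mathrm{forced}}$ and destroying all triangles of~$H$ exists, the rule applies and we delete~$T$, decrement~$k$, and remove~$H$; all of this is local to~$H$ and its incident triangles, costing $\Time(H,t)$ plus time proportional to the number of triangles meeting~$H$, which sums to the claimed bound.

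If no such~$T$ exists, I must produce a certificate, and this is where the two cases of the \emph{Certificate} definition come in and form the main obstacle. The inapplicability means either $|E_{\mathrm{forced}}|>\tau(H)$ or that the forced edges cannot be completed to a size-$\tau(H)$ solution that also kills the internal triangles. In the first subcase I would greedily select $\tau(H)+1$ triangles, each sharing exactly one \emph{distinct} edge with~$H$ (possible precisely because more than $\tau(H)$ distinct edges of~$H$ are individually forced by single-edge-sharing external triangles), giving a certificate of the first type with $|\mathcal T|=\tau(H)+1$. In the second subcase I would pick one single-edge-sharing external triangle for each forced edge, set $\mathcal T$ to be these $|E_{\mathrm{forced}}|\le\tau(H)$ triangles, and argue that after deleting their shared edges the residual graph $H'$ still needs strictly more than $\tau(H)-|\mathcal T|$ deletions to become triangle-free, which is exactly the failure of completion; here I would invoke optimality of $\tau$ to convert ``cannot complete to size $\tau(H)$'' into the strict inequality $\tau(H')>\tau(H)-|\mathcal T|$. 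The delicate point is guaranteeing the triangles in $\mathcal T$ use \emph{distinct} edges of~$H$ and share \emph{exactly one} edge with~$H$, as the definition demands; this follows from choosing, per forced edge, a witnessing external triangle, since distinct forced edges yield distinct shared edges. Finally I would bundle the per-graph running times: each~$H$ contributes $\Time(H,t)$ for the optimal-solution computations plus linear time in its incident triangle list, and summing over $\packing$ together with the one-time $O(nm)$ enumeration yields the stated $O(nm+\sum_{H\in\packing}\Time(H,t))$ bound.
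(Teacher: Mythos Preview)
Your proposal is correct and follows essentially the same approach as the paper: precompute in $O(nm)$ time the triangles sharing exactly one edge with each packing graph, compute $\tau(H)$, let the ``forced'' edges be those shared edges, and then either extend the forced edges to a size-$\tau(H)$ solution for~$H$ (rule applies) or output a certificate via the same two-case split ($|E_{\mathrm{forced}}|>\tau(H)$ versus the residual graph needing too many deletions). One small point you gloss over: you list ``two edges'' as a possible intersection size of a triangle with~$H$, but since $H$ is an \emph{induced} subgraph this case is vacuous (two edges of a triangle in $E(H)$ force all three vertices into $W$, hence the third edge into $E(H)$); making this explicit is what justifies that containing $E_{\mathrm{forced}}$ plus solving $H$ suffices to destroy \emph{all} triangles meeting~$H$, which the paper states as ``all triangles of $G$ that contain at least one edge of $H$ either contain a labeled edge or they are contained in~$H'$.''
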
 
\noindent
In the statement of the lemma,
we assume that
$\Time$~is monotonically nondecreasing
in the size of~$G$ and in~$k$.
As described above,
currently $O(2.076^k+|V(G)|\cdot |E(G)|)$
is the best known bound for~$\Time(G,k)$.
\begin{proof}[of \cref{lem:opt-graph-time}]
  First, in~$O(nm)$ time, we compute for all~$H\in\packing$ all triangles~$\mathcal T$ that contain exactly one edge~$e\in E(H)$. These edges are labeled in each~$H\in \packing$.  Then, for each $H\in \packing$, in $\Time(H,t)$~time we determine the size~$\tau(H)$ of an optimal triangle-free deletion set for~$H$. Let $t'$~denote the number of labeled edges of~$H$.  

\emph{Case~1:~$t'>\tau(H)$.} In this case, we return as certificate $\tau(H)+1$~triangles of~$\mathcal{T}$, each containing a distinct of $\tau(H)+1$~arbitrary labeled edges.  

\emph{Case~2:~$t'\le \tau(H)$.} Let~$H'$~denote the graph obtained from~$H$ by deleting the labeled edges. All triangles of~$G$ that contain at least one edge of~$H$ either contain a labeled edge or they are contained in~$H'$. Thus, we now determine in~$\Time(H',\tau(H)-t')$ time whether~$H'$ can be made triangle-free by $\tau(H)-t'$~edge deletions. If this is the case, then the rule applies and the set~$T$ consists of the solution for~$H'$ plus the deleted labeled edges.  Otherwise, destroying all triangles that contain exactly one edge from~$H$ leads to a solution which needs more than~$\tau(H)$~edge deletions and thus the rule does not apply. In this case, we return the certificate~$\mathcal T$ for this~$H\in \packing$. 

The overall running time now follows from the monotonicity of~$f$,
from the fact that~$|\packing|\le n$, and from the fact that one pass
over~$\packing$ is sufficient since deleting edges in each~$H$ does
not produce new triangles and does not destroy triangles in
any~$H'\neq H$.  \qed\end{proof}

\noindent %
Observe that \cref{rule:opt-graph} never increases the parameter~$\ell$ since we decrease both~$k$ as well as the lower bound~$h(\packing)$ by~$\tau(H)$.
After application of \cref{rule:opt-graph}, we can upper-bound
the solution size~$k$ in terms of~$t$ and~$\ell$, which allows us to transfer parameterized complexity results for the parameter~\(k\) to the combined parameter~\((t,\ell)\).
\begin{lemma}\label[lemma]{lem:k-small-gen}
  Let~$(G,\packing,k)$ be a yes-instance of~\TAGP{} such that \cref{rule:opt-graph} is inapplicable. Then, $k\le (2t+1)\ell$.
\end{lemma}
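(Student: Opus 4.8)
The plan is to deduce the bound directly from \cref{lem:gen-bound}. Since $(G,\packing,k)$ is a yes-instance, I would fix any triangle-free edge deletion set $S$ with $|S|\le k$ and show that, because \cref{rule:opt-graph} is inapplicable, $S$ satisfies condition~(\ref{inside}) or~(\ref{outside}) of \cref{lem:gen-bound} for \emph{every} packing graph $H=(W,F)\in\packing$. The inequality $k\le(2t+1)\ell$ then follows exactly as in the proof of \cref{lem:gen-bound}; note that the counting there only uses $|S|\le k$, so it applies verbatim even though $S$ need not have size exactly~$k$. Recall also that, since $S\subseteq E$ consists of edge deletions and $H$ is an induced subgraph, the pairs of $S$ inside $W$ are exactly $S\cap E(H)$.

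I would fix $H\in\packing$ and split on whether $S$ deletes an external edge for $H$. If it does, then condition~(\ref{outside}) holds immediately. The interesting case is when $S$ contains \emph{no} external edge for $H$; here I would argue that $S$ is forced to delete at least $\tau(H)+1$ edges inside $H$, giving condition~(\ref{inside}). The crucial tool is the certificate $\mathcal T$ guaranteed by inapplicability of \cref{rule:opt-graph} through \cref{lem:opt-graph-time}: a set of triangles each sharing exactly one distinct edge with $H$. For such a triangle, its shared edge $e$ has both endpoints in $W$, while its third vertex lies outside $W$ (otherwise, as $H$ is induced, the triangle would contribute three edges of $H$, contradicting that it shares exactly one). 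Consequently the two non-shared edges of each certificate triangle are external for $H$, so a solution that avoids all external edges of $H$ can destroy each certificate triangle only by deleting its shared edge inside $H$.

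From here I would split on the two admissible forms of the certificate. If $|\mathcal T|=\tau(H)+1$, the $\tau(H)+1$ pairwise distinct shared edges must all lie in $S$, whence $|S\cap E(H)|\ge\tau(H)+1$. If instead $|\mathcal T|\le\tau(H)$ with $\tau(H')>\tau(H)-|\mathcal T|$, where $H'$ is obtained from $H$ by deleting the shared edges, then $S$ must contain all $|\mathcal T|$ shared edges and must additionally destroy every triangle surviving inside $H'$; the latter triangles are intra-$H$ and can only be hit by further edges of $H$, costing at least $\tau(H')$, so $|S\cap E(H)|\ge|\mathcal T|+\tau(H')>\tau(H)$. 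In both cases $|S\cap\binom{W}{2}|\ge\tau(H)+1$, i.e.\ condition~(\ref{inside}) holds, and the verification of the hypotheses of \cref{lem:gen-bound} is complete.

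The step I expect to be the main obstacle is the bookkeeping in the no-external-edge case, and specifically in Case~2: one has to justify that destroying the intra-$H$ triangles that survive the deletion of the shared edges genuinely costs $\tau(H')$ \emph{additional} edges drawn from $E(H)$, rather than edges that $S$ might already be using elsewhere or edges shared with another packing graph. This is precisely where the vertex-disjointness of $\packing$ together with the assumption that $H$ is induced are needed, so that deletions inside $H$ are cleanly separated from external modifications and from the other packing graphs.
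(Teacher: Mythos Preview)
Your proposal is correct and takes essentially the same approach as the paper: both verify the hypotheses of \cref{lem:gen-bound} by arguing that inapplicability of \cref{rule:opt-graph} forces, for every $H\in\packing$, either condition~(\ref{inside}) or condition~(\ref{outside}) on any solution of size at most~$k$. The paper's proof is terser and does not explicitly invoke the certificate, whereas you spell the case analysis out via \cref{lem:opt-graph-time}; your worry about bookkeeping in Case~2 is unfounded, since the shared edges lie in $E(H)\setminus E(H')$ and the further $\tau(H')$ deletions lie in $E(H')$, so the two contributions are disjoint by construction.
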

\begin{proof}
  Since $(G,\packing,k)$~is reduced with respect to \cref{rule:opt-graph}, for each graph~$H=(W,F)$ in~$\packing$, there is a set of edges between~$W$ and $V\setminus W$ witnessing that every optimal solution for~$H$ does not destroy all triangles containing at least one edge from~$H$.  Consider any optimal solution~$S$. For each graph~$H\in\mathcal H$, there are two possibilities: Either at least~$\tau(H)+1$ edges inside~$H$ are deleted by~$S$, or at least one external edge of~$H$ is deleted by~$S$.  Therefore, $S$~fulfills the condition of \cref{lem:gen-bound} and thus~$k\le (2t+1)\ell$.
\qed\end{proof}

\begin{theorem}\label{thm:td-kern-general}
Let $\Time(G,k)$~be the time used for computing a triangle-free deletion set of size at most~\(k\) in a graph~\(G\) if it exists. Then, \TAGP{}
  \begin{enumerate}[(i)]
  \item\label{td-searchtree} can be solved in~$O((2t+3)^\ell \cdot (nm + \sum_{H\in \packing} \Time(H,t)))$~time, and
  \item\label{td-kern} admits a problem kernel with at most~$(12t+6) \ell$ vertices that can be computed in $O(nm+\sum_{H\in \packing} \Time(H,t))$~time.
  \end{enumerate}
\end{theorem}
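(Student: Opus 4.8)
The plan is to establish both parts as consequences of \cref{lem:k-small-gen}, which already bounds the solution size~$k$ by~$(2t+1)\ell$ once \cref{rule:opt-graph} is exhaustively applied. The strategy is the classic win-win: either \cref{rule:opt-graph} triggers (reducing the instance while leaving~$\ell$ unchanged, by the observation following \cref{lem:opt-graph-time}), or it is inapplicable everywhere and we inherit the bound~$k\le(2t+1)\ell$.

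For part~(\ref{td-searchtree}), I would first exhaustively apply \cref{rule:opt-graph}; by \cref{lem:opt-graph-time} one pass over the packing costs~$O(nm+\sum_{H\in\packing}\Time(H,t))$, and this pass either reduces the instance or produces a certificate~$\mathcal T$. The certificate is the branching object: it is a set of triangles, each sharing exactly one distinct edge with some~$H\in\packing$, such that destroying all of them inside~$H$ is impossible within the budget~$\tau(H)$. We then branch on which of the relevant edges to delete. The branching factor is what I expect to require the most care: a certificate of type~$|\mathcal T|=\tau(H)+1$ gives roughly $\tau(H)+1\le t+1$ ways to destroy the triangles by an edge inside~$H$, plus the external-edge options, and I would need to verify that the number of branches is at most~$2t+3$ so that the search-tree size is~$(2t+3)^\ell$. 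The key accounting point is that each branch must strictly decrease~$\ell$ (because a suboptimal local resolution or an external deletion increases the lower-bound deficit by one, exactly as quantified in \cref{lem:gen-bound}), giving search depth~$\ell$; multiplying the leaf count by the per-node cost from \cref{lem:opt-graph-time} yields the stated time.

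For part~(\ref{td-kern}), I would again first apply \cref{rule:opt-graph} exhaustively in time~$O(nm+\sum_{H\in\packing}\Time(H,t))$. On the reduced instance, \cref{lem:k-small-gen} gives~$k\le(2t+1)\ell$, so by the known~$6k$-vertex kernel for \textsc{Triangle Deletion}~\cite{BKM09} we obtain an equivalent instance on at most~$6k\le 6(2t+1)\ell=(12t+6)\ell$ vertices. The only subtlety is ensuring that invoking the $6k$-kernelization of plain \textsc{Triangle Deletion} is legitimate here: after reduction we know~$k$ is small (if the instance is a yes-instance), so we may either reject immediately when~$k>(2t+1)\ell$ via \cref{lem:k-small-gen}, or otherwise run the vertex-kernelization on~$(G,k)$ while discarding the packing (since the packing only served to certify the bound on~$k$). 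I would state explicitly that the resulting instance is a plain \textsc{Triangle Deletion} instance, which suffices because the packing is no longer needed once~$k$ is bounded.

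The main obstacle is the branching-factor bookkeeping in part~(\ref{td-searchtree}): I must argue carefully that the two certificate types together yield at most~$2t+3$ children, that every child strictly reduces~$\ell$, and that after each branch the remaining instance can be re-reduced so that the recursion depth stays at~$\ell$ rather than growing with~$t$. The external-edge case of the certificate is the delicate part, since deleting an external edge may affect up to two packing graphs (as noted before \cref{lem:gen-bound}), and I would need to confirm that this double-counting is compatible with the~$(2t+3)$-way branching rather than inflating it.
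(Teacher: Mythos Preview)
Your treatment of part~(\ref{td-kern}) matches the paper exactly: apply \cref{rule:opt-graph} exhaustively, reject if $k>(2t+1)\ell$ via \cref{lem:k-small-gen}, then run the $6k$-vertex kernel of~\cite{BKM09} on~$(G,k)$ with the packing discarded.

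For part~(\ref{td-searchtree}) your overall plan (reduce, obtain a certificate, branch so that $\ell$ drops in every child) is correct, but the concrete branching you sketch is not the one that works, and this is the gap. You write ``$\tau(H)+1\le t+1$ ways to destroy the triangles by an edge inside~$H$, plus the external-edge options''; the paper's rule is essentially the reverse. Given a certificate~$\mathcal T$ with $t':=|\mathcal T|\le\tau(H)+1\le t+1$ triangles, each sharing exactly one distinct edge with~$H$, one branches into $2t'+1\le 2t+3$ cases: for each triangle $T\in\mathcal T$ there are \emph{two} branches, one for each of the two edges of~$T$ \emph{not} in~$H$; and there is exactly \emph{one} further branch in which all $t'$ internal edges (those shared between~$H$ and the triangles of~$\mathcal T$) are deleted simultaneously and $H$~is replaced in~$\packing$ by the resulting subgraph~$H'$. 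In the $2t'$ external branches the deleted edge has one endpoint in~$V(H)$ and one outside, so by vertex-disjointness of~$\packing$ it is an edge of \emph{no} packing graph; hence $h(\packing)$ is unchanged and $\ell$ drops by one. Your worry about such an edge ``affecting up to two packing graphs'' therefore does not arise here. In the single internal branch, $k$~drops by~$t'$, while the certificate definition guarantees $\tau(H')\ge\tau(H)-t'+1$ in both certificate types, so $h(\packing)$ drops by at most~$t'-1$ and $\ell$ again drops by at least one. This ``one big internal branch plus two external branches per certificate triangle'' is the bookkeeping you flagged as the main obstacle; a per-internal-edge branching as you tentatively describe would not yield the $2t+3$ bound.
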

\begin{proof}
  We first prove \eqref{td-kern}.  To this end, let $(G=(V,E),\packing,k)$ be the input instance.  First, compute in $O(nm+\sum_{H\in \packing} \Time(H,t))$~time an instance that is reduced with respect to \cref{rule:opt-graph}. Afterwards, by \cref{lem:k-small-gen}, we can reject if~$k > (2t+1)\ell$.  Otherwise, we apply the known kernelization algorithm for \textsc{Triangle Deletion} to the instance~$(G,k)$ (that is, without~$\packing$). This kernelization produces in $O(m\sqrt{m})=O(nm)$~time a problem kernel~$(G',k')$ with at most~$6k\le (12t+6)\ell$ vertices and with~$k'\le k$~\cite{BKM09}. Adding an empty packing gives an equivalent instance~$(G',\emptyset,k')$ with parameter~$\ell'=k'\le (2t+1)\ell$ of \TAGP.

  It remains to prove \eqref{td-searchtree}.  To this end, first apply \cref{rule:opt-graph} exhaustively in $O(nm + n\cdot \Time(H,t))$~time.  Now, consider a reduced instance. If~$\ell<0$, then we can reject the instance. Otherwise, consider the following two cases.
  
  \emph{Case 1: $\packing=\emptyset$.} If~$G$ is triangle-free, then we are done. Otherwise, pick an arbitrary triangle in~$G$ and add it to~$\packing$.

  \emph{Case 2: $\packing$ contains a graph~$H$.} Since \cref{rule:opt-graph} does not apply to~$H$, there is a certificate~$\mathcal{T}$ of~$t'\le \tau(H)+1$ triangles, each containing exactly one distinct edge of~$H$ such that deleting the edges of these triangles contained in $H$ produces a subgraph~$H'$ of~$H$ that cannot be made triangle-free by~$\tau(H)-t'$ edge deletions.  Thus, branch into the following~$(2t'+1)$ cases: First, for each triangle~$T\in \mathcal{T}$, create two cases, in each deleting a different one of the two edges of~$T$ that are not in~$H$. In the remaining case, delete the $t'$~edges of~$H$ and replace~$H$ by~$H'$ in~$\packing$.

  It remains to show the running time by bounding the search tree size. In Case~1, no branching is performed and the parameter is decreased by at least one. In Case~2, the parameter value is decreased by one in each branch: in the first~$2t'$ cases, an edge that is not contained in any packing graph is deleted. Thus, $k$~decreases by one while $h(\packing)$~remains unchanged. In the final case, the value of~$k$ decreases by~$t'$ since this many edge deletions are performed. However, $\tau(H')\ge\tau(H)-t'+1$.  Hence, the lower bound~$h(\packing)$ decreases by at most~$t'-1$ and thus the parameter~$\ell$ decreases by at least one.  Note that applying \cref{rule:opt-graph} never increases the parameter. Hence, the depth of the search tree is at most~$\ell$.  \qed\end{proof} 

\begin{corollary}\label[corollary]{cor:triangle-abv-cost-t}
\TAGP{}
  \begin{enumerate}[(i)]
  \item\label{cor-td-searchtree} can be solved in~$O((2t+3)^\ell \cdot (nm + n\cdot 2.076^t)$~time, and
  \item\label{cor-td-kern} admits a problem kernel with at most~$(12t+6) \ell$ vertices that can be computed in~$O(nm+n\cdot 2.076^t)$ time.
  \end{enumerate}
\end{corollary}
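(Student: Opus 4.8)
The plan is to derive \cref{cor:triangle-abv-cost-t} directly from \cref{thm:td-kern-general} by substituting the concrete running-time bound $\Time(G,k) = O(2.076^k + |V(G)|\cdot|E(G)|)$, which the preceding discussion identifies as the currently best known bound for computing a triangle-free deletion set of size at most~$k$. Both parts of the theorem express their running time in terms of the aggregate quantity $\sum_{H\in\packing}\Time(H,t)$, so the only real work is to show that this sum collapses to $O(nm + n\cdot 2.076^t)$.

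First I would unfold the per-graph cost. For each~$H\in\packing$ we have $\tau(H)\le t$, and the solver is invoked only with budget~$t$, so $\Time(H,t) = O(2.076^t + |V(H)|\cdot|E(H)|)$. Summing over the packing gives two contributions. The exponential part contributes $|\packing|\cdot O(2.076^t)$; since the packing graphs are vertex-disjoint and nonempty, $|\packing|\le n$, so this is $O(n\cdot 2.076^t)$.

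The polynomial part, $\sum_{H\in\packing}|V(H)|\cdot|E(H)|$, is the one step that genuinely uses vertex-disjointness rather than being mere bookkeeping, so it is where I would be most careful. Because the~$H$ are vertex-disjoint induced subgraphs, their vertex sets partition a subset of~$V$, giving $\sum_H |V(H)| \le n$, and their edge sets are disjoint subsets of~$E$, giving $\sum_H |E(H)| \le m$. Bounding each factor~$|V(H)|$ trivially by~$n$ then yields $\sum_H |V(H)|\cdot|E(H)| \le n\sum_H|E(H)| \le nm$. Hence $\sum_{H\in\packing}\Time(H,t) = O(nm + n\cdot 2.076^t)$.

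Finally I would substitute this estimate into both bounds of \cref{thm:td-kern-general}. Each bound already carries an additive~$nm$ term, which absorbs the~$nm$ coming from the sum, so part~(i) becomes $O((2t+3)^\ell\cdot(nm + n\cdot 2.076^t))$ and part~(ii) becomes $O(nm + n\cdot 2.076^t)$, with the vertex bound~$(12t+6)\ell$ unchanged. I do not anticipate any real obstacle: the argument is a routine aggregation, and the only point demanding genuine attention is confirming that the product term $\sum_H|V(H)|\cdot|E(H)|$ stays within~$O(nm)$, which the disjointness of the packing guarantees.
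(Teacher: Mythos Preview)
Your proposal is correct and matches the paper's implicit argument: the corollary is stated without proof in the paper, being the immediate consequence of \cref{thm:td-kern-general} obtained by plugging in $\Time(G,k)=O(2.076^k+|V(G)|\cdot|E(G)|)$ and using vertex-disjointness to bound $\sum_{H\in\packing}\Time(H,t)$ by $O(nm+n\cdot 2.076^t)$, exactly as you do.
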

For the natural special case~$t=1$, that is, for triangle packings, \cref{thm:td-kern-general}\eqref{cor-td-searchtree} immediately yields the following running time.
\begin{corollary}\label[corollary]{cor:triangle-abv}
  \TATP{} is solvable in $O(5^{\ell}nm)$~time.
\end{corollary}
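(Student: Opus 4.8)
The plan is to obtain \cref{cor:triangle-abv} as the instantiation~$t=1$ of the search-tree running time already established for the cost-$t$ version of the problem in \cref{cor:triangle-abv-cost-t}. First I would observe that \TATP{} is exactly the restriction of \TAGP{} in which every packing graph~$H\in\packing$ is a triangle. For such a packing each~$H$ is a~$K_3$, and since one edge deletion destroys a triangle whereas zero deletions cannot, we have~$\tau(H)=1$ for every~$H\in\packing$. Hence the cost constraint~$1\le\tau(H)\le t$ is met with~$t=1$: every triangle packing is in particular a cost-$1$ packing, so the algorithm behind \cref{cor:triangle-abv-cost-t} applies without any change.

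It then remains to substitute~$t=1$ into the running time~$O((2t+3)^\ell\cdot(nm+n\cdot 2.076^t))$. The base of the exponential in~$\ell$ becomes~$2\cdot 1+3=5$, and the per-node cost becomes~$nm+n\cdot 2.076$. The summand~$2.076\,n$ is dominated by~$nm$ whenever~$m\ge 1$; and if~$m=0$ then~$G$ has no edges and hence no triangles, so the instance is trivially a yes-instance (any~$k\ge 0$ suffices) and is decided in constant time. In every nontrivial case the per-node work is therefore~$O(nm)$, which yields the claimed bound~$O(5^\ell nm)$.

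Because the statement is a direct specialization, I do not expect a genuine technical obstacle. The only point needing (routine) care is the simplification of the additive~$n\cdot 2.076^t$ term: one has to confirm that for~$t=1$ it is truly absorbed into~$nm$ rather than surviving as a separate additive~$+n$ term, which is exactly what the trivial~$m=0$ case settles. No new reduction rules, branching, or correctness arguments are required beyond those already proved for the cost-$t$ setting in \cref{thm:td-kern-general} and \cref{lem:k-small-gen}.
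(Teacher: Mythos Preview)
Your proposal is correct and matches the paper's own justification: the corollary is obtained by specializing the cost-$t$ search-tree bound to $t=1$. The paper simply cites \cref{thm:td-kern-general}\eqref{td-searchtree} (rather than \cref{cor:triangle-abv-cost-t}) and does not spell out the absorption of the additive $O(n)$ term, but your route via the corollary and your handling of the $m=0$ edge case are equivalent and sound.
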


\subsection{Hardness for edge-disjoint packing}\label{sec:trdneg}
\noindent We complement the positive results of \cref{thm:td-kern-general} by the following hardness result for the case of edge-disjoint triangle packings:
\begin{theorem}\label[theorem]{thm:girlanden}
  \textsc{Triangle Deletion} is NP-hard even for \(\ell:=k-|\packing|=0\) if \(\packing\)~is an \emph{edge-disjoint} packing of triangles.
\end{theorem}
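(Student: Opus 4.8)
The plan is to reduce from \realPOITS{} (which is NP-complete). First observe what the claim really asks: for an edge-disjoint packing of $h$~triangles, no single edge deletion can destroy two of them (they share no edge, so a deletion lying in one triangle cannot lie in the other), whence $h(\packing)=|\packing|$ is a valid lower bound and $\ell:=k-|\packing|=0$ means $k=|\packing|$. Hence a solution with $\ell=0$ must delete \emph{exactly one} edge from every packing triangle, and these $|\packing|$~deletions must already destroy \emph{all} triangles of~$G$. I would build, from a given positive 1-in-3-SAT formula~$\varphi$, a graph~$G$ together with an edge-disjoint triangle packing~$\packing$ such that a one-edge-per-packing-triangle deletion set exists if and only if~$\varphi$ has a 1-in-3 satisfying assignment.

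The first gadget is a \emph{clause triangle}: for each clause $C=(x\vee y\vee z)$ I create a triangle~$T_C$ whose three edges are labelled $x$, $y$, $z$, and I put~$T_C$ into~$\packing$. Because an $\ell=0$ solution deletes exactly one edge of~$T_C$, ``delete the edge labelled~$x$'' encodes ``$x$ is the unique true literal of~$C$'', so the one-deletion constraint is exactly the 1-in-3 condition. The second gadget is a \emph{garland} (a triangular strip) that carries the truth value of a variable~$v$ to all clause triangles in which~$v$ occurs and enforces consistency. The strip is a chain of triangles $T_1,T_2,\dots$ in which consecutive triangles share an edge; I place the alternate triangles $T_1,T_3,\dots$ into~$\packing$ (they are pairwise edge-disjoint), while the overlapping triangles $T_2,T_4,\dots$ are auxiliary, and I identify the shared edges with the variable-labelled edges of the clause triangles for~$v$. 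The whole packing~$\packing$ then consists of the clause triangles together with the alternate strip triangles and is edge-disjoint by construction, so $h(\packing)=|\packing|$.

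The correctness hinges on the following propagation property of a garland: an auxiliary triangle~$T_{2i}$, which is not in~$\packing$ and therefore receives no deletion of its own, can only be destroyed \emph{for free} by deleting one of the two edges it shares with its packing neighbours~$T_{2i-1}$ and~$T_{2i+1}$. This forces the shared ``value'' edges along the strip to be deleted in lockstep, so every garland carries a single consistent truth value of~$v$ into all of its clause triangles. Conversely, a 1-in-3 satisfying assignment of~$\varphi$ tells us, in each clause triangle, which labelled edge to delete and, in each garland, which shared edges to delete; one checks that these $|\packing|$ deletions destroy every clause triangle, every auxiliary strip triangle, and every remaining triangle of~$G$, so they form a size-$|\packing|$ (that is, $\ell=0$) solution.

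The main obstacle is the design and analysis of the garland so that the reduction is \emph{tight}: I must ensure that every triangle of~$G$ is covered by the one-edge-per-packing-triangle budget precisely under a consistent, 1-in-3 satisfying assignment, and that no ``cheating'' deletion of a non-shared edge can kill an auxiliary triangle while the budget is being spent elsewhere. Making the shared-edge propagation rigid---so that deleting a non-shared edge of some packing triangle always leaves an auxiliary triangle uncovered and thereby costs an extra deletion, ruling out $\ell=0$---is the delicate part; the clause-triangle semantics, the edge-disjointness bookkeeping, and the reverse direction are then routine.
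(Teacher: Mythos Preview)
Your high-level plan differs from the paper's: the paper reduces from ordinary \textsc{3-SAT}, puts the \emph{variable} triangles (not the clause triangles) into~$\packing$, leaves the clause triangle~$Y_j$ \emph{outside}~$\packing$, and connects each literal of~$C_j$ to its variable gadget by a two-triangle chain~$A_{ij},B_{ij}$ with~$A_{ij}\in\packing$ and~$B_{ij}\notin\packing$. The forcing works because destroying the non-packing triangle~$Y_j$ costs one of the $A_{ij}$'s its single budgeted deletion on the clause side, which leaves~$B_{ij}$ uncovered unless the variable triangle spends its deletion on the matching~$x_i^{\text T}$ or~$x_i^{\text F}$ edge. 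Your idea of putting the clause triangles into~$\packing$ and using the 1-in-3 semantics is natural, but the garland you describe does not do the job.

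Concretely, there is an either/or gap. If ``shared edges'' means the edges~$T_i\cap T_{i+1}$ of the strip and you identify those with clause-triangle edges, then each identified edge lies in an odd strip triangle \emph{and} in a clause triangle, both of which you placed in~$\packing$; the packing is no longer edge-disjoint. If instead you glue the clause edge to the \emph{free} edge of an auxiliary triangle~$T_{2i}$ (the only way to keep~$\packing$ edge-disjoint), then consistency fails: take~$x$ ``true'' at occurrence~$1$ (the clause deletion kills~$T_2$) but ``false'' at occurrence~$2$; now~$T_4$ still needs a shared-edge deletion, but the neighbouring packing triangle~$T_3$ is free to spend its one deletion on the edge shared with~$T_4$, and all other packing strip triangles can likewise choose locally. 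One checks that for any mixed pattern of ``true/false'' along the strip, the $d$~single deletions of the $d$~packing strip triangles can always be distributed to cover the remaining auxiliary triangles, so nothing forces a uniform truth value for~$x$. You correctly flag rigidity as ``the delicate part'', but the construction you actually wrote does not achieve it; the paper's asymmetric two-triangle chain (packing adjacent to the clause triangle, non-packing adjacent to the variable triangle) is precisely the missing device that converts a clause-side deletion into a forced variable-side deletion.
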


\noindent \cref{thm:girlanden} shows that parameterizing \textsc{Triangle Deletion} over a lower bound given by edge-disjoint packings cannot lead to fixed-parameter algorithms unless P\({}={}\)NP.  We prove \cref{thm:girlanden} using a reduction from~\POITS{}.

\decprob{\POITS}
{\label[problem]{prob:poits}A Boolean formula $\phi=C_1\wedge\ldots\wedge C_m$ in conjunctive normal form over variables~$x_1,\ldots,x_n$ with at most three variables per clause.}
{Does $\phi$ have a satisfying assignment?}
\begin{construction}\label[construction]{cons:girlanden}
  Given a Boolean formula~$\phi$, we create a graph~$G$ and an %
  edge-disjoint packing~$\packing$ of triangles such that $G$~can be made triangle-free by exactly $|\packing|$~edge deletions if and only if there is a satisfying assignment for~$\phi$.
  We assume that each clause of~$\phi$ contains exactly three pairwise distinct variables.  The construction is illustrated in \cref{fig:k3del-hard}.
\begin{figure}[t]
  \centering
  \includegraphics{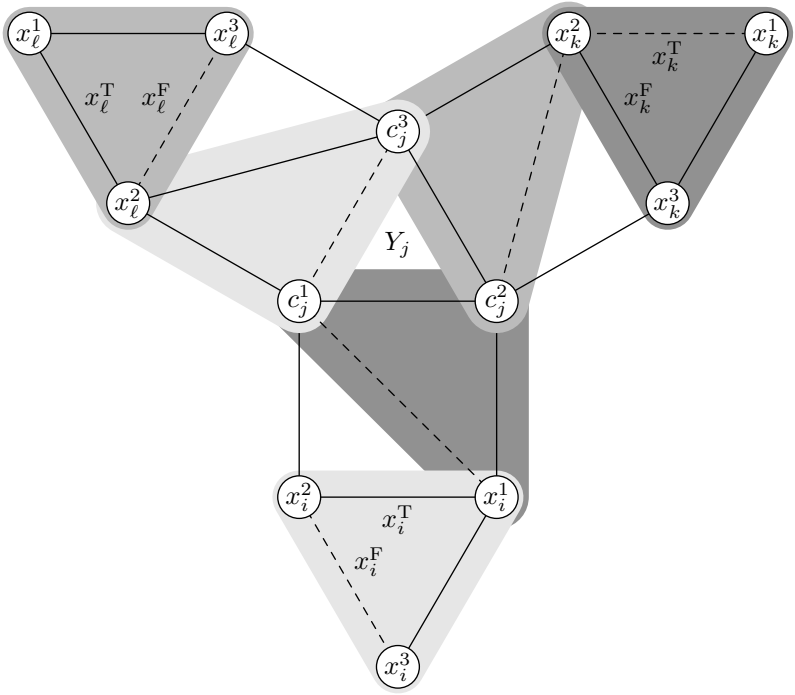}
  \caption{Construction for a clause \(C_j=(x_i\vee \neg x_k\vee \neg x_\ell)\).  The triangles on a gray background are contained in the mutually edge-disjoint triangle packing~$\packing$.  Deleting the dashed edges corresponds to setting~$x_i$ and~\(x_\ell\) to false and \(x_k\)~to true, thus satisfying~$C_j$.  Note that it is impossible to destroy triangle~$Y_j$ by $|\packing|$~edge deletions if we delete~$x_i^\text{F}$, $x_k^\text{T}$, and $x_\ell^\text{T}$, which corresponds to the fact that clause~\(C_j\) cannot be satisfied by this variable assignment.}
\label{fig:k3del-hard}
\end{figure}
  
For each variable~$x_i$ of~$\phi$, create a triangle~$X_i$ on the vertex set~$\{x_i^1,x_i^2,x_i^3\}$ with two distinguished edges~$x_i^\text{T}:=\{x_i^1,x_i^2\}$ and~$x_i^\text{F}:=\{x_i^2,x_i^3\}$ and add~$X_i$ to~$\packing$.  For each clause~$C_j=(l_1,l_2,l_3)$ of~$\phi$, create a triangle~$Y_j$ on the vertex set~$\{c_j^1,c_j^2,c_j^3\}$ with three edges~$c_j^{l_1}$, $c_j^{l_2}$, and~$c_j^{l_3}$.  Connect the clause gadget~$Y_j$ to the variable gadgets as follows: If $l_t=x_i$, then connect the edge $c_j^{l_t}=:\{u,v\}$ to the edge~$x_i^\text{T}=\{x_i^1,x_i^2\}$ via two adjacent triangles~$A_{ij}:=\{u,v,x_i^1\}$ and $B_{ij}:=\{v,x_i^1,x_i^2\}$ sharing the edge~$\{v,x_i^1\}$.  The triangle~$A_{ij}$ is added to~$\packing$.  If $l_t=\neg x_i$, then connect the edge $c_j^{l_t}=:\{u,v\}$ to the edge~$x_i^\text{F}=\{x_i^2,x_i^3\}$ via two adjacent triangles~$A_{ij}:=\{u,v,x_i^3\}$ and $B_{ij}:=\{v,x_i^2,x_i^3\}$ sharing the edge~$\{v,x_i^3\}$.  The triangle~$A_{ij}$ is added to~$\packing$.
\end{construction}

\begin{proof}[of \cref{thm:girlanden}]
  First, observe that \cref{cons:girlanden} introduces no edges between distinct clause gadgets or distinct variable gadgets.  Thus, under the assumption that each clause contains each variable at most once, the only triangles in the constructed graph are the~$X_i$, the~$Y_j$, the~$A_{ij}$ and~$B_{ij}$ for all variables~$x_i$ and the incident clauses~$C_j$.

  Now, assume that $\phi$~allows for a satisfying assignment.  We construct a set of edges~$S$ of size~$|\packing|$ such that $G':=(V,E\setminus S)$ is triangle-free.  For each variable~$x_i$ that is true, add $x_i^\text{T}$ to~$S$.  For each variable~$x_i$ that is false, add $x_i^\text{F}$ to~$S$.  By this choice, the triangle~$X_i$ is destroyed in~$G'$ for each variable~$x_i$.  Additionally,
  for each clause~$C_j$ and its \emph{true} literals~$l\in\{x_i,\neg x_i\}$, the triangle~$B_{ij}$ is destroyed. To destroy~$A_{ij}$, we add to~$S$ the edge of~$A_{ij}$ shared with~$Y_j$, which also destroys the triangle~$Y_j$.
  For each clause~$C_j$ containing a \emph{false} literal~$l\in\{x_i,\neg x_i\}$, we destroy~$B_{ij}$ and simultaneously~$A_{ij}$ by adding to~$S$ the edge of~$A_{ij}$ shared with~$B_{ij}$.

  Conversely, assume that there is a set~$S$ of size~$|\packing|$ such that $G'=(V,E\setminus S)$ is triangle-free.  We construct a satisfying assignment for~$\phi$.
  First, observe that, since the triangles in $\packing$ are pairwise edge-disjoint, $S$~contains exactly one edge of each triangle in~$\packing$.  Thus, of each triangle~$X_i$, at most one of the two edges~$x_i^\text{F}$ and~$x_i^\text{T}$ is contained in~$S$.
  The set~$S$ contains at least one edge~$e$ of each~$Y_j$.  This edge is shared with a triangle~$A_{ij}$.  Since $A_{ij}\in\packing$ and, with~$e$, $S$~already contains one edge of~$A_{ij}$, $S$~does not contain the edge shared between~$A_{ij}$ and~$B_{ij}$.  Since~$B_{ij}\notin\packing$, $S$~has to contain an edge of~$B_{ij}$ shared with another triangle in~$\packing$.  If the clause~$C_j$ contains~$x_i$, then the only such edge is~$x_i^\text{T}$ and we set~$x_i$ to true.  If the clause~$C_j$ contains~$\neg x_i$, then the only such edge is~$x_i^\text{F}$ and we set~$x_i$ to false.  In both cases, clause~$C_j$ is satisfied.  Since at most one of~$x_i^\text{T}$ and~$x_i^\text{F}$ is in~$S$, the value of each variable~$x_i$ is well-defined.
\qed\end{proof}

\section{Feedback Arc Set in Tournaments}
\label{sec:fast}
 
In this section, we present a fixed-parameter algorithm and a problem kernel for \textsc{Feedback Arc Set in Tournaments} parameterized above lower bounds of cost-\(t\) packings. 

In \textsc{Feedback Arc Set in Tournaments}, we are given a directed tournament graph~$G$
as input and want to delete a minimum number of arcs to make the graph acyclic, that is, to destroy all directed cycles
in~$G$. Due to a well-known observation, one can also view \textsc{Feedback Arc Set in Tournaments} as an arc \emph{reversal} problem: After deleting a minimum set of arcs to make the graph acyclic, adding the
arc~$(u,v)$ for every deleted arc~$(v,u)$ does not create any cycle. Since, in
tournaments, every pair of vertices is connected by exactly one arc, it follows that that
destroying cycles by edge deletions is equivalent to destroying them by arc
reversals. Altogether, we arrive at the following problem definition. 
\decprob{\FAST{} (\textsc{FAST})}
{An $n$-vertex tournament~$G=(V,A)$ and a natural number~$k$.}
{Does~$G$ have a \emph{feedback arc set}~$S\subseteq A$, that is, a set~$S$ such that reversing all arcs in~$S$ yields an acyclic tournament, of size at most~$k$?}
\textsc{FAST} is NP-complete~\cite{Alon06} but fixed-parameter
tractable with respect to~$k$~\cite{RS06, DGHNT06, ALS09,Fei09,KS10,FP13}. The running
time of the current best fixed-parameter algorithm is~$2^{c\cdot \sqrt{k}}
+ n^{O(1)}$ where~$c\le 5.24$~\cite{FP13}.
Moreover, a problem kernel with $(2+\epsilon)k$ vertices
for each \emph{constant}~$\epsilon>0$ is known~\cite{BFGPPST11}
as well as a simpler~$4k$-vertex kernel~\cite{PPT16}.
It is well-known that a tournament is acyclic if
and only if it does not contain a \emph{directed triangle} (a cycle on 3
vertices). Hence, the problem is to find a set of arcs whose reversal
leaves no directed triangle in the tournament.

We show fixed-parameter tractability of FAST\textsc{ with Cost-$t$ Packing} parameterized by the combination of~$t$ and~$\ell:=k - h(\packing)$. %
Recall that $h(\packing):=\sum_{H\in\packing}\tau(H)\ge |\packing|$, where~$\tau(G)$ is the size of a minimum feedback arc set for a directed graph~$G$.
The approach is the same as for \textsc{Triangle Deletion} in \cref{sec:triangle-free}, that is, we upper-bound the solution size~$k$ in~$t$ and~$\ell$ and apply the fixed-parameter algorithm for~$k$ \cite{KS10}. Observe in this context that \cref{lem:gen-bound} is also correct if the input graphs are directed and if a solution contains arc reversals, since we observed arc reversals and deletions to be equivalent in the context of \textsc{FAST}.\footnote{For directed input graphs, we use the term \emph{external arc} instead of \emph{external edge}.}
We use the following reduction rule for \textsc{FAST} analogous to \cref{rule:opt-graph} for \textsc{Triangle Deletion}.

\begin{rrule}\label[rrule]{fast-general-rule}
  If there is a subtournament~$H\in\packing$ 
  and a feedback arc set~$T\subseteq A(H)$ of size~$\tau(H)$ such that
  reversing the arcs in~$T$ leaves no directed triangles in~$G$ containing arcs
  of~$H$, then reverse the arcs in~$T$, remove \(H\) from~\(\packing\), and decrease~$k$ by~$\tau(H)$.
\end{rrule}

\noindent Although \cref{fast-general-rule} is strikingly similar to \cref{rule:opt-graph}, its correctness proof is significantly more involved.

\begin{lemma}\label[lemma]{fast-general-rule-time}
  \cref{fast-general-rule} is correct and, given the tournaments~$G$ and~$H$ it
  can be applied in~$O\left(\binom{q}{2}(n-q)+\Time(H,t)\right)$ time, where $q:=|V(H)|$ and~$\Time(H,t)$ denotes the running time needed to compute a feedback arc set of size at most~$t$ in~$H$ if it exists.
\end{lemma}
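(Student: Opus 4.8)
The plan is to prove the two implications of the equivalence and the running-time bound separately, working throughout with the linear-order view of \textsc{FAST}: a feedback arc set corresponds to a linear ordering of the vertices, its size being the number of \emph{backward} arcs (arcs pointing from a later to an earlier vertex), and $\tau(\cdot)$ is the minimum such count. Write $G'$ for $G$ with the arcs of $T$ reversed, $q:=|V(H)|$, and let $\sigma_H$ be the optimal ordering of $V(H)$ witnessed by $T$, so that $G'[V(H)]$ is acyclic with topological order $\sigma_H$; for an ordering $\rho$ of $V$ let $b_X(\rho)$ denote the number of backward arcs of a subtournament $X$ under $\rho$. The backward direction ($(G',k-\tau(H))$ yes $\Rightarrow$ $(G,k)$ yes) is the easy one and uses nothing about triangles: given $\sigma'$ with $b_{G'}(\sigma')\le k-\tau(H)$, I reuse $\sigma'$ on $G$. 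Since $G$ arises from $G'$ by reversing the arcs of $T$, and $T\subseteq A(H)$ touches only arcs inside $H$, reversing these at most $\tau(H)$ arcs turns at most $\tau(H)$ forward arcs into backward ones, whence $b_G(\sigma')\le b_{G'}(\sigma')+\tau(H)\le k$.

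For the forward direction ($(G,k)$ yes $\Rightarrow$ $(G',k-\tau(H))$ yes) I first extract a structural consequence of the rule's precondition. Because reversing $T$ destroys every directed triangle of $G$ using an arc of $H$, in $G'$ every triangle with exactly two vertices in $V(H)$ is acyclic; combined with acyclicity of $G'[V(H)]$, this forces, for each external vertex $w\notin V(H)$, the set $P_w:=\{u\in V(H):(u,w)\in A(G')\}$ to be a prefix of $\sigma_H$ (if $u<_{\sigma_H}v$ and $v\to w$, the acyclic triangle $\{u,v,w\}$ with forward arc $(u,v)$ must have $u\to w$). Thus each $w$ has an insertion rank $r(w):=|P_w|$ in $\sigma_H$. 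Given an ordering $\sigma$ realizing a feedback arc set of size at most $k$ for $G$, I build $\sigma'$ by fixing every external vertex in place and permuting the $H$-vertices, within exactly the slots they already occupy, into $\sigma_H$-order. Then external--external arcs are unchanged (identical in $G$ and $G'$, same relative order); inside $H$ the backward count drops from $b_H(\sigma)\ge\tau(H)$ to $0$, since $\sigma'|_{V(H)}=\sigma_H$ is the topological order of the acyclic $G'[V(H)]$; and for crossing arcs (identical in $G$ and $G'$ as $T\subseteq A(H)$) the count does not increase. For the last point, fix $w$ and let $j(w)$ be the number of $H$-vertices preceding $w$, the same under $\sigma$ and $\sigma'$ since only $H$-vertices were permuted among their own slots: a short case distinction using the prefix structure shows the backward crossing arcs at $w$ equal $|r(w)-j(w)|$ under $\sigma'$ and are at least $|r(w)-j(w)|$ under $\sigma$. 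Summing the three arc classes gives $b_{G'}(\sigma')\le b_G(\sigma)-b_H(\sigma)\le k-\tau(H)$, as required.

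For the running time, I compute an optimal feedback arc set $T$ of $H$ (equivalently $\tau(H)$ and $\sigma_H$) in $\Time(H,t)$ time, reverse it, and test the precondition. Only the $\binom{q}{2}(n-q)$ triangles with exactly two vertices in $V(H)$ need checking, each in $O(1)$ time, because the $\binom{q}{3}$ triangles inside $H$ are automatically acyclic after reversing a feedback arc set of $H$; reversing $T$ and the bookkeeping (removing $H$ from $\packing$, decreasing $k$) cost $O(\binom{q}{2})$, which is absorbed. This yields the claimed $O(\binom{q}{2}(n-q)+\Time(H,t))$ bound, and it suffices to exhibit a single witnessing $T$, since the correctness argument only uses that $T$ is an optimal feedback arc set of $H$ whose reversal destroys all crossing triangles.

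The main obstacle is the forward direction. Unlike the edge-deletion setting of \cref{rule:opt-graph}, arc reversals can \emph{create} new triangles, so the clean ``swap $S\cap A(H)$ for $T$'' subset argument breaks down. The real work is in isolating the prefix/suffix structure that the precondition forces on the external vertices, and in verifying that re-sorting $H$ into its optimal internal order never increases the number of backward crossing arcs. This monotonicity is precisely what the triangle condition buys, and it is what makes the proof ``significantly more involved'' than its undirected analogue.
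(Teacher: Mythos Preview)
Your correctness argument is sound and tracks the paper's closely. The prefix structure you extract for external vertices is exactly the paper's observation that $N^+(w_j)\subseteq N^+(w_i)$ and $N^-(w_i)\subseteq N^-(w_j)$ whenever $i<j$ in~$\sigma_H$, just rephrased. Your direct count $|r(w)-j(w)|$ for the crossing backward arcs under~$\sigma'$ (versus the symmetric-difference lower bound under~$\sigma$) is arguably cleaner than the paper's pairwise-swap argument, but the content is the same.

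The running-time argument, however, has a genuine gap. \cref{fast-general-rule} is stated existentially: it applies whenever \emph{some} optimal feedback arc set~$T$ of~$H$ destroys all crossing triangles upon reversal. To ``apply'' the rule therefore means to decide whether such a~$T$ exists and, if so, produce one. Your procedure computes a single arbitrary optimal~$T$ and tests the precondition; if the test fails you (implicitly) declare the rule inapplicable. But a different optimal~$T$ may well satisfy the precondition. Concretely, take $H$ a directed $3$-cycle $a\to b\to c\to a$ with an external vertex~$w$ satisfying $b\to w$ and $w\to a$: the only arc of~$H$ in a crossing triangle is $(a,b)$, so the rule applies with $T=\{(a,b)\}$, yet your procedure might return $T=\{(b,c)\}$ and wrongly report failure. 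This matters downstream, since \cref{fast-k-small-gen} assumes the rule has been applied \emph{exhaustively}.

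The paper fixes this by observing that every arc of~$H$ lying in a crossing triangle is \emph{forced} into~$T$ (it is the unique $H$-arc of that triangle); one first reverses all forced arcs~$T^*$ in $O\bigl(\binom{q}{2}(n-q)\bigr)$ time, rejects if this creates a new crossing triangle or if $|T^*|>\tau(H)$, and then calls the FAS oracle once on the residual tournament~$H'$ with budget $\tau(H)-|T^*|$. That two-phase procedure is what makes the $\Time(H,t)$ term sufficient; simply testing one optimal solution of~$H$ does not.
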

\begin{proof}
  We first show correctness.
  Let~$I':=(G',\packing\setminus\{H\},k-\tau(H))$ be the instance
  created by \cref{fast-general-rule}
  from \(I:=(G,\packing,k)\)
  by reversing a subset~\(T\) of arcs
  of a subtournament~$H\in\packing$ of~\(G\).
  If \(I'\)~is a yes-instance,
  then so is~\(I\)
  since \(G'\)~is the graph~\(G\)
  with the \(\tau(H)\)~arcs in~\(T\) reversed
  and, thus,
  adding these arcs to a feedback arc set
  of size~\(k-\tau(H)\) for~\(G'\)
  gives a feedback arc set of size~\(k\) for~\(G\).
  It remains to prove that
  if $I$~is a yes-instance,
  then so is~\(I'\).
  To this end, we show that
  there is a minimum-size feedback arc set~$S$
  for~$G$ with~$T\subseteq S$.
  
  Let~$S$ be a minimum-size feedback arc set for~$G=(V,E)$.
  This implies the existence
  of a linear ordering~$\sigma_S = v_1,\ldots,v_n$
  of the vertices~$V$
  such that there are~$|S|$ \emph{backward arcs},
  that is,
  arcs~$(v_i,v_j)\in A$ such that~$i > j$.
  Now, let~$\sigma_T=w_1,\ldots,w_{|W|}$ be the ordering
  of the vertices of~\(H=(W,F)\)
  corresponding to the local solution~$T$ for~$H$
  with $\tau(H)$~backward arcs.
  Let~$N^+(w):=\{(w,v)\in A\mid v\in V\setminus W\}$ denote the \emph{out-neighbors} in~$V\setminus W$ of a vertex~$w\in W$. Analogously,
  $N^-(w):=\{(v,w)\in A\mid v\in V\setminus W\}$~denotes 
  the set of \emph{in-neighbors}.
  By the assumption of the rule, for all~$i < j$,
  \begin{equation}
    N^+(w_j) \subseteq N^+(w_i) \text{ and } N^-(w_i) \subseteq N^-(w_j)
  \end{equation}
  holds since otherwise,
  after reversing the arcs in~$T$,
  there exists a directed cycle containing an arc of~$H$
  (because the arc~$(w_i,w_j)$ is present).
  
  If the vertices of~$W$ appear in~$\sigma_S$ in the same relative order as in~$\sigma_T$, then we have~$T\subseteq S$ and we are done.
  Otherwise, we show that
  we can swap the positions of vertices of~$W$ in~\(\sigma_S\)
  so that their relative order is the same as in~$\sigma_T$
  without increasing the number of backward arcs.

  First, note that the number of backward arcs
  between vertices in~$V\setminus W$
  does not change
  when only swapping positions of vertices in~\(W\).
  Also, by assumption,
  the number of backward arcs between vertices in~$W$
  in any ordering is at least~$\tau(H)$,
  whereas it is exactly \(\tau(H)\)
  when ordering them according to~\(\sigma_T\).
  Thus,
  it remains to show that the number of backward arcs
  between vertices in~$W$ and~$V\setminus W$
  is not increased.
  To this end,
  consider a series of \emph{swaps}
  of pairs of vertices~$w_j$ and~$w_i$ such that $i<j$,
  where $w_j$ appears before~$w_i$ in~$\sigma_S$,
  reordering the vertices in~$W$ according to~$\sigma_T$.
  Let~$Y$ denote the set of all vertices that lie between~$w_j$ and~$w_i$ in~$\sigma_S$.
  Note that swapping~$w_j$ and~$w_i$ removes the backward arcs from~$w_i$ to the vertices in~$N^+(w_i)\cap Y$ and the backward arcs from vertices in~$N^-(w_j)\cap Y$ to~$w_j$, whereas it introduces new backward arcs from~$w_j$ to~$N^+(w_j)\cap Y$ and from~$N^-(w_i)\cap Y$ to~$w_i$.
  However, by the inclusions in~(1), it follows that the overall number of backward arcs does not increase in each swap.
  Hence, the overall number of backward arcs is not increased by repositioning the vertices in~$W$
  according to~$\sigma_T$. It follows that there is an optimal solution containing~$T$.
    
  It remains to show the running time.
  First,
  in $\Time(H,\tau(H))$~time,
  we compute the size~$\tau(H)$
  of an optimal feedback arc set for~$H=(W,F)$.
  Now,
  for each arc~$(u,v)\in F$,
  we check whether there is a vertex~$w\in V\setminus W$
  that forms a directed triangle
  with~$u$ and~$v$.
  If~such a vertex exists,
  then we reverse the arc~$(u,v)$.
  If this arc reversal introduces a new directed triangle
  with another vertex from~$V\setminus W$,
  then the rule does not apply.
  Overall, this procedure requires~$O\big(|F|\cdot(n-|W|)\big)$ time.
  Let~$T^*$~denote the set of arcs that are reversed in this process. Clearly, if~$|T^*| > \tau(H)$, then the rule does not apply. Otherwise, let $H'$~denote the graph obtained from~$H$ by reversing the arcs in~$T^*$ and observe that each remaining directed triangle of~$G$ that contains at least one arc of~$H'$ is contained in~$H'$.
  Thus, we now compute whether~$H'$ has a feedback arc set~$T'$ of size~$\tau(H)-|T^*|$ in~$\Time(H',\tau(H)-|T^*|)$ time. If this is the case, then the rule applies
  and we set~$T := T' \cup T^*$ (note that~$T'\cap T^*=\emptyset$, since otherwise~$|T|< \tau(H)$, which is not possible by definition of~$\tau(H)$).
  Otherwise, removing all directed triangles that contain at least one arc from~$H$ requires more than~$\tau(H)$ arc reversals and thus the rule does not apply.
\qed\end{proof}

\noindent Exhaustive application of \cref{fast-general-rule} allows us to show that~$k\le (2t+1)\ell$ holds for any yes-instance (analogous to \cref{lem:k-small-gen}).

\begin{lemma}\label[lemma]{fast-k-small-gen}
  Let~$(G,\packing,k)$ be a yes-instance of \FASTGP{}
  such that \cref{fast-general-rule} cannot be applied
  to any tournament in~$\packing$.
  Then, $k\le (2t+1)\ell$.
\end{lemma}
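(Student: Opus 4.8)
The plan is to mirror the proof of \cref{lem:k-small-gen} for \textsc{Triangle Deletion}, now applying \cref{lem:gen-bound} in its directed incarnation with arc reversals in place of edge modifications. Concretely, I would fix an arbitrary minimum-size feedback arc set~$S$ for~$G$ and verify, for each subtournament~$H=(W,F)\in\packing$, that~$S$ satisfies one of the two conditions of \cref{lem:gen-bound}: either it reverses at least~$\tau(H)+1$ arcs inside~$H$, or it reverses at least one external arc for~$H$. Granting this, \cref{lem:gen-bound} immediately yields~$|\packing|\le 2\ell$ and hence~$k\le(2t+1)\ell$.

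To establish the dichotomy, set~$S_H:=S\cap A(H)$. Reversing~$S$ makes all of~$G$ acyclic, so in particular~$H$ becomes acyclic; thus~$S_H$ is a feedback arc set for~$H$ and~$|S_H|\ge\tau(H)$. If~$|S_H|\ge\tau(H)+1$, then condition~(\ref{inside}) of \cref{lem:gen-bound} holds and we are done. The interesting case is~$|S_H|=\tau(H)$, where~$S_H$ is itself an \emph{optimal} local solution for~$H$. Here I would invoke the inapplicability of \cref{fast-general-rule}: since reversing the size-$\tau(H)$ arc set~$S_H$ does not trigger the rule, it must leave a directed triangle~$\Delta$ in~$G$ that contains an arc of~$H$. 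Because~$S_H$ already destroys every directed triangle lying entirely inside~$H$, this surviving~$\Delta$ must have exactly two vertices in~$W$ and one vertex outside~$W$, so it consists of one arc inside~$H$ and two external arcs for~$H$.

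The key step is then to argue that~$S$ must reverse one of those two external arcs. Since~$\Delta$ survives the reversal of~$S_H$, none of its three arcs lies in~$S_H$; in particular its internal arc, which belongs to~$A(H)$, is not in~$S_H=S\cap A(H)$ and hence not in~$S$ at all. Yet~$S$ renders~$G$ acyclic and so must destroy the directed triangle~$\Delta$ by reversing at least one of its arcs. As the internal arc is unavailable, $S$~reverses one of the two external arcs of~$\Delta$, establishing condition~(\ref{outside}). I expect this last implication---pinning down that the leftover triangle can only be destroyed through an external arc, given that~$S$ reverses strictly more than~$S_H$---to be the main subtlety, since one must use the equality~$S\cap A(H)=S_H$ to conclude that the internal arc of~$\Delta$ cannot be reversed by~$S$.
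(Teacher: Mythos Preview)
Your proposal is correct and matches the paper's approach: verify the dichotomy of \cref{lem:gen-bound} for an optimal solution~$S$ and each~$H\in\packing$, then invoke that lemma; your write-up is in fact more detailed than the paper's, which simply asserts the dichotomy. One minor point on the step you flag as subtle: the internal arc of~$\Delta$, viewed in the graph after reversing~$S_H$, need not lie in~$A(H)$---it could itself be the reversal of an arc in~$S_H$; however, your conclusion that it is not in~$S$ still holds, since in that case it is not even an arc of~$G$, and either way $S\setminus S_H$ must reverse an external arc of~$\Delta$.
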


\begin{proof}
  Since \cref{fast-general-rule} cannot be applied
  to any tournament in~$\packing$,
  for each tournament~$H=(W,F)$ in~$\packing$,
  there is a set of arcs
  between~$W$ and~$V\setminus W$
  that witness that
  no optimal feedback arc set for~$H$
  removes all directed triangles
  containing at least one arc from~$F$.
  
  Now, for any optimal solution,
  there are two possibilities
  for each packing tournament~$H\in\packing$:
  \begin{enumerate}[(a)]
  \item\label{fast-inside} at least~$\tau(H)+1$ arcs in~$H$ are reversed, or
  \item\label{fast-outside} at least one external arc of~$H$ is reversed.
  \end{enumerate}
  Therefore, $S$~fulfills the condition
  of \cref{lem:gen-bound}
  and, thus,
  $|\packing|\le 2\ell$.
\qed\end{proof}

\noindent The bound on~$k$ yields the following two fixed-parameter tractability results.

\begin{theorem}\label{fast-general-fpt}
  Let $\Time(G,k)$~be the running time used for finding a minimum feedback arc set of size at most~\(k\) for a given tournament~$G$ if it exists.  Then, \FASTGP{}
  \begin{enumerate}[(i)]
  \item\label{fvs-fpt} is solvable in $\Time(G,(2t+1)\ell) + n^{O(1)} + \sum_{H\in\packing}\Time(H,t)$ time, and
  \item\label{fvs-kern} admits a problem kernel with at most~$(8t+4)\ell$ vertices computable in~$n^{O(1)} + \sum_{H\in\packing}\Time(H,t)$ time.
  \end{enumerate}
\end{theorem}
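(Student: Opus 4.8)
The plan is to follow exactly the template of the proof of \cref{thm:td-kern-general} for \textsc{Triangle Deletion}, since all the problem-specific work is already in place: \cref{fast-general-rule-time} shows that \cref{fast-general-rule} is correct and applies to a single packing tournament~$H$ within the stated time, and \cref{fast-k-small-gen} supplies the crucial bound $k\le(2t+1)\ell$ for every yes-instance to which the rule is no longer applicable. Both parts of the theorem then reduce to one scheme: exhaustively apply \cref{fast-general-rule}, and then use the bound on~$k$ either to reject outright or to invoke a black-box algorithm (respectively kernelization) for the parameter~$k$ alone.

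First I would argue that \cref{fast-general-rule} can be applied exhaustively to all of~$\packing$ within $n^{O(1)}+\sum_{H\in\packing}\Time(H,t)$ time. Summing the per-graph bound $O(\binom{q}{2}(n-q)+\Time(H,t))$ from \cref{fast-general-rule-time} over all~$H\in\packing$ yields this, since $\sum_{H}\binom{|V(H)|}{2}(n-|V(H)|)$ is polynomial. The key observation is that a single pass over~$\packing$ suffices: because the packing tournaments are vertex-disjoint, reversing arcs inside one~$H$ changes neither the arcs inside any other~$H'$ nor the arcs between~$H'$ and~$V\setminus V(H')$, so it can neither create nor destroy a directed triangle relevant to whether the rule applies to~$H'$. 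Exactly as noted for \cref{rule:opt-graph}, the rule never increases~$\ell$, as it decreases both~$k$ and~$h(\packing)$ by~$\tau(H)$.

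For part~\eqref{fvs-fpt}, after reducing I would apply \cref{fast-k-small-gen}: if $k>(2t+1)\ell$, I reject; otherwise $k\le(2t+1)\ell$ and, discarding the now irrelevant packing, I run the known fixed-parameter algorithm for \textsc{FAST} on~$(G,k)$. By monotonicity of~$\Time$ in~$k$, this costs at most $\Time(G,(2t+1)\ell)$, which together with the reduction time gives the claimed bound; plugging in the $2^{c\sqrt{k}}+n^{O(1)}$ algorithm of~\cite{FP13} then yields the subexponential running time announced in the introduction. For part~\eqref{fvs-kern}, I proceed identically up to the rejection step and then apply the $4k$-vertex kernelization for \textsc{FAST}~\cite{PPT16} to~$(G,k)$; since $4k\le4(2t+1)\ell=(8t+4)\ell$, this produces a tournament on at most $(8t+4)\ell$ vertices, to which I attach an empty packing to obtain an equivalent instance of \FASTGP{}.

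I expect the difficulty to lie entirely in the already-proven lemmas; given them, the theorem is a routine assembly. The two points that still warrant care are the one-pass argument for exhaustive application (which relies on the vertex-disjointness of~$\packing$, as above) and the verification of the constant: the bound $(8t+4)\ell$ is precisely what the $4k$-vertex kernel delivers via $k\le(2t+1)\ell$, whereas the sharper $(2+\epsilon)k$-vertex kernel would give a smaller but $\epsilon$-dependent bound, which is why the clean $4k$-vertex kernel is the natural choice here.
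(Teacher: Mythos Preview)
Your proposal is correct and follows essentially the same approach as the paper's own proof: exhaustively apply \cref{fast-general-rule} in one pass over~$\packing$ (justified by vertex-disjointness), use \cref{fast-k-small-gen} to bound~$k\le(2t+1)\ell$ or reject, and then invoke the black-box FAST algorithm for~\eqref{fvs-fpt} or the $4k$-vertex kernel of~\cite{PPT16} for~\eqref{fvs-kern}, attaching an empty packing. Your one-pass justification is in fact slightly more explicit than the paper's, and your discussion of why the $4k$-vertex kernel (rather than the $(2+\epsilon)k$-vertex kernel) is the natural choice is a nice addition.
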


\begin{proof}
  \eqref{fvs-fpt} Given $(G,\packing,k)$, we first apply \cref{fast-general-rule} for each~$H\in \packing$.  This application can be performed in~$O(|\packing|n^3 + \sum_{H\in\packing}\Time(H,t))$ time by \cref{fast-general-rule-time} since~$|\packing|\le n$.
  One pass of this rule is sufficient
  to obtain an instance that is reduced:
  reversing arcs in some~$H\in \packing$
  does not remove any directed triangles
  containing arcs of any other~$H'\in\packing$
  with $H'\neq H$.
  By \cref{fast-k-small-gen}, we can then reject the instance if~$k > (2t+1)\ell$.  Otherwise, we can find a solution in~$\Time(G,(2t+1)\ell)$ time.

  \eqref{fvs-kern}
  First, we apply \cref{fast-general-rule}
  once for each~$H\in\packing$ in
  $O(|\packing|n^3 + \sum_{H\in\packing}\Time(H,t))$~time.
  After one pass of this rule,
  the instance is reduced
  since reversing arcs in some~$H\in \packing$
  does not remove any directed triangles
  containing arcs of any other~$H'\in\packing$
  with $H'\neq H$.
  Afterwards, by \cref{fast-k-small-gen}, we can reject if~$k > (2t+1)\ell$.  Otherwise, we apply the kernelization algorithm for \textsc{FAST} by~\citet{PPT16} to the instance~$(G,k)$ to obtain an equivalent instance~$(G',k')$ with at most~$4k\le (8t+4)\ell$ vertices and a solution size~$k'\le k$. 
  Hence,
  $(G',\emptyset,k')$ is our problem kernel 
  with parameter~$\ell'=k'\le(2t+1)\ell$ of \FASTGP{}.  \qed\end{proof}

\noindent In \cref{fast-general-fpt}, we again assume that $\Time$~is monotonically nondecreasing in both the size of~$G$ and in~$k$.  As mentioned earlier, $2^{O(\sqrt{k})}+ |V(G)|^{O(1)}$ is the currently best known running time for $\Time(G,k)$~\cite{KS10}.

\begin{corollary}\label[corollary]{cor:fast-abv-cost-t}
  \FASTGP{}
  \begin{enumerate}[(i)]
  \item\label{fast-searchtree} can be solved in~$2^{O(\sqrt{(2t+1)\ell})} + n^{O(1)} + n2^{O(\sqrt t)}$~time, and
  \item\label{fast-kern} admits a problem kernel with at most~$(8t+4) \ell$ vertices computable in~$n^{O(1)} + n2^{O(\sqrt t)}$~time.
  \end{enumerate}
\end{corollary}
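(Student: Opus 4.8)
The plan is to derive this corollary directly from \cref{fast-general-fpt} by substituting the currently best known running time for \textsc{FAST}, namely $\Time(G,k)=2^{O(\sqrt{k})}+|V(G)|^{O(1)}$~\cite{KS10}, into both parts of that theorem. Since \cref{fast-general-fpt} already expresses the running times abstractly in terms of $\Time$, the entire argument reduces to a mechanical specialization together with some bookkeeping of the polynomial factors. The only term that requires genuine care is the sum $\sum_{H\in\packing}\Time(H,t)$, which appears in the running time of both parts.

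To handle this sum, I would first recall that the packing~$\packing$ is vertex-disjoint and that every $H\in\packing$ satisfies $1\le\tau(H)$, so each~$H$ is nonempty; consequently $|\packing|\le n$, exactly as used inside the proof of \cref{fast-general-fpt}. Substituting the known bound gives, for each $H\in\packing$, that $\Time(H,t)=2^{O(\sqrt{t})}+|V(H)|^{O(1)}\le 2^{O(\sqrt{t})}+n^{O(1)}$, where I use $|V(H)|\le n$. Summing over the at most~$n$ packing tournaments then yields $\sum_{H\in\packing}\Time(H,t)\le n\cdot 2^{O(\sqrt{t})}+n^{O(1)}$, which I may safely write as $n2^{O(\sqrt t)}+n^{O(1)}$.

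With this bound in hand, part~\eqref{fast-searchtree} follows by plugging $\Time(G,(2t+1)\ell)=2^{O(\sqrt{(2t+1)\ell})}+n^{O(1)}$ and the bound above into the running time $\Time(G,(2t+1)\ell)+n^{O(1)}+\sum_{H\in\packing}\Time(H,t)$ of \cref{fast-general-fpt}\eqref{fvs-fpt}, and collecting all polynomial summands into a single~$n^{O(1)}$; this gives the claimed $2^{O(\sqrt{(2t+1)\ell})}+n^{O(1)}+n2^{O(\sqrt t)}$. For part~\eqref{fast-kern}, the kernel size $(8t+4)\ell$ is inherited verbatim from \cref{fast-general-fpt}\eqref{fvs-kern}, and the identical substitution into its running time $n^{O(1)}+\sum_{H\in\packing}\Time(H,t)$ yields $n^{O(1)}+n2^{O(\sqrt t)}$. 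I do not anticipate any real obstacle here: the statement is a routine instantiation of the generic theorem, and the sole subtlety is absorbing the per-packing-graph polynomial factors into a global~$n^{O(1)}$ via the bound $|\packing|\le n$.
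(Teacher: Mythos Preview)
Your proposal is correct and matches the paper's approach exactly: the paper states the corollary without proof, simply noting just before it that $2^{O(\sqrt{k})}+|V(G)|^{O(1)}$ is the best known bound for~$\Time(G,k)$, so the result follows by substitution into \cref{fast-general-fpt}. Your handling of the sum $\sum_{H\in\packing}\Time(H,t)$ via $|\packing|\le n$ is the intended bookkeeping.
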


\section{Cluster Editing}
\label{sec:cluster-edit}
We finally apply our framework from \cref{sec:approach} to \textsc{Cluster Editing}, a well-studied edge modification problem in parameterized complexity~\cite{Boeck12,CM12,FKP+11,KU12}. %
\decprob{\textsc{Cluster Editing}}{A graph~$G=(V,E)$ and a natural number~$k$.}  {Is there an edge modification set~$S\subseteq \binom{V}{2}$ of size at most~$k$ such that~$G\symdiff{} S$ is a cluster graph, that is, a disjoint union of cliques?}  A graph is a cluster graph if and only if it is~$P_3$-free~\cite{SST04}. Thus, \textsc{Cluster Editing} is the problem of destroying all~$P_3$s by few edge modifications. For brevity, we refer to the connected components of a cluster graph (which are cliques) and to their vertex sets as \emph{clusters}. %
The currently fastest algorithm for
\textsc{Cluster Editing} parameterized by the solution size~$k$ runs in~$O(1.62^k + n+m)$ time~\cite{Boeck12}.
Assuming the
exponential-time hypothesis, \textsc{Cluster Editing} cannot be solved
in~$2^{o(k)}\cdot n^{O(1)}$ time~\cite{FKP+11,KU12}. \textsc{Cluster
  Editing} admits a problem kernel with at most~$2k$
vertices~\cite{CM12}.%

First, in \cref{sec:cet}, we present a fixed-parameter algorithm and problem kernel for \textsc{Cluster Editing} parameterized above lower bounds given by cost-\(t\) packings.  Then, in \cref{sec:p3p}, we present a faster fixed-parameter algorithm for lower bounds given by vertex-disjoint packings of~\(P_3\)s.

\subsection{A fixed-parameter algorithm for vertex-disjoint cost-\(t\) packings}
\label{sec:cet}
Several kernelizations for \textsc{Cluster Editing}
are based on the following observation:
If $G$~contains a clique
such that all vertices in this clique
have the same closed neighborhood,
then there is an optimal solution that
puts these vertices into the same cluster
\cite{PSS09,Guo09,CM12}.
This implies that the edges of this clique
are never deleted.
The following rule
is based on a generalization of this observation.
\begin{rrule}
  \label[rrule]{rule:opt-graph-ce}
  If $G=(V,E)$~contains an induced subgraph~$H=(W,F)$
  having an optimal solution~$S$ of size~$\tau(H)$
  such that,
  for all vertices~$u,v\in W$,
  \begin{itemize}
  \item
    $N_G(v)\setminus W=N_G(u)\setminus W$
    if~$u$ and~$v$ are in the same cluster
    of~$H\symdiff{} S$,
    and
  \item
    $N_G(v)\cap N_G(u)\subseteq W$
    otherwise,
  \end{itemize}
  then replace~$G$ by~$G\symdiff{} S$, remove~$H$ from~$\packing$, and decrease~$k$ by~$\tau(H)$.
\end{rrule} 
An example of \cref{rule:opt-graph-ce} is presented
in \cref{fig:opt-graph-ce}.
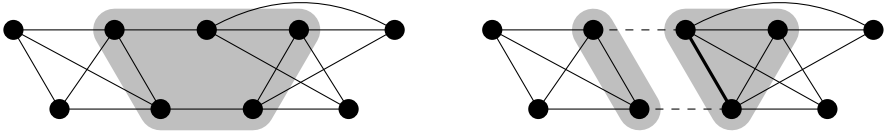
\begin{figure}[t]
  \centering
  \begin{tikzpicture}[x=0.7cm,y=0.7cm]
    \tikzstyle{packing} = [fill,color=lightgray,line cap=round, line
    join=round, line width=16pt]
    \tikzstyle{deleted} = [dashed]
    \tikzstyle{inserted} = [very thick]
    \tikzstyle{vertex} = [color=black,fill=black,circle]
    \begin{scope}[rotate=30]
      \node[vertex] (x1) at (0:1) {};
      \node[vertex] (x2) at (120:1) {};
      \node[vertex] (x3) at (-120:1) {};
    \end{scope}

    \begin{scope}[shift={(-2.6,-0.5)}]
      \begin{scope}[rotate=-30]
        \node[vertex] (z1) at (0:1) {};
        \node[vertex] (z2) at (120:1) {};
      \end{scope}
    \end{scope}

    \begin{scope}[shift={(-4.5,-0.5)}]
      \begin{scope}[rotate=-30]
        \node[vertex] (w1) at (0:1) {};
        \node[vertex] (w2) at (120:1) {};
      \end{scope}
    \end{scope}
  
    \begin{scope}[shift={(1.8,0)}]
       \begin{scope}[rotate=-90]
         \node[vertex] (y1) at (0:1) {};
         \node[vertex] (y2) at (120:1) {};
         \end{scope}
    \end{scope}
    \draw (x1.center)--(x2.center);
    \draw (z1.center)--(z2.center);
    \draw (z1.center)--(x3.center);
    \draw (z2.center)--(x2.center);
    \draw (x3.center)--(x1.center) ;
    \draw (z1.center)--(w1.center);
    \draw (z1.center)--(w2.center);
    \draw (z2.center)--(w1.center);
    \draw (z2.center)--(w2.center);
    \draw (w1.center)--(w2.center);
    \draw (x3.center)--(y1.center);
    \draw (x2.center)--(y1.center);
    \draw (x1.center)--(y1.center);
    \draw (x3.center)--(y2.center);
    \draw (x2.center) to [bend left =30] (y2.center);
    \draw (x1.center)--(y2.center);
    \begin{pgfonlayer}{background}
       \draw[packing] (z2.center)--(x1.center)--(x3.center)--(z1.center)--cycle;
    \end{pgfonlayer}

    \begin{scope}[shift={(9,0)}]
    \begin{scope}[rotate=30]
      \node[vertex] (x1) at (0:1) {};
      \node[vertex] (x2) at (120:1) {};
      \node[vertex] (x3) at (-120:1) {};
    \end{scope}

    \begin{scope}[shift={(-2.6,-0.5)}]
      \begin{scope}[rotate=-30]
        \node[vertex] (z1) at (0:1) {};
        \node[vertex] (z2) at (120:1) {};
      \end{scope}
    \end{scope}

    \begin{scope}[shift={(-4.5,-0.5)}]
      \begin{scope}[rotate=-30]
        \node[vertex] (w1) at (0:1) {};
        \node[vertex] (w2) at (120:1) {};
      \end{scope}
    \end{scope}
  
    \begin{scope}[shift={(1.8,0)}]
       \begin{scope}[rotate=-90]
         \node[vertex] (y1) at (0:1) {};
         \node[vertex] (y2) at (120:1) {};
         \end{scope}
    \end{scope}
    \draw (x1.center)--(x2.center);
    \draw (z1.center)--(z2.center);
    \draw [deleted] (z1.center)--(x3.center);
    \draw [deleted] (z2.center)--(x2.center);
    \draw (x3.center)--(x1.center) ;
    \draw [inserted] (x3.center)--(x2.center) ;
    \draw (z1.center)--(w1.center);
    \draw (z1.center)--(w2.center);
    \draw (z2.center)--(w1.center);
    \draw (z2.center)--(w2.center);
    \draw (w1.center)--(w2.center);
    \draw (x3.center)--(y1.center);
    \draw (x2.center)--(y1.center);
    \draw (x1.center)--(y1.center);
    \draw (x3.center)--(y2.center);
    \draw (x2.center) to [bend left = 30] (y2.center);
    \draw (x1.center)--(y2.center);
    \begin{pgfonlayer}{background}
       \draw[packing] (z2.center)--(z1.center)--cycle;        \draw[packing] (x1.center)--(x2.center)--(x3.center)--cycle; 
    \end{pgfonlayer}

    \end{scope}

  \end{tikzpicture}
  \caption{An illustration of \cref{rule:opt-graph-ce}. Left: An
    induced subgraph~$H$ (highlighted by the gray background)
    fulfilling the conditions of \cref{rule:opt-graph-ce}. Right: The
    result of applying \cref{rule:opt-graph-ce}. The two clusters
    in~$G[W]$ produced by the optimal solution for~$H$ are highlighted
    by a gray background.  }
\label{fig:opt-graph-ce}
\end{figure}
\begin{lemma}\label[lemma]{lem:opt-cluster-correct}
  \cref{rule:opt-graph-ce} is correct.
\end{lemma}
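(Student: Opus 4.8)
The plan is to show that the instances $(G,\packing,k)$ and $(G\symdiff S,\packing\setminus\{H\},k-\tau(H))$ are equivalent, following the same two-direction scheme as in the proof of \cref{lem:opt-graph-correct}, but now accounting for the fact that \textsc{Cluster Editing} allows both edge insertions and deletions and that edge modifications inside $H$ can interact with $P_3$s that straddle the boundary of $W$. Write $G':=G\symdiff S$. The backward direction is routine: if $S'$ is a solution of size $k-\tau(H)$ for $G'$, then since $S$ and $S'$ are edge modification sets, $S\cup S'$ (a symmetric-difference composition) applied to $G$ yields $G'\symdiff S'$, which is a cluster graph, and one checks that $|S\cup S'|\le\tau(H)+(k-\tau(H))=k$ because $S$ and $S'$ touch $G$ in a way that does not create cancellations increasing the count. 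So the crux is the forward direction.

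For the forward direction, let $T$ be an optimal solution for $(G,\packing,k)$, of size at most $k$. The goal is to produce a solution $T^\star$ for $G$ of size at most $k$ that contains $S$ restricted to $\binom{W}{2}$, i.e.\ one that resolves $H$ exactly as the prescribed optimal solution $S$ does; then $T^\star\setminus S$ is a solution of size at most $k-\tau(H)$ for $G'$, establishing the claim. The first step is to examine how $T$ partitions $V(H)=W$ into the clusters of $G\symdiff T$. I would argue, using the two neighborhood conditions in the rule, that the clustering structure forced by $S$ on $W$ is the \emph{cheapest consistent} way to cluster $W$ against the fixed outside graph: the condition $N_G(v)\setminus W=N_G(u)\setminus W$ for vertices placed together by $S$ means that grouping them costs nothing extra at the boundary, while $N_G(v)\cap N_G(u)\subseteq W$ for vertices separated by $S$ means that no external common neighbor would make separating them expensive. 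The key internal step is an exchange/replacement argument: modify $T$ on the pairs inside $W$ (and on external pairs incident to $W$ as forced by $S$'s clustering) so that $W$ is clustered exactly as in $G\symdiff S$, and show this does not increase the total modification count.

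The main obstacle I expect is exactly this exchange argument at the boundary, i.e.\ bounding the external (cross-$W$) modifications. One must show that rerouting $T$ so that $W$ follows $S$'s clustering does not create new $P_3$s involving one vertex inside $W$ and two outside (or two inside and one outside), and that the external edits saved by the uniform-neighborhood conditions exactly compensate for any internal edits changed—this is where the two bullet conditions do the real work and must be used quantitatively, not just qualitatively. Concretely, I would fix the partition of $V\setminus W$ induced by $T$ into the clusters of $G\symdiff T$, then for each such outside cluster $C$ count how the two conditions constrain the adjacencies between $C$ and each $S$-cluster of $W$; the uniform-neighborhood condition forces each $S$-cluster of $W$ to attach \emph{uniformly} to the outside, so that assigning it wholesale to one outside cluster (or leaving it separate) is optimal and matches the cost accounting of $S$ plus an independent outside solution. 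Care must be taken to confirm that $G'=G\symdiff S$ contains no new cross-boundary $P_3$ that was absent in $G$, which again follows from the two conditions since the neighborhoods outside $W$ are preserved cluster-wise. Once the boundary interaction is shown to be neutral, the counting $|T^\star|\le|T|\le k$ with $\tau(H)$ of those edits lying in $\binom{W}{2}$ closes the proof.
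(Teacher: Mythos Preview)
Your plan follows the paper's route: easy backward direction, and for the forward direction an exchange argument that keeps the optimal solution's partition on $V\setminus W$, replaces the inside-$W$ edits by $S$, and reassigns the cross-boundary edits so that each $S$-cluster attaches uniformly to the outside. The paper does exactly this, writing an optimal $S^*$ as $S^*_1\cup S^*_2\cup S^*_3$ (outside, cross, inside) and building $S':=S^*_1\cup\tilde S\cup S$.

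The gap is in the step you yourself flag as the main obstacle. You assert that ``assigning each $S$-cluster wholesale to one outside cluster (or leaving it separate) is optimal,'' but you do not supply the argument for \emph{why} the cut-off option is no more expensive than what $S^*$ already paid at that class. The paper's mechanism is this: for each $S$-cluster $[u]$ of $\Wext$, pick a representative $\tilde u$ with fewest incident edits in $S^*_2$; call $[u]$ \emph{good} if, in the $S^*$-cluster $K$ containing $\tilde u$, the vertex $\tilde u$ has the most $G$-neighbours in $K\cap X$ among all of $\Wext$. Good classes copy $\tilde u$'s cross-edits; not-good classes are cut off entirely. The not-good case is where the second bullet is used quantitatively: some $w\notin[u]$ has at least as many neighbours in $K\cap X$ as $\tilde u$, and since $N_G(\tilde u)\cap X$ and $N_G(w)\cap X$ are disjoint by that bullet, $|N_G(\tilde u)\cap K\cap X|\le |K\cap X|/2$. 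Hence $S^*_2$ already spends at least $|K\cap X|/2$ insertions plus $|N_G(\tilde u)\cap(X\setminus K)|$ deletions on $\tilde u$, totalling at least $|N_G(\tilde u)\cap X|$, which is the full cost of cutting $[u]$ off. Without this halving argument the exchange inequality $|\tilde S|\le|S^*_2|$ is unproved; once you add it, your proof and the paper's coincide.

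A minor point on the backward direction: the issue is not ``cancellations increasing the count'' but rather that one needs $S\cap S'=\emptyset$ so that $G\symdiff(S\cup S')=(G\symdiff S)\symdiff S'$; the paper simply asserts this (one may take $S'$ disjoint from $S$ without loss of generality).
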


\begin{proof}
  Let \(I':=(G\symdiff S,\packing',k-\tau(H))\) be
  the instance obtained by applying 
  \cref{rule:opt-graph-ce} to~$I:=(G,\packing,k)$
  for some induced subgraph~\(H=(W,F)\) of~\(G=(V,E)\).
  If \(I'\)~is a yes-instance, then so is~\(I\):
  adding the \(\tau(H)\)~edges in~\(S\)
  to any solution~\(S'\) of size~\(k-\tau(H)\)
  for~\(G\symdiff S\)
  gives a solution of size~\(k\)
  for~\(G\)
  since \(S\cap S'=\emptyset\)
  and, thus,
  \(G\symdiff (S\cup S')=(G\symdiff S)\symdiff S'\).
  It remains to prove that
  if \(I\)~is a yes-instance,
  then so is~\(I'\).
  To this end, we show that
  $(G,\packing,k)$~has an optimal solution~$S'$
  such that~$S\subseteq S'$. 

  For convenience,
  let~$X:=V\setminus W$ denote
  the set of vertices not in~$W$.
  Let~$S^*$ be any optimal solution for~$G$,
  and denote by~$G^*:=G\symdiff{} S^*$
  the cluster graph produced by~$S^*$.
  We show how to transform~$S^*$
  into an optimal solution~$S'\supseteq S$.
  To this end,
  partition~$S^*$ as follows:
  \begin{itemize}
  \item $S^*_1:=\{\{u,v\}\in S^*\mid u,v\in X\}$ containing
    all edge modifications outside of~$H$,
  \item $S^*_2:=\{\{u,v\}\in S^*\mid u\in X \wedge v\in W\}$
    containing the edge modifications between~$H$ and the rest of~$G$, and
  \item $S^*_3:=\{\{u,v\}\in S^*\mid u,v\in W\}$.
  \end{itemize}
  Moreover, let~$\Wext$ be the vertices of~$W$
  that have at least one neighbor in~$X$
  and let~$\Wint:=W\setminus \Wext$.
  Consider the following equivalence relation~$\sim$
  on~$\Wext$: two vertices~$u,v\in \Wext$ are equivalent with respect to~\(\sim\) if and only
  if~$N_G(u)\cap X = N_G(v)\cap X$.  For each vertex~$u\in \Wext$, let $[u]$~denote the
  equivalence class of~$u$ in~$\sim$.

  Fix within each equivalence class~$[u]$ of~$\sim$ an
  arbitrary vertex that is incident to a minimum number of edge
  modifications in~$S^*_2$ and, for each vertex~$u$, denote this vertex
  by~$\tilde{u}$.
  Furthermore, for each cluster~$K$ of~$G^*$
  containing some vertices of~$X$ and some vertices
  of~$\Wext$, fix an arbitrary vertex of~$\Wext$ that has in~$G$ a maximum
  number of neighbors in~$K\cap X$; denote this vertex
  by~$u_K$. Finally, call a vertex~$u\in \Wext$~\emph{good} if there
  is a cluster~$K$ such that~$\tilde{u}=u_K$.
  Now consider the edge modification set~$S':=S_1^*\cup\tilde{S}\cup S$, where %
  \begin{align*}
    \tilde{S}:={}&\Bigl\{\{u,v\}\mid \text{$u$ is~good and } \{\tilde{u},v\}\in S_2^*\Bigr\}\cup{}\\
{}\cup{}& \Bigl\{\{u,v\}\mid \text{$u$ is not good and } v\in N_G(u)\cap X\Bigr\}.
  \end{align*}
  Informally, the modifications in~$\tilde{S}$ consider~$\tilde{u}$ to
  determine how to treat all vertices in the equivalence
  class~$[u]$. If, among the vertices of~$\Wext$, $\tilde{u}$ has the most neighbors in its cluster
  in~$G^*$, then all vertices of~$[u]$ are treated
  like~$\tilde{u}$. Otherwise, all edges between vertices in~$[u]$
  and~$X$ are deleted.

  We first show that~$S'$ is a solution, that is, $G':=G\symdiff{} S'$~is a
  cluster graph: First, $G'[X]=G^*[X]$~is a cluster
  graph. Second, $G'[W]=H\symdiff{} S$ and thus it is a cluster
  graph. Third, every vertex~$u \in \Wint$ is contained in a cluster that is a subset of~$\Wint$ in~$G'$: In~$G$, there are no edges between~$\Wint$ and~$X$,~no edges between~$\Wint$ and~$X$ are added by~$\tilde{S}$, and by the first condition on~$S$, no vertex of~$\Wint$ is in the same cluster of~$H\symdiff S$ as a vertex from~$\Wext$. This implies that the connected component of each vertex~$u\in \Wint$ is completely contained in~$W$ and thus it is a clique since~$G'[W]$ is a cluster graph. Finally, consider any equivalence class~$[u]$ of~$\Wext$. By the
  condition of \cref{rule:opt-graph-ce}, $[u]$~is contained in a cluster in~$H\symdiff{} S$.
 Now, if all vertices of~$[u]$ are good, then there is a
  cluster~$K$ in~$G'[X]$ such that in~$G'$ all vertices of~$[u]$ are
  adjacent to all vertices of~$K$.
 Otherwise, no vertex of~$[u]$
  is good and thus, no vertex of~$[u]$ is adjacent to any
  vertex of~$X$ in~$G'$. Finally, if a cluster~$K$ of~$G'[X]$ has neighbors in~$W$, then these edges are only to the vertex
  set~$[u_K]$. Thus, every vertex in~$K$ has neighbors in at most
  one cluster of~$H\symdiff{} S$. Altogether this shows that~$G'$ is a
  cluster graph.

  It remains to show that $|S'|\le |S^*|$.
  First, $S^*_1$~is a subset of~$S^*$ and of~$S'$. Second, $|S|\le |S_3^*|$
  since~$S$ is an optimal solution for~$H$. Thus, it remains
  to show~$|\tilde{S}|\le |S^*_2|$. Since the vertices of~$\Wint$ are not incident to any edge modifications in~$\tilde{S}$, we may prove this inequality by showing, for each
  vertex~$u\in \Wext$, that~$$|\{e\in \tilde{S} \mid u\in e\}|\le |\{e\in S^*_2 \mid
  u\in e\}|.$$
  If~$u$ is good, then the number of edge modifications incident
  with~$u$ in~$\tilde{S}$ is the same as the number of edge
  modifications incident with~$\tilde{u}$ in~$S^*_2$, because~$u$
  and~$\tilde{u}$ have the same neighborhood in~$X$ by the condition of
  the rule. By the choice of~$\tilde{u}$, this is at most as large as the
  number of edge modifications incident with~$u$ in~$S_2^*$ and the
  claim holds.
  
  Otherwise, all edges between~$u$ and~$X$ are deleted
  by~$\tilde{S}$. Let~$K$ denote the cluster in~$G\symdiff{} S^*$
  containing~$\tilde{u}$. Since~$u$ is not good, there is a vertex~$w$
  that is not in~$[u]$ such that~$w$ has at least as many neighbors
  in~$K\cap X$ as~$\tilde{u}$. By the condition of
  \cref{rule:opt-graph-ce} and since~$[\tilde{u}]\neq [w]$, the
  neighborhoods of these two vertices in~$X$ are disjoint, that
  is, $(N_G(\tilde{u})\cap X)\cap (N_G(w)\cap X)=\emptyset$. Since~$w$
  has at least as many neighbors in~$K\cap X$ as~$\tilde{u}$, this means
  that~$$|N_G(\tilde{u})\cap K\cap X|\le |K\cap X|/2.$$ %
  Now, observe that $S^*_2$~contains an edge insertion between~$\tilde{u}$
  and each nonneighbor of~$\tilde{u}$ in~$K\cap X$. Thus, at least~$|K\cap X|/2$ edges between~$\tilde{u}$ and~$K\cap X$ are inserted by~$S^*_2$. Moreover,~$S^*_2$ contains an edge deletion between~$\tilde{u}$ and each neighbor of~$\tilde{u}$ in~$X\setminus K$.
  Altogether, this implies $$|\{e\in S^*_2 : \tilde{u}\in e\}| \ge |K\cap X|/2 + |N_G(\tilde{u})\cap (X\setminus K)| \ge |N_G(\tilde{u})\cap X|.$$
  Therefore, the number of edge modifications incident with~$u$ in~$\tilde{S}$ (this is exactly the number of edges  
  between~$u$ and~$X$) is at most as large as the number of edge modifications incident with~$\tilde{u}$ in~$S^*_2$.
  By the choice of~$\tilde{u}$, this implies the claim.  \qed
\end{proof}
It remains to analyze the running time
for applying \cref{rule:opt-graph-ce}.

\begin{lemma}\label[lemma]{lem:opt-cluster}
  Let $\Time(G,k)$~be the running time used
  for finding an optimal solution of size at most~\(k\)
  in a graph~\(G\) if it exists.

  Then, in $O(m+\sum_{H\in \packing} \Time(H,t))$~time,
  we can apply \cref{rule:opt-graph-ce}
  to all graphs in~$\packing$.
\end{lemma}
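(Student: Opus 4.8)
The plan is to process the packing graphs in a single pass: for each $H=(W,F)\in\packing$, first compute an optimal solution with the generic algorithm, then test the two neighborhood conditions of \cref{rule:opt-graph-ce} on that solution by a linear-time scan of the adjacency lists, and finally perform the modification wherever the test succeeds. Computing, for each $H$, an optimal solution~$S$ together with the cluster partition of~$W$ induced by $H\symdiff S$ costs $\Time(H,t)$ time (recall $\tau(H)\le t$), and summing over all packing graphs gives $\sum_{H\in\packing}\Time(H,t)$. While doing so, I label every vertex~$w\in W$ with the identifier of its packing graph and of its cluster and record the sizes of all clusters; since any algorithm computing a solution must read~$H$, this bookkeeping is subsumed by $\sum_{H\in\packing}\Time(H,t)$.

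The observation that makes the test cheap is a reformulation of the two conditions into a single condition on external neighborhoods. I claim that, for the computed solution~$S$, both conditions hold if and only if, for every vertex $x\in V\setminus W$, the set $N_G(x)\cap W$ is either empty or exactly one cluster of $H\symdiff S$. Indeed, the second condition forbids~$x$ from having neighbors in two different clusters, so $N_G(x)\cap W$ lies inside a single cluster~$C$; the first condition then forces~$x$ to be adjacent to all of~$C$ once it is adjacent to one vertex of~$C$, whence $N_G(x)\cap W=C$. Conversely, if $N_G(x)\cap W$ is always empty or a full cluster, then vertices in a common cluster have identical neighborhoods outside~$W$ (first condition), and no external vertex is adjacent to vertices from two clusters (second condition).

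Given this reformulation, the test runs in $O(m)$ total time. I iterate over all vertices~$x$ and scan the adjacency list of~$x$; for every neighbor~$w$ lying in a packing graph~$H$ different from the packing graph of~$x$ (recognized from the labels, so that $\{x,w\}$ is an external pair for~$H$), I accumulate, indexed by the identifier of~$H$, the cluster identifier $\operatorname{cid}(w)$ and a counter. Using an auxiliary array together with a list of the indices touched while processing~$x$, so that only these entries are reset afterwards, the accumulation costs $O(\deg(x))$ time per vertex and hence $O(m)$ in total. After scanning~$x$, for each packing graph~$H$ that~$x$ touched I check that all recorded neighbors of~$x$ in~$W$ carry the same cluster identifier and that their number equals the stored size of that cluster; any failure marks~$H$ as violating the reformulated condition. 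A packing graph~$H$ passes exactly when no external vertex marked it.

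Finally, for every~$H$ that passes the test I apply the rule by replacing~$G$ with~$G\symdiff S$, which touches only the at most $\binom{|W|}{2}$ pairs inside~$W$ and is therefore within the budget already accounted for. Since the packing graphs are vertex-disjoint and each~$S$ modifies only pairs inside its own~$W$, the modifications for distinct packing graphs are independent and leave all external neighborhoods unchanged; hence a single pass suffices and no packing graph must be re-examined, exactly as for \cref{rule:opt-graph} and \cref{fast-general-rule}. The main obstacle is precisely the linear-time test: a naive verification would compare neighborhoods of all pairs inside each~$W$, and care is needed—through the touched-entry reset and the fact that one vertex may be external to several packing graphs at once—to charge every unit of work to an incident edge and thus obtain the claimed $O(m+\sum_{H\in\packing}\Time(H,t))$ bound.
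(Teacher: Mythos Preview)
Your reformulation of the two conditions into the single condition ``for every $x\notin W$, the set $N_G(x)\cap W$ is empty or exactly one cluster of $H\symdiff S$'' is correct and elegant, and the linear-time scan you describe for testing it is fine. The gap is elsewhere: \cref{rule:opt-graph-ce} asks whether \emph{some} optimal solution~$S$ for~$H$ satisfies the two conditions, but you compute \emph{one} optimal solution with a generic algorithm and test only that one. These are not the same. Take $H$ a $P_3$ on $u,v,w$ (with $v$ the middle vertex) such that $u$ and $v$ share an external neighbor~$x$ while $w$ has none. The rule applies via the optimal solution that deletes $\{v,w\}$, since then $N_G(x)\cap W=\{u,v\}$ is a full cluster. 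But the generic solver may equally well return the optimal solution that inserts $\{u,w\}$, yielding a single cluster $\{u,v,w\}$; now $N_G(x)\cap W=\{u,v\}$ is a strict subset of that cluster and your test fails. Your algorithm would then leave~$H$ unreduced even though the rule applies, so it does not establish the lemma, and the subsequent bound in \cref{lem:ce-small-packing} (which needs that \emph{no} optimal~$S$ satisfies the conditions) would not follow.

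The paper avoids this by observing that the conditions, when satisfiable at all, \emph{determine} the clustering on the vertices with external neighbors: vertices with identical external neighborhoods must form one cluster, vertices with disjoint external neighborhoods must be separated, and only the externally isolated part $\Wint$ is free. So instead of testing an arbitrary optimal~$S$, one first checks the necessary ``same-or-disjoint'' condition via the bipartite graph between $W$ and $N(W)$, then writes down the forced modifications~$S'$ on the externally visible vertices, solves $H[\Wint]$ optimally, and finally compares $|S'|+\tau(H[\Wint])$ with $\tau(H)$. You can graft this onto your scan: your $O(m)$ pass is exactly the right tool to detect, for each~$H$, whether external neighborhoods are same-or-disjoint and to read off the forced clustering; what remains is to compute $\tau(H)$ and $\tau(H[\Wint])$ in $\Time(H,t)$ time each and compare.
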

Here, we assume that
$\Time$~is monotonically nondecreasing in~$k$
and polynomial in the size of~$G$. Currently,
$O(1.62^k+|V(G)|+|E(G)|)$ is the best known bound
for $\Time(G,k)$~\cite{Boeck12}.
\begin{proof} 
  We first show that the rule can be applied
  in $O(|W|+\sum_{w\in W} \deg_G(w)+ \Time(H,t))$~time
  to an arbitrary graph~$H=(W,F)\in \packing$.
  For convenience, denote~$X:=V\setminus W$.

  First, observe that a necessary condition for the rule is that, for each pair of
  vertices~$u$ and~$v$ in~$H$, their neighborhoods in~$X$ are the same or disjoint.
 This can be checked in
  $O(|W|+\sum_{w\in W}\deg_G(w))$~time as follows.  First, build the bipartite graph with
  parts~$W$ and~$N(W)$ and those edges between the vertex sets that are also edges of~$G$
  (equivalently, $G[W\cup N(W)]$ minus the edges with both endpoints in~$W$ or both
  endpoints in~$N(W)$). This bipartite graph is a disjoint union of complete bipartite
  graphs if and only if above condition is fulfilled. Thus, we check in~$O(|W|+|F|)$ time,
  whether the graph is a disjoint union of complete bipartite graphs. If not, then the
  rule does not apply.
  Otherwise, in the created bipartite graph,
  we compute in $O(|W|+|F|)$~time
  the groups of vertices of~$W$ whose neighborhood is the same.
  Afterwards, we can check in $O(1)$~time
  whether $u$~and~$v$ have the same neighborhood in~$X$ in~\(G\)
  by checking whether they belong to the same group.
  We now compute the set~$S'$ of
  edge modifications that is already determined by the conditions of the rule: If~$v$
  and~$w$ are nonadjacent and have the same nonempty neighborhood in~$X$, then the
  edge~$\{v,w\}$ needs to be inserted and is thus added to~$S'$.  Similarly, if~$v$
  and~$w$ are adjacent and have different neighborhoods in~$X$, then~$\{v,w\}$ needs to be
  deleted and is thus added to~$S'$. Observe that if~$S'$ is a subset of an optimal
  solution~$S$ for~$H$, then~$|S|\le |F|$. Hence, at most~$|F|$ edges are added
  to~$S'$. Since we already computed the groups of vertices that have the same or disjoint
  neighborhoods, we can thus compute~$S'$ in~$O(|W|+|F|)$ time.

  Let~$\Wext$ denote the vertices of~$W$ that have at least one neighbor in~$N(W)$ and
  let~$\Wint:=W\setminus \Wext$. Note that after applying~$S'$, that is,
  in~$H \symdiff{} S'$, the vertices of~$\Wext$ are in separate clusters:
  $(H\symdiff{} S')[\Wext]$~is a cluster graph and
  there are no edges between~$\Wext$ and~$\Wint$ in~$H\symdiff{} S'$.
  Thus, to determine whether~$S'$ can be extended
  to an optimal solution~$S$ for~$H$
  that fulfills the condition of the rule,
  we compute an optimal solution of~$H[\Wint]$ in
  $\Time(H[\Wint],\tau(H)-|S'|)$~time.
  Since $|H[\Wint]|\le |H|$ and $\tau(H)-|S'|<t$, this can be done in
  $\Time(H,t)$~time. The size of the resulting solution~$S$ is compared with the size of
  an optimal solution of~$H$, which can also be computed in $\Time(H,t)$~time.

  It remains to show that \cref{rule:opt-graph-ce}
  can be applied to all graphs
  within the claimed running time.
  For each graph~$H=(W,F)\in \packing$,
  we can check
  in $O(|W|+\sum_{w\in W} \deg_G(w)+\Time(H,t))$~time
  whether \cref{rule:opt-graph-ce} applies.
  If yes,  then we can apply the rule in $O(|F|)$~time
  by modifying at most $|F|$~edges.
  Summing up over all graphs in~$\packing$
  gives the claimed running time.
\qed\end{proof}

\noindent Observe that since~$k$ is decreased by~$\tau(H)$, the parameter~$\ell$ does not increase when \cref{rule:opt-graph-ce} is applied. As for the previous problems, applying the rule to each~$H\in \packing$ is sufficient for bounding~\(k\) in~\(t\) and~\(\ell\) and thus, for transferring known fixed-parameter tractability results for the parameter~\(k\) to the combined parameter~\((t,\ell)\).

\begin{lemma}\label[lemma]{lem:ce-small-packing}
  Let~$(G,\packing,k)$ be a yes-instance of~\CEGP{} such that
  \cref{rule:opt-graph-ce} does not apply to any~$H\in
  \packing$.  Then, $k\le (2t+1)\ell$.
\end{lemma}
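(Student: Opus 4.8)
The plan is to mimic exactly the structure used for \cref{lem:k-small-gen} and \cref{fast-k-small-gen}, invoking the generic bound \cref{lem:gen-bound}. The whole point of the framework in \cref{sec:approach} is that once the appropriate reduction rule is inapplicable, every packing graph must be ``solved suboptimally'' or ``touched externally'' by any solution, which is precisely the dichotomy (\ref{inside})/(\ref{outside}) that \cref{lem:gen-bound} requires. So the proof is short: establish this dichotomy for \cref{rule:opt-graph-ce}, then quote \cref{lem:gen-bound}.

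First I would fix an arbitrary optimal solution $S^*$ for $(G,\packing,k)$ and consider a single packing graph $H=(W,F)\in\packing$. The goal is to show that $S^*$ either contains at least $\tau(H)+1$ vertex pairs inside $\binom{W}{2}$, or contains at least one external pair for $H$. Suppose neither holds: then $S^*$ contains at most $\tau(H)$ pairs inside $W$ and no external pair, so the restriction $S^*\cap\binom{W}{2}$ is an edge modification set for $H$ of size at most $\tau(H)$; by optimality of $\tau(H)$ it has size exactly $\tau(H)$ and is an optimal solution for $H$. The crucial extra step, which distinguishes this from the \textsc{Triangle Deletion} case, is to argue that this optimal solution $S^*\cap\binom{W}{2}$ satisfies the two neighborhood conditions in \cref{rule:opt-graph-ce}. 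Since $S^*$ modifies no external pair, every vertex of $W$ keeps its entire neighborhood in $X:=V\setminus W$ unchanged in $G\symdiff S^*$; because $G\symdiff S^*$ is a cluster graph, any two vertices of $W$ lying in the same cluster of $H\symdiff(S^*\cap\binom{W}{2})$ must in fact lie in the same cluster of $G\symdiff S^*$ and hence have identical neighborhoods in $X$, while two vertices in different clusters of $G\symdiff S^*$ can share no common neighbor at all, giving the second condition. This shows $H$ fulfils the hypothesis of \cref{rule:opt-graph-ce}, contradicting the assumption that the rule does not apply to $H$.

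Having established the dichotomy for every $H\in\packing$, the solution $S^*$ satisfies condition (\ref{inside}) or (\ref{outside}) of \cref{lem:gen-bound} for each packing graph. Applying \cref{lem:gen-bound} directly yields $|\packing|\le 2\ell$ and hence $k\le(2t+1)\ell$, as desired.

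The main obstacle I anticipate is the verification that the induced solution $S^*\cap\binom{W}{2}$ genuinely witnesses applicability of \cref{rule:opt-graph-ce}, i.e.\ that it respects the two neighborhood conditions. The subtlety is that the rule requires these conditions for \emph{some} optimal solution $S$ of $H$, and one must be careful that the clusters of $H\symdiff(S^*\cap\binom{W}{2})$ coincide (as far as $W$ is concerned) with the clusters of the globally cluster graph $G\symdiff S^*$ — this is exactly where ``no external pair is modified'' is used, ensuring the neighborhoods in $X$ are frozen and forcing the two conditions. Beyond this point the argument is a verbatim invocation of \cref{lem:gen-bound}, so no further difficulty is expected.
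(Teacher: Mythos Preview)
Your proposal is correct and takes essentially the same approach as the paper: establish the dichotomy (\ref{inside})/(\ref{outside}) of \cref{lem:gen-bound} for every packing graph by arguing that its failure would make $S^*\cap\binom{W}{2}$ a witness for applicability of \cref{rule:opt-graph-ce}, then invoke \cref{lem:gen-bound}. The paper's version is terser---it states the contrapositive (inapplicability forces one of the two neighborhood conditions to fail for every local optimum) without spelling out the verification you give---but the argument is the same.
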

\begin{proof}
  Since the instance is reduced
  with respect to \cref{rule:opt-graph-ce},
  for each~$H=(W,F)$ in~$\packing$ and
  each size-$\tau(H)$ solution~$S$ for~$H$,
  either the vertices of some cluster
  have different neighborhoods in~$V\setminus W$
  or two vertices of two distinct clusters
  have a common neighbor outside of~$W$. 

  Now,
  fix an arbitrary optimal solution~$S$ for~$G$.
  By the observation above,
  there are the following two possibilities
  for how $S$~modifies each~$H\in \packing$:
  \begin{enumerate}[(a)]
  \item\label{ce-inside} more than~$\tau(H)$ vertex pairs of~$H$ are modified by~$S$, or
  \item\label{ce-outside} at least one external vertex pair for~$H$ is modified.
  \end{enumerate}
Therefore, $S$ fulfills the condition of \cref{lem:gen-bound} and thus $k\le (2t+1)\ell$.
\qed\end{proof}

\begin{theorem}\label{thm:ce-time}
Let $\Time(G,k)$~be the running time used for finding an optimal solution of size at most~\(k\) in a graph~\(G\) if it exists. Then,  \CEGP{}
\begin{enumerate}[(i)]
\item\label{ce1} is solvable in~$O(\Time(G,(2t+1)\ell) +
  nm+\sum_{H\in \packing}\Time(H,t))$ time and
\item\label{ce2} admits a problem kernel with at most~$(4t+2)\ell$ vertices, which can be computed in~$O(nm+\sum_{H\in \packing}\Time(H,t))$ time.
\end{enumerate}
\end{theorem}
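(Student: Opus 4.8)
The plan is to reuse the win-win framework already deployed for \textsc{Triangle Deletion} and \textsc{FAST}, now instantiated with \cref{rule:opt-graph-ce}. First I would apply \cref{rule:opt-graph-ce} once to every graph~$H=(W,F)\in\packing$; by \cref{lem:opt-cluster} this takes $O(nm+\sum_{H\in\packing}\Time(H,t))$ time. A single pass suffices: the rule replaces $G$ by $G\symdiff S$ for a solution~$S\subseteq\binom{W}{2}$ of~$H$, so it modifies only vertex pairs with \emph{both} endpoints in~$W$. Since the packing graphs are vertex-disjoint, such a flip leaves $N_G(v)$ unchanged for every vertex~$v$ outside~$W$, and in particular for every vertex of any other~$H'\in\packing$. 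Hence applying the rule to~$H$ can neither create nor destroy the applicability of the rule to~$H'$, and one pass fully reduces the instance. As noted before \cref{lem:ce-small-packing}, each application decreases both~$k$ and~$h(\packing)$ by~$\tau(H)$, so the parameter~$\ell$ never increases.

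Once the instance is reduced, \cref{lem:ce-small-packing} supplies the crucial bound $k\le(2t+1)\ell$ for every yes-instance. For part~\eqref{ce1} I would then reject immediately if $k>(2t+1)\ell$; otherwise I run the known fixed-parameter algorithm for \textsc{Cluster Editing} parameterized by~$k$ on the plain instance~$(G,k)$, which, by the definition of~$\Time$ and the monotonicity of~$\Time$ in~$k$, terminates in $\Time(G,(2t+1)\ell)$ time. Adding the cost of the reduction step yields the claimed $O(\Time(G,(2t+1)\ell)+nm+\sum_{H\in\packing}\Time(H,t))$ bound.

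For part~\eqref{ce2} I would again reduce with \cref{rule:opt-graph-ce} and reject if $k>(2t+1)\ell$, and then feed the plain instance~$(G,k)$ (forgetting the packing) into the known $2k$-vertex kernelization of \textsc{Cluster Editing}~\cite{CM12}, obtaining an equivalent instance~$(G',k')$ with $k'\le k$ and at most $2k\le 2(2t+1)\ell=(4t+2)\ell$ vertices. Returning~$(G',\emptyset,k')$ gives a problem kernel of \CEGP{} with parameter $\ell'=k'\le(2t+1)\ell$, and the running time is dominated by the reduction step plus the $O(nm)$-time kernelization, i.e.\ $O(nm+\sum_{H\in\packing}\Time(H,t))$ overall.

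The conceptually hard work is already behind us: correctness is carried by \cref{lem:opt-cluster-correct}, and the decisive inequality $k\le(2t+1)\ell$ by \cref{lem:ce-small-packing}, so what remains is essentially bookkeeping. The one point I would verify carefully is that one pass of \cref{rule:opt-graph-ce} genuinely produces a fully reduced instance despite the rule's mixture of edge insertions and deletions; the confinement of every flipped pair to~$\binom{W}{2}$ is exactly what makes this work and what keeps the running-time accounting of \cref{lem:opt-cluster} and the parameter bound of \cref{lem:ce-small-packing} simultaneously valid. The only mildly delicate arithmetic is the kernel-size estimate $2k\le(4t+2)\ell$, which follows directly from substituting the bound on~$k$.
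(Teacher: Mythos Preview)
Your proposal is correct and follows essentially the same approach as the paper: apply \cref{rule:opt-graph-ce} to each packing graph, invoke \cref{lem:ce-small-packing} to bound $k\le(2t+1)\ell$, and then either run the Cluster Editing algorithm or the $2k$-vertex kernelization of~\cite{CM12} on~$(G,k)$, returning the result with an empty packing. Your explicit single-pass justification is a welcome addition (the paper leaves this implicit via \cref{lem:opt-cluster}), and the slight overstatement of the reduction cost as $O(nm+\ldots)$ rather than the $O(m+\ldots)$ of \cref{lem:opt-cluster} is harmless since the final bound already contains~$nm$.
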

\begin{proof}
  \eqref{ce2}
  First, apply \cref{rule:opt-graph-ce} exhaustively
  in~$O(m+\sum_{H\in \packing}\Time(H,t))$ time.
  Then,
  by \cref{lem:ce-small-packing},
  we can either return ``no''
  or have~$k\le (2t+1)\ell$.
  In the latter case,
  we apply a kernelization algorithm
  for \textsc{Cluster Editing} to the instance~$(G,k)$
  (that is, without~$\packing$),
  which produces, in $O(nm)$~time,
  a problem kernel~$(G',k')$
  with at most~$2k\le (4t+2)\ell$ vertices
  and with~$k'\le k$~\cite{CM12}.
  Adding an empty packing gives
  an equivalent instance~$(G',\emptyset,k')$
  with parameter~$\ell'=k'$ of \CEGP.

  \eqref{ce1} First, apply the kernelization.
  Then, by \cref{lem:ce-small-packing},
  we can either return ``no''
  or have $k\le (2t+1)\ell$.
  We can now apply the algorithm
  for \textsc{Cluster Editing}
  that runs in $\Time(G,(2t+1)\ell)$~time. \qed\end{proof} 
By plugging in the best known bound for~$\Time(G,k)$,
we obtain the following.
\begin{corollary}\label[corollary]{cor:ce-abv-cost-t}
\CEGP{} 
\begin{enumerate}[(i)]
  \item\label{cor-td-searchtree} can be solved in~$O(1.62^{(2t+1)\cdot \ell} + nm + n\cdot 1.62^t)$~time, and
  \item\label{cor-td-kern} admits a problem kernel with at most~$(4t+2)\ell$ vertices that can be computed in~$O(nm+n\cdot 1.62^t)$ time.
  \end{enumerate}
\end{corollary}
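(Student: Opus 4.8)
The plan is to obtain the corollary purely by substituting the best known running time for \textsc{Cluster Editing} into the generic bounds of \cref{thm:ce-time}. By~\cite{Boeck12}, we may take $\Time(G,k) = O(1.62^k + |V(G)| + |E(G)|)$, which is monotonically nondecreasing in~$k$ and polynomial (indeed linear) in the size of~$G$, exactly as required by \cref{thm:ce-time}. Thus the only real work is to simplify the resulting expressions.

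First I would handle the term $\sum_{H\in\packing}\Time(H,t)$ appearing in both parts of \cref{thm:ce-time}. For each packing graph~$H=(W,F)$ we have $\Time(H,t) = O(1.62^t + |W| + |F|)$. Here the key point is to exploit that the packing~$\packing$ is \emph{vertex-disjoint}: the vertex sets of the graphs $H\in\packing$ are pairwise disjoint, so $\sum_{H\in\packing}|W|\le n$ and $\sum_{H\in\packing}|F|\le m$, and moreover $|\packing|\le n$. Hence $\sum_{H\in\packing}\Time(H,t) = O(|\packing|\cdot 1.62^t + n + m) = O(n\cdot 1.62^t + nm)$. It is important to perform this estimate by summing over the disjoint graphs rather than bounding each summand by $\Time(G,t)$, which would introduce a spurious polynomial factor.

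For part~(i), I would plug $\Time(G,(2t+1)\ell) = O(1.62^{(2t+1)\ell} + n + m)$ into \cref{thm:ce-time}, obtaining $O(1.62^{(2t+1)\ell} + n + m + nm + n\cdot 1.62^t)$, in which the remaining linear terms are absorbed by~$nm$, yielding the claimed $O(1.62^{(2t+1)\cdot\ell} + nm + n\cdot 1.62^t)$. For part~(ii), the kernel-size bound $(4t+2)\ell$ carries over verbatim, and the stated computation time $O(nm + \sum_{H\in\packing}\Time(H,t))$ simplifies to $O(nm + n\cdot 1.62^t)$ by the same estimate. There is essentially no obstacle beyond this bookkeeping, since \cref{thm:ce-time} already encapsulates all the algorithmic content; the only subtlety worth stating explicitly is the vertex-disjointness argument that keeps the summed packing cost within the budget $O(nm + n\cdot 1.62^t)$.
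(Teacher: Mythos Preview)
Your proposal is correct and matches the paper's approach: the paper simply states that the corollary follows by plugging the $O(1.62^k+n+m)$ bound into \cref{thm:ce-time}, and you have spelled out precisely this substitution together with the routine simplification of $\sum_{H\in\packing}\Time(H,t)$ via vertex-disjointness.
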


\subsection{A Search Tree Algorithm for $P_3$-Packings}
\label{sec:p3p}
For \CEPP,
the generic algorithm based on \cref{rule:opt-graph-ce}
(with $t=1$)
using the currently best running time
for \textsc{Cluster Editing}
leads to a running time of~$O(4.26^\ell+n+m)$.
We now show an algorithm that runs
in \(O(4^\ell\cdot \ell^3 + n + m)\)~time.
The algorithm is based on two special cases
of \cref{rule:opt-graph-ce},
one further reduction rule
and a corresponding branching algorithm.

The analysis of the search tree size is done by considering branching vectors and their corresponding branching number.  The components of a branching vector denote the decrease of the parameter in each recursive branch.  The branching number depends only on the branching vector, the largest branching number gives the base in the upper bound on the search tree size; for further details refer to the relevant monographs~\cite{FK10,Nie06}.

We use the following special cases of \cref{rule:opt-graph-ce}. In both cases, $H$~is a~$P_3$; the correctness is directly implied by \cref{lem:opt-cluster}. In the first rule, adding an edge is a solution which fulfills the condition of \cref{rule:opt-graph-ce}. For convenience, we denote by~$uvw$ a~$P_3$ on the vertices~$u$, $v$, and~$w$, where $v$~is the degree-two vertex.
\begin{rrule}\label[rrule]{rule:almost-twin}
  If~$G$ contains a~$P_3$~$uvw$ such that $N(u)\setminus
  \{u,v,w\}=N(v)\setminus \{u,v,w\} =N(w)\setminus \{u,v,w\}$, then
  insert~$\{u,w\}$ and decrease~$k$ by one.
\end{rrule}
In the second rule, deleting an edge gives such a solution.
\begin{rrule}\label[rrule]{rule:almost-twin2}
  If~$G$ contains a~$P_3$~$uvw$ such that~$N(u)\setminus
  \{u,v,w\}=N(v)\setminus \{u,v,w\}$ and~$N(u) \cap N(w)=\{v\}$, then
  delete~$\{v,w\}$ and decrease~$k$ by one.
\end{rrule}
The third rule is crucial for showing an improved running time.
\begin{rrule}
  \label[rrule]{rule:clique-with-ends}
  If~$G$ contains a clique~$K$
  on at least three vertices such that
  \begin{itemize}
  \item
    every vertex in~$K$ has
    at most one neighbor in~$N(K)$ and
  \item
    every vertex in~$N(K)$ has
    exactly one neighbor in~$K$,
  \end{itemize}
  then delete all edges between~$K$ and~$N(K)$
  and decrement~$k$ by~$q:=|N(K)|$.
\end{rrule}

\begin{lemma}
  \label[lemma]{lem:clique-rule-correct}
  \cref{rule:clique-with-ends} is correct
  and can be exhaustively applied
  in $O(nm)$~time.
\end{lemma}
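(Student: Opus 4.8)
The plan is to prove correctness by showing $\tau(G)=\tau(G')+q$, where $q:=|N(K)|$ and $G'$~is obtained from $G$ by deleting the $q$~edges between $K$ and $N(K)$; equivalence of $(G,k)$ and $(G',k-q)$ is then immediate, since any solution for one instance extends to the other by adding or removing exactly these $q$ deletions. Note that the conditions force the $K$--$N(K)$ edges to form a matching, each $w\in N(K)$ having a unique neighbor $m(w)\in K$. Unlike \cref{rule:almost-twin,rule:almost-twin2}, this rule is \emph{not} a special case of \cref{rule:opt-graph-ce}, because two vertices of $N(K)$ placed in one cluster may have different neighborhoods outside $K\cup N(K)$; hence a direct argument is needed. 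The easy inequality is $\tau(G)\le\tau(G')+q$, so the crux is $\tau(G)\ge\tau(G')+q$, which I would establish by an exchange argument turning an optimal solution $S^*$ for $G$, with cluster graph $G^*:=G\symdiff S^*$, into a solution $S'$ of size $|S'|\le|S^*|$ that performs exactly the $q$~pendant-edge deletions and no other modification incident to $K$.

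For the construction I let $S'$~be the modification set whose cluster graph has $K$ as one cluster and, for every other cluster $C$ of $G^*$, the set $C\setminus K$ as a cluster. Then no modification occurs inside the clique $K$; the only $G$-edges leaving $K$ are the $q$~matching edges, so $S'$~deletes exactly these between $K$ and $V\setminus K$ and inserts none there; and deleting the vertices of $K$ leaves the partition of $V\setminus K$ unchanged, so the modifications of $S'$ inside $V\setminus K$ agree with those of $S^*$. Writing $a$, $b$, $c$ for the numbers of modifications of $S^*$ inside $K$, between $K$ and $V\setminus K$, and inside $V\setminus K$, we obtain $|S^*|=a+b+c$ and $|S'|=q+c$, so everything reduces to proving $q\le a+b$.

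This inequality is the main obstacle, and I would prove it by charging each pendant $w$ (with $v:=m(w)$) to a distinct $K$-incident modification of $S^*$, according to the cluster $C_w$ of $w$ in $G^*$. If $v\notin C_w$, then $\{v,w\}$ is deleted; if $v\in C_w$ and $C_w$ contains some other $v'\in K$, then $\{w,v'\}$ is inserted. In both cases the charged edge lies between $K$ and $V\setminus K$ and is incident to the distinct pendant $w$, so these charges are injective, accounting for at most $b$ pendants. The remaining pendants are exactly those with $C_w\cap K=\{v\}$, i.e.\ those whose $v$ forms a singleton block of the partition induced by $G^*$ on $K$; if $j$~is the number of singleton blocks, there are at most $j$~such pendants, while the number of deleted clique edges inside $K$ is at least $\binom{j}{2}+j(|K|-j)=j\bigl(|K|-\tfrac{j+1}{2}\bigr)$, which is at least $j$ because $|K|\ge 3$ and $j\le|K|$ give $|K|\ge\tfrac{j+3}{2}$. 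Since deletions inside $K$ are disjoint from modifications between $K$ and $V\setminus K$, adding the two bounds yields $q\le a+b$; note that the hypothesis $|K|\ge 3$ is precisely what makes the singleton case go through.

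For the running time I would exploit that every vertex of an applicable clique has at most one neighbor outside it, so such a $K$ satisfies $K\subseteq N[v]$ and $|K|\ge\deg_G(v)$ for each $v\in K$; thus $K$ is either $N[v]$ or $N[v]$ minus the single neighbor of $v$ that fails to be adjacent to the others. For each vertex $v$ I would identify these $O(1)$ candidates and verify the three defining conditions in $O\bigl(\sum_{u\in N[v]}\deg_G(u)\bigr)=O(m)$ time, giving $O(nm)$ in total. Since applying the rule only deletes edges and turns $K$ into an isolated clique (with empty outside neighborhood, hence no longer applicable), the only re-checks needed are for the few vertices whose neighborhoods changed, and charging this local rework to the deleted edges keeps the overall bound at $O(nm)$; the edge-deletion bookkeeping for exhaustiveness is routine.
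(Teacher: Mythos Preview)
Your proposal is correct, but the correctness argument follows a genuinely different route from the paper's. The paper observes directly that the $q$ pendant vertices, together with suitably chosen pairs of clique vertices, form $q$ induced $P_3$s that pairwise share at most one vertex (using $q\le|K|$ and $|K|\ge 3$ to select the second clique vertex in each $P_3$). Since every vertex pair of each such $P_3$ is incident to~$K$, any solution must contain at least $q$ modifications incident to~$K$, and these are disjoint from modifications inside $V\setminus K$; combined with the observation that the $q$ pendant deletions isolate $K$, this yields $\tau(G)=q+\tau(G[V\setminus K])$ in one stroke. Your exchange argument instead explicitly transforms an optimal solution into one that isolates $K$, and your charging scheme (with the singleton-block count bounded via $|K|\ge 3$) plays the role that the near-disjoint $P_3$-packing plays in the paper. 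The paper's argument is shorter once the packing is spotted; yours is more constructive and sidesteps the small combinatorial step of exhibiting such a packing.

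For the running time, the paper invokes a result from the literature that enumerates in $O(m)$ time all maximal cliques in which every vertex has at most one outside neighbor (noting that each vertex lies in at most one such clique), then checks the second condition by scanning adjacency lists; one full pass costs $O(m)$, and $O(n)$ applications give $O(nm)$. Your alternative of testing the at most two candidates $N[v]$ and $N[v]\setminus\{u\}$ for each~$v$ is also sound; the amortization you sketch goes through because $q_i\le|K_i|$ and the isolated cliques $K_i$ are pairwise disjoint, so $\sum_i q_i\le n$ and hence at most $n$ re-checks are ever triggered, each costing $O(m)$.
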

\begin{proof}
  First, enumerate the set of all maximal cliques~$K$
  on at least three vertices
  such that every vertex of~$K$
  has at most one neighbor in~$N(K)$.
  This can be done in $O(m)$~time \cite{KHMN09}.
  Note that every vertex is contained
  in at most one such clique.
  Now, by scanning
  through the adjacency lists of all vertices
  in an enumerated clique,
  we can identify a vertex that has
  more than one neighbor in the clique. If there is such a vertex, then the clique can be discarded.
  Otherwise, the clique fulfills the conditions
  of the rule and the rule can be applied.
  Thus, one application of the rule takes~$O(m)$ time.
  Since the rule decreases the number of vertices in~$G$,
  it can be applied~$O(n)$ times.

  Since~$q\le |K|$ and~$|K|\ge 3$,
  one can construct $q$~$P_3$s,
  each containing two vertices from~$K$
  and one vertex from~$N(K)$,
  such that no two of them share more than one vertex.
  Thus, at least $q$~edge modifications are needed
  to destroy all~$P_3$s
  that contain at least one vertex~$v\in K$.
  Deleting all $q$~edges between~$K$ and~$N(K)$
  destroys all $P_3$s that
  contain at least one vertex of~$K$.
  Moreover, since these edge deletions cut~$K$
  from the rest of the graph,
  one can safely combine
  any optimal solution for~$G[V\setminus K]$
  with these~$q$ edge deletions,
  which are necessary and sufficient
  to destroy all $P_3$s that contain
  at least one vertex of~$K$,
  to obtain an optimal solution
  that deletes all $q$~edges between~$K$ and~$N(K)$.
\qed\end{proof}
The final rule simply removes isolated clusters from~$G$. 
\begin{rrule}\label[rrule]{rule:isolated-clusters}  If~$G$ contains a connected component~$K$ that is a clique, then remove~$K$ from~$G$.
\end{rrule}

\noindent We can now show our improved algorithm for \CEPP{}.
\begin{theorem}
  \label{thm:ce-abv}
  \CEPP{} can be solved
  in $O(4^{\ell}\cdot \ell^3 + m + n)$~time.
\end{theorem}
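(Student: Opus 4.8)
The plan is to turn the generic framework of \cref{sec:cet} into a branching algorithm whose branching number, measured in the excess $\ell$, is exactly $4$. First I would preprocess by applying \cref{rule:almost-twin}, \cref{rule:almost-twin2} (together with the variant obtained by exchanging the roles of $u$ and $w$), \cref{rule:clique-with-ends}, and \cref{rule:isolated-clusters} exhaustively. Correctness of the first two follows from \cref{lem:opt-cluster-correct} and of the third from \cref{lem:clique-rule-correct}, and all of them are applied so as not to increase $\ell$: for the twin rules each forced modification is matched by removing one $P_3$ from $\packing$, and for \cref{rule:clique-with-ends} one first charges the $q$ forced deletions to the $q$ vertex-disjoint $P_3$s of $K$ exhibited in \cref{lem:clique-rule-correct} and adds them to $\packing$. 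Combining the bound $k\le(2t+1)\ell=3\ell$ from \cref{lem:ce-small-packing} (with $t=1$) with the known $2k$-vertex kernel for \textsc{Cluster Editing}, I would in $O(n+m)$ time either reject or reduce to an equivalent instance on $O(\ell)$ vertices, so that all subsequent branching operates on a small instance.

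The core is a branching rule on a packing $P_3$. If no $P_3$ remains, the graph is a cluster graph and we accept (assuming $\ell\ge 0$); otherwise I ensure $\packing\neq\emptyset$ by adding an arbitrary $P_3$, exactly as in the triangle case. Fix a packing $P_3$ $uvw$ with center $v$ and write $A:=N(u)\setminus\{u,v,w\}$, $B:=N(v)\setminus\{u,v,w\}$, and $C:=N(w)\setminus\{u,v,w\}$. Any solution destroying $uvw$ realizes one of five cluster patterns of $\{u,v,w\}$: all three together (insert $uw$), $\{uv\mid w\}$ (delete $vw$), $\{u\mid vw\}$ (delete $uv$), $\{uw\mid v\}$ (insert $uw$, delete both $uv$ and $vw$), or all separate (delete $uv$ and $vw$). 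The last two patterns already use at least two internal modifications, so removing the single packing $P_3$ drops $\ell$ by at least $1$. For each of the three single-modification patterns, inapplicability of the corresponding twin rule — \cref{rule:almost-twin} for ``all together'', and \cref{rule:almost-twin2} (and its symmetric variant) for the two split patterns — produces a witness vertex adjacent to a proper nonempty subset of the relevant vertices, so realizing that pattern forces at least one external modification incident to the witness. Branching explicitly on that external modification makes the pattern cost at least two modifications against a packing credit of one, so $\ell$ drops by at least $1$ here as well.

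The main obstacle is to organize these forced sub-branches so that the overall branching vector has branching number at most $4$, instead of the much larger number one obtains by naively splitting each single-modification pattern into its witness sub-cases. Here I would exploit that, in a reduced instance, $A$, $B$, and $C$ cannot all coincide and the twin conditions all fail, which so constrains the witness that the patterns interact. Concretely, I would show that at most four of the resolutions of the constant-size gadget induced by $\{u,v,w\}$ together with its witness decrease $\ell$ by exactly $1$, while every other resolution decreases $\ell$ by at least $2$; the resulting branching vector is then dominated by $(1,1,1,1)$ and has branching number $4$. Carrying out this case distinction carefully — matching each gadget resolution to a compatible optimal solution by the exchange argument underlying \cref{lem:opt-cluster-correct} — is the technically delicate step, and is where most of the work of the proof will go.

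Finally, I would assemble the running time. Since \cref{rule:almost-twin}--\cref{rule:isolated-clusters} never increase $\ell$ and every branch decreases it by at least $1$, the search tree has depth $O(\ell)$ and at most $4^\ell$ leaves. At each node I reapply the reduction rules to the current $O(\ell)$-vertex instance; the dominating cost is \cref{rule:clique-with-ends} at $O(nm)=O(\ell^3)$, plus $O(\ell^2)$ bookkeeping to locate a packing $P_3$ and its witness. Multiplying the per-node cost by the number of nodes and adding the one-time $O(n+m)$ preprocessing yields the claimed $O(4^\ell\cdot\ell^3+m+n)$ bound.
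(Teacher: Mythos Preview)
Your overall architecture is right, and the preprocessing via \cref{rule:almost-twin}--\cref{rule:isolated-clusters} matches the paper.  But the branching strategy you outline is different from the paper's, and the step you flag as ``technically delicate'' is a genuine gap rather than a routine case analysis.

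You branch on the five cluster patterns of a packing $P_3$ $uvw$ and then, for each single-modification pattern, sub-branch on a witness.  A straightforward count of the $\ell$-decrements gives more than four branches at depth one: pattern~5 contributes one; pattern~1 contributes one $\ell{-}1$ sub-branch and one $\ell{-}2$ sub-branch; but for patterns~2 and~3 the witness (a vertex~$z$ adjacent to exactly one of $u,v$, say) yields \emph{two} symmetric $\ell{-}1$ sub-branches (put $z$ with $\{u,v\}$ or not), since only one external modification is forced either way.  That already gives six $\ell{-}1$ branches and hence branching number at least~$6$, not~$4$.  You assert that ``the patterns interact'' so that at most four $\ell{-}1$ resolutions survive, but you give no mechanism for the collapse, and I do not see one in the cluster-pattern framework.

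The paper does something structurally different.  It never branches on cluster patterns.  Instead it gives four concrete branching rules, each triggered by a specific local configuration near a packing $P_3$ (a $P_3$ meeting each packing $P_3$ in at most one vertex; a vertex adjacent to $u,w$ but not $v$; vertices $x$ adjacent to $u,v$ and $y$ adjacent to exactly one of $u,v$; a vertex adjacent to $v$ only), and in each rule branches directly on at most four explicit edge modifications, each decreasing $\ell$ by one.  The essential extra ingredient you are missing is a \emph{termination argument}: the paper shows, by a case analysis, that if the instance is reduced with respect to \cref{rule:almost-twin}--\cref{rule:isolated-clusters} and none of the four branching rules applies, then the remaining graph has maximum degree two and can be solved optimally in polynomial time via a matching.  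This termination case is what lets the paper avoid having to branch with number~$\le 4$ on every packing $P_3$; your scheme, by contrast, must achieve branching number~$4$ unconditionally, and that is exactly what you have not established.
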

\begin{proof}
  We prove the theorem using a branching algorithm,
  which applies several branching rules.
  Herein, we assume
  that $\packing$~contains at least one~$P_3$~$uvw$.
  Otherwise,
  the graph is either~$P_3$-free
  (in this case we are done)
  or we can add a~$P_3$ to~$\packing$,
  which increases~$|\packing|$ by one and thus
  reduces the parameter.
  Furthermore, we assume
  that \cref{rule:almost-twin,rule:almost-twin2,%
    rule:clique-with-ends,rule:isolated-clusters}
  do not apply.
  First, we take care of~$P_3$s
  that do not share an edge with a packing~$P_3$.
  
  \emph{Branching Rule~1: If there is an induced~$P_3$
    that contains at most one vertex
    of each $P_3$ in~$\packing$, then
  branch into the three cases
  to destroy this~$P_3$.}

  None of the cases destroys a~$P_3$ of~$\packing$.
  Thus, the parameter is decreased by one in each case;
  the branching number is 3.

  The three further rules deal with packing~$P_3$s $uvw$.

  \emph{Branching Rule~2: If there is a vertex~$x$ that is adjacent to~$u$ and~$w$ but not
    to~$v$, then branch into four cases: delete~$\{u,x\}$; delete~$\{w,x\}$;
    add~$\{v,x\}$; or delete~$\{u,v\}$ and~$\{v,w\}$ and add~$\{u,w\}$.}
  
  In each of the first three branches, $k$~is reduced by one without
  destroying any~$P_3$ of~$\packing$. If none of the first three cases
  applies, then~$u$, $w$, and~$x$ are in the same cluster (no edge
  deletions between~$u$ or~$w$ and~$x$) and~$v$ is not in this
  cluster.
  This makes the three edge modifications
  in~$G[\{u,v,w\}]$ necessary.
  Thus, $k$~is reduced by three and~$|\packing|$ is
  reduced by one in this case.
  The resulting branching vector is~$(1,1,1,2)$,
  which gives the branching number~$3.31$.

  \emph{Branching Rule~3: If there are vertices~$x$ and~$y$ such that
    \begin{itemize}
    \item $x$ is adjacent to~$u$ and~$v$, and
    \item $y$ is adjacent to exactly one vertex
      of~$\{u,v\}$, 
    \end{itemize}
    then branch into four cases: delete~$\{u,x\}$; delete~$\{v,x\}$; delete the edge
    between~$y$ and its neighbor in~$\{u,v\}$; or add the edge between~$y$ and its
    nonneighbor in~$\{u,v\}$.
  }

  In each case, an edge is modified without destroying
  any~$P_3$ of~$\packing$. If none of the first three cases applies,
  then~$u,v,x$, and $y$ are in the same cluster,
  which means that the missing edge
  between~$y$ and either~$u$ or~$v$
  has to be added.
  Since the parameter is reduced by one in each branch, the branching
  number is~$4$.
 
  \emph{Branching Rule~4: If there is a vertex~$x$ that is adjacent to~$v$ and not
    adjacent to~$u$ and~$w$, then branch into four cases:
  delete~$\{v,x\}$; add~$\{u,x\}$; add~$\{w,x\}$;
  or delete~$\{u,v\}$ and~$\{u,w\}$.}

  If none of the first three cases applies,
  then any cluster containing~$v$
  contains neither~$u$ nor~$w$,
  thus the branching is correct.
  The parameter is reduced by one in each
  branch, as the last branch destroys a $P_3$ of~$\packing$ but
  reduces~$k$ by two. Thus, the branching number is~$4$.

  These are the only branching rules that are performed. We now show, by a case
  distinction, the following: If none of the branching rules and reduction rules applies,
  then the remaining graph has maximum degree two and we can solve the problem in
  polynomial time.

  \emph{Case I: $\packing$ contains a~$P_3$ $uvw$ such that~$u$ and~$w$ have a common
    neighbor~$x\neq v$.} Since Branching~Rule~2 does not apply, $x$ is also a
  neighbor of~$v$. Since Branching~Rule~3 does not apply, we have that every other vertex~$y$ that is adjacent to~$u$ is also adjacent to~$v$ and vice versa. Similarly, every other vertex~$y$ that is adjacent to~$w$ is also adjacent to~$v$. Thus,~$u$, $v$,
  and~$w$ have the same neighbors in~$V\setminus \{u,v,w\}$. 
This contradicts our assumption that
\cref{rule:almost-twin} does not apply.

  \emph{Case II: $\packing$ contains a~$P_3$ $uvw$ such that
    $v$ has degree at least three.}
  Since Branching~Rule~4 does not apply,
  each vertex~$x\in N(v)\setminus \{u,w\}$ is a neighbor
  of~$u$ or~$w$ and,
  since Case~I does not apply,
  it is not a neighbor of both.
  Moreover, since Branching~Rule~3 does not apply,
  $v$~can have common neighbors with at most one
  of~$u$ and~$w$.
  Thus, without
  loss of generality, $u$ and~$v$ have the same neighborhood
  in~$V\setminus \{u,v,w\}$ and~$v$ and~$w$ have no common neighbors.
  This contradicts our assumption that 
  \cref{rule:almost-twin2} does not apply.

  \emph{Case III: $\packing$ contains a~$P_3$ $uvw$ such that~$u$ has degree at least three.}  
  Consider~$p$ and~$q$ from~$N(u)\setminus \{v\}$.
  Since Case~II does not apply, $p$ and~$q$ are not middle vertices
  of a $P_3$ in~$\packing$.
  Moreover, since Case~I does not apply, 
  $p$ and~$q$ are not from the same~$P_3$ of~$\packing$.
  Consequently, if $G[\{u,p,q\}]$~is a~$P_3$, then Branching Rule~1 applies.
  This implies that $G[N[u]\setminus \{v\}]$~is a clique~$K$ of size at least three.
  We now show the following claim, which contradicts our assumption
  that \cref{rule:clique-with-ends} does not apply.
  \begin{quote}
    \emph{Claim:} Each vertex of~$K$ has at most one neighbor in~$V\setminus K$ and every
    vertex in~$N(K)$ has at most one neighbor in~$K$.
  \end{quote}
  First, observe that no vertices from~$K$ are middle vertices of a~$P_3$
  in~$\packing$. Thus, $K$~contains at most one vertex~$x$ from each~$P_3$ of~$\packing$
  and this vertex~$x$ is not a middle vertex. For each such~$x\in K$ from a packing~$P_3$
  $xyz$, the same conditions apply as to~$u$, thus~$G[N[x]\setminus \{y\}]$ is a
  clique~$K'$. This implies~$K=K'$: Since $G[N[u]\setminus \{v\}]$ is a
  clique, $x$~is adjacent to every vertex in~$N[u]\setminus \{v\}$ and since
  $G[N[x]\setminus \{y\}]$ is a clique,~$u$~is adjacent to every vertex in $N[x]\setminus
  \{y\}$. Summarizing, each vertex from~$K$ that is in a~$P_3$ of $\packing$ has exactly
  one neighbor outside of~$K$, this neighbor is a middle vertex of the packing~$P_3$
  containing~$x$. These middle vertices have only one neighbor in~$K$ since they have
  degree two and~$K$ contains only one vertex from each packing~$P_3$.

  Now let~$x$ denote a vertex of~$K$ that is not contained in any~$P_3$ of~$\packing$. We
  show that~$N[x]=K$, which implies the claim. Since the middle vertices of each~$P_3$
  of~$\packing$ have no neighbors outside of this~$P_3$ and since Branching~Rule~2 does not apply,
  we have that~$x$ is adjacent to at most one vertex of each packing~$P_3$. Thus,
  $G[N[x]]$ is a clique~$K'$ as otherwise, Branching~Rule~1 applies. Again, $K'=K$ as
  $G[N[u]\setminus \{v\}]$ being a clique implies that~$x$ is adjacent to every vertex
  in~$N[u]\setminus \{v\}$ and $G[N[x]]$ being a clique implies that~$u$ is adjacent to
  every vertex in $N[x]$.

  \emph{Case IV: otherwise.}
  We show that every vertex has degree at most two.
  This is true for the middle vertices of~$P_3$s in~$\packing$
  as Case~II does not apply.
  This also holds for the endpoints of~$P_3$s in~$\packing$
  as Case~III does not apply.
  We now argue that every other vertex~$x$ cannot have two neighbors. 

The vertex~$x$ has at least one
  neighbor~$u$ from some~$P_3$ $uvw$ of~$\packing$: $x$~is contained in at least one~$P_3$
  because the instance is reduced with respect to \cref{rule:isolated-clusters} and
  this~$P_3$ contains at least two vertices of some~$P_3$ of the packing because Branching~Rule~1
  does not apply.  If~$x$ has a further
  neighbor~$y$, then~$y\neq v$ (since Case II does not apply) and~$v\neq w$ since (Case~I) does not apply.
  Consequently, $u$, $x$, and~$y$ form a triangle (since Branching~Rule~1 does not
  apply). 
  Thus, $u$~has two neighbors outside of his~$P_3$ of~$\packing$,
  which means that Case~III applies.

  Thus, $G$~has maximum degree two and does not contain isolated triangles because the instance is reduced with respect to \cref{rule:isolated-clusters}. In this case, an optimal solution
  can be obtained by computing a maximum matching~$M$ and then
  deleting all edges of~$G$ that are not in~$M$.
 
  Altogether,
  the above considerations imply
  a search tree algorithm
  with search tree size~$O(4^{\ell})$. 
  After an initial kernelization, which,
  due to \cref{cor:ce-abv-cost-t}, runs in $O(m+n)$~time
  for~$t=1$, the instance has~$O(\ell)$ vertices.
  Thus, the steps at each search tree node
  including the reduction rules
  can be performed in~$O(\ell^3)$~time.
  \qed
\end{proof}

\section{Hardness Results for Edge Deletion and Vertex Deletion Problems}
\label{sec:hardness}
In this section, we show edge modification problems and vertex deletion problems that are NP-hard even for small forbidden induced subgraphs and if~$\ell=k-|\packing|$, where \(\packing\) is a vertex-disjoint packing of forbidden induced subgraphs.  We also show that algorithms for \textsc{Vertex Cover} parameterized above lower bounds do not generalize to \(d\)-\textsc{Hitting Set}---the natural generalization of \textsc{Vertex Cover} to hypergraphs.
\subsection{Hard edge deletion problems}
\begin{theorem}\label{k6 hard}
For every fixed $q\ge 6$, \gfdev{K_q} is NP-hard for~$\ell=0$.
\end{theorem}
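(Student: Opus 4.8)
The plan is to reduce from \POITS{} (that is, 3-SAT), assuming as in \cref{cons:girlanden} that every clause contains exactly three pairwise distinct variables. Given a formula $\phi$ with variables $x_1,\dots,x_n$ and clauses $C_1,\dots,C_m$, I would construct a graph $G$ together with a vertex-disjoint packing $\packing=\{C_1',\dots,C_n'\}$ consisting of one clique per variable, and set $k:=n$. Since each packing graph is a $K_q$ with $\tau(K_q)=1$, we then have $h(\packing)=n=k$ and hence $\ell=0$, as required.

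For each variable $x_i$ I create a clique $C_i'$ isomorphic to $K_q$ and distinguish two vertex-disjoint edges $a_i$ (``true'') and $b_i$ (``false'') inside it, which is possible because $q\ge 4$. The intended meaning is that an optimal solution deletes exactly one edge of each $C_i'$, where deleting $a_i$ encodes $x_i=\true$ and deleting $b_i$ encodes $x_i=\false$. For a clause $C_j$ over variables $x_{i_1},x_{i_2},x_{i_3}$ I create one further clique $D_j$ isomorphic to $K_q$, whose vertices are the two endpoints of the relevant literal edge of each of the three variables (namely $a_{i_t}$ if $x_{i_t}$ occurs positively in $C_j$, and $b_{i_t}$ if it occurs negatively) together with $q-6$ private filler vertices, all made pairwise adjacent. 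Here $q\ge 6$ is exactly what is needed to host three vertex-disjoint literal edges inside a single $K_q$. The clique $D_j$ is \emph{not} added to $\packing$, and its only edges lying inside a packing clique are the three literal edges it shares with $C_{i_1}',C_{i_2}',C_{i_3}'$.

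The structural claim I would establish first is that the only induced $K_q$s in $G$ are the $C_i'$ and the $D_j$: filler vertices of distinct clauses are never made adjacent, and each clause gadget meets a variable gadget only in the two endpoints of a single literal edge, so no clique can combine vertices of two different gadgets beyond one shared literal edge. Granting this, correctness follows from the $\ell=0$ bookkeeping. For the forward direction, from a satisfying assignment I delete $a_i$ if $x_i$ is true and $b_i$ otherwise; this destroys every $C_i'$, and since each clause has a satisfied literal, its gadget $D_j$ loses the corresponding shared edge, so $G$ becomes $K_q$-free using exactly $n=k$ deletions. For the converse, a solution of size $n$ must, since the $n$ cliques $C_i'$ are vertex-disjoint and each needs at least one deletion, delete exactly one edge inside each $C_i'$ and nothing else. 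Reading off $x_i=\true$ when $a_i$ is deleted, $x_i=\false$ when $b_i$ is deleted, and arbitrarily otherwise, every $D_j$---which must be destroyed and whose only deletable edges are its three shared literal edges---witnesses a satisfied literal, so $\phi$ is satisfiable.

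I expect the main obstacle to be exactly the structural claim that no \emph{unintended} $K_q$ arises from the edges added between gadgets; this is why the clause gadget attaches to each variable gadget through only a single literal edge and why filler vertices of distinct gadgets are kept nonadjacent. A secondary point to check is that deleting a ``garbage'' edge of some $C_i'$ (neither $a_i$ nor $b_i$) can never help to destroy a clause gadget, so that such a deletion merely corresponds to a variable that satisfies no clause and introduces no inconsistency in the extracted assignment.
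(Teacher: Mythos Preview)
Your proposal is correct and follows essentially the same route as the paper: the paper's \cref{k6 reduction} builds one $K_q$ per variable (with two disjoint distinguished edges) forming the packing, and one $K_q$ per clause that shares the three relevant literal edges with the variable cliques, then argues via a structural lemma (\cref{only generated cliques}) that these are the only induced~$K_q$s and carries out the same $\ell=0$ bookkeeping you describe. Your handling of the backward direction (reading off an arbitrary value when a non-literal edge is deleted) is a minor variant of the paper's ``without loss of generality replace it by a literal edge'' step, but the substance is identical.
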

\noindent We prove \cref{k6 hard} by giving a %
reduction from \POITS{} (\cref{prob:poits}).

\begin{construction}\label[construction]{k6 reduction}
  Let $\phi$ be a Boolean formula
  with variables~$x_1,\ldots,x_n$
  and clauses~$C_1,\allowbreak\ldots,C_m$.
  We assume that each clause~$C_j$ contains exactly three pairwise distinct variables.
  We create a graph~$G$ and a vertex-disjoint $K_q$-packing~$\packing$ as follows.
  
  For each variable~$x_i$, add a clique $X_i$ on~$q$ vertices to~$G$ that has two distinguished disjoint edges~$x_i^\text{F}$ and~$x_i^\text{T}$.  For each clause~$C_j=(l_1\wedge l_2\wedge l_3)$ with literals $l_1,l_2$, and $l_3$, add a clique $Y_j$ on~$q$ vertices to~$G$ that has three distinguished and pairwise disjoint edges~$e_{l_1},e_{l_2}$, and~$e_{l_3}$ (which exist since~$q \ge 6$).  Finally, if $l_t=x_i$, then identify the edge~$e_{l_t}$ with~$x_i^\text{T}$ and if $l_t=\neg x_i$, then identify the edge~$e_{l_t}$ with~$x_i^\text{F}$.
  The packing~$\packing$ consists of all~$X_i$ introduced for the variables~$x_i$ of~$\phi$.
\end{construction}

\begin{lemma}\label[lemma]{only generated cliques}
  Let $G$~be the graph output by \cref{k6 reduction} and let $H$~be
  an induced $K_q$ in~$G$. Then, $H$~is either one of the~$X_i$ or one of the~$Y_j$.
\end{lemma}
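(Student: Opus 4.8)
The plan is to exploit the fact that, by \cref{k6 reduction}, the graph $G$ is nothing but the union of the $q$-vertex cliques $X_1,\dots,X_n$ and $Y_1,\dots,Y_m$, glued together only along identified distinguished edges, with no further edges added. The first step is therefore to make the adjacency structure explicit: two vertices $u,v$ are adjacent in $G$ if and only if some gadget clique ($X_i$ or $Y_j$) contains both of them, because every edge of $G$ is internal to one of these cliques. Consequently, every edge of an induced $K_q$ subgraph $H$ lies inside at least one gadget, and it suffices to prove that the vertex set $W$ of $H$ is contained in a single gadget: since each gadget has exactly $q$ vertices and $|W|=q$, containment forces equality and yields the claim.

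Second, I would record precisely how the gadgets overlap. The variable cliques are pairwise vertex-disjoint, since their distinguished edges $x_i^{\mathrm{T}},x_i^{\mathrm{F}}$ are only ever identified with clause edges and never with one another. A variable clique $X_i$ and a clause clique $Y_j$ share vertices only if $C_j$ contains $x_i$, and then they share exactly the two endpoints of one distinguished edge, because each clause contains every variable at most once. Thus the only vertices lying in more than one gadget are endpoints of identified distinguished edges; call these the \emph{shared} vertices and all other vertices \emph{private}. The key elementary fact is that a private vertex $u$ of a gadget $C$ satisfies $N_G(u)\cup\{u\}=V(C)$, as it is adjacent precisely to the remaining vertices of its unique gadget.

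This immediately disposes of the easy case. If $W$ contains a private vertex $u$ of some gadget $C$, then every vertex of the clique $H$ lies in $N_G(u)\cup\{u\}=V(C)$, so $W\subseteq V(C)$ and hence $W=V(C)$; thus $H$ is one of the $X_i$ or $Y_j$. Since $q\ge 6$, each variable clique has $q-4\ge 2$ private vertices, so the remaining situation is that $W$ consists \emph{entirely} of shared vertices, each being an endpoint of a unique distinguished edge. Here I would analyse the set of distinguished edges met by $W$, using that two endpoints of distinct distinguished edges are adjacent only when those two edges co-occur inside a common gadget, and that a clause gadget carries distinguished edges of three pairwise distinct variables (so no single $Y_j$ contains both $x_i^{\mathrm{T}}$ and $x_i^{\mathrm{F}}$) while distinct gadgets meet only along identified edges.

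I expect this last, all-shared case to be the main obstacle. Shared vertices have large, mutually overlapping neighbourhoods, so the clean ``closed-neighbourhood equals gadget'' argument no longer applies, and one must instead argue combinatorially that $q\ge 6$ pairwise-adjacent shared vertices cannot be assembled from fragments of several gadgets whose pairwise intersections are single identified edges. The careful point is to control exactly which collections of distinguished edges can simultaneously co-occur in one gadget, so as to conclude that all of $W$'s distinguished edges in fact lie in a common $X_i$ or $Y_j$; this is where the hypothesis $q\ge 6$ and the disjointness properties established in the second step must be used most delicately.
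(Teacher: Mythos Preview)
Your approach is more careful than the paper's: the paper simply asserts that for any two gadgets $C,C'$ there are no edges between $V(C)\setminus V(C')$ and $V(C')\setminus V(C)$ and concludes directly, whereas you split off the private-vertex case (which you handle cleanly) and isolate the all-shared case as the crux. You are right to be wary of that case, but not for the reason you suspect: the all-shared case is not merely delicate; it is where the statement itself breaks down.

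Take $q=6$ and a formula containing the two clauses $C_1=(x_1\vee x_2\vee x_3)$ and $C_2=(\neg x_1\vee x_2\vee x_3)$. Write $x_1^{\text{T}}=\{a,b\}$, $x_1^{\text{F}}=\{c,d\}$, and $x_2^{\text{T}}=\{e,f\}$. Then $a,b,c,d\in V(X_1)$ are pairwise adjacent, $a,b,e,f\in V(Y_1)$ are pairwise adjacent, and $c,d,e,f\in V(Y_2)$ are pairwise adjacent; hence $W=\{a,b,c,d,e,f\}$ induces a $K_6$. But $W$ equals no gadget: it differs from $X_1$ (which contains two private vertices in place of $e,f$), from $Y_1$ (which contains the endpoints of $x_3^{\text{T}}$ in place of $c,d$), and from $Y_2$ (similarly, in place of $a,b$). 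So the lemma, as stated, is false, and your all-shared analysis cannot be completed. The same example also refutes the paper's nonadjacency claim: with $i=1$ and $j=1$ one has $c\in V(X_1)\setminus V(Y_1)$ and $e\in V(Y_1)\setminus V(X_1)$, yet $c$ and $e$ are adjacent via $Y_2$.
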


\begin{proof}
  First, note that the~$X_i$ are pairwise vertex-disjoint since \cref{k6 reduction} only identifies edges of~$Y_j$s with edges of~$X_i$s and no edge in any~$Y_j$ is identified with edges in different~$X_i$.
  For any~$X_i$ and~$Y_j$, the vertices in~$V(X_i)\setminus V(Y_j)$
  are nonadjacent to those in~$V(Y_j)\setminus V(X_i)$.  Similarly,
  for $Y_i$ and~$Y_j$, the vertices in~$V(Y_i)\setminus V(Y_j)$ are
  nonadjacent to those in~$V(Y_j)\setminus V(Y_i)$ for~$i\neq j$.  Thus, every
  clique in~$G$ is entirely contained in one of the~$X_i$ or~$Y_j$.%
\qed\end{proof}

\noindent \cref{only generated cliques} allows us to prove \cref{k6 hard}.

\begin{proof}[of \cref{k6 hard}]
  We show that $\phi$ is satisfiable if and only if~$G$ can be made $K_q$-free by~$k=|\packing|$~edge deletions (that is, $\ell=0$).

  First, assume that there is an assignment that satisfies~$\phi$.  We construct a \gfdeset{K_q}~$S$ for~$G$ as follows: if the variable~$x_i$ is set to true, then put~$x_i^\text{T}$ into~$S$.  If the variable~$x_i$ is set to false, then add~$x_i^\text{F}$ to~$S$.  Thus, for each~$X_i$, we add exactly one edge to~$S$.  Since~$\packing$~consists of the~$X_i$, we have $|S|=|\packing|$.  Moreover, since each clause~$C_j$ contains a true literal, at least one edge of each~$Y_j$ is contained in~$S$.  Thus, $G\setminus S$ is $K_q$-free, since, by \cref{only generated cliques}, the only $K_q$s in~$G$ are the~$X_i$ and~$Y_j$ and, for each of them, $S$~contains at least one edge.

  Now, assume that $G$~can be made $K_q$-free by deleting a set~$S$ of $|\packing|$~edges.  Then, $S$~deletes exactly one edge of each~$X_i$ and at least one edge of each~$Y_j$.
  We can assume without loss of generality that~$S$ contains either the edge~$x_i^\text{T}$ or~$x_i^\text{F}$ for each~$X_i$ since deleting one of these edges instead of another edge in~$X_i$ always yields a solution by \cref{k6 reduction}.
  Thus, the deletion set~$S$ corresponds to a satisfying assignment for~$\phi$.
\qed\end{proof}

\subsection{Hard vertex deletion problems}
\label{sec:vertex-del}
In this section,
we show NP-hardness of the problem
of destroying all induced paths~\(P_q\) on $q\ge 3$~vertices
by at most~$|\packing|$ vertex deletions
if a packing~\(\packing\) vertex-disjoint induced~$P_q$s
in the input graph~$G$ is provided as input.

\decprob{\PLVD}%
{A graph~$G=(V,E)$, a vertex-disjoint packing~$\packing$ of induced \(P_q\)s, and a natural number~$k$.}%
{Is there a vertex set~\(S\subseteq V\) of size at most~$k$ such that~$G[V\setminus S]$ does not contain~\(P_q\) as induced subgraph?}

\begin{theorem}\label{thm:pl-hard}
 For every fixed~$q\ge 3$, \PLVD{} is NP-hard even if~$\ell=0$.
\end{theorem}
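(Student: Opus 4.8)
The plan is to prove \cref{thm:pl-hard} by a reduction that mirrors the structure of \cref{k6 reduction}, since we again want to encode a satisfiability problem so that a packing of forbidden induced subgraphs gives a tight lower bound. I would reduce from \realPOITS{} (or ordinary \POITS{}), where the goal is to decide satisfiability of a $3$-CNF formula $\phi$ with variables $x_1,\dots,x_n$ and clauses $C_1,\dots,C_m$. The key idea is to build a graph $G$ together with a vertex-disjoint packing $\packing$ of induced $P_q$s so that destroying all induced $P_q$s with exactly $|\packing|$ vertex deletions (that is, $\ell=0$) is possible precisely when $\phi$ is satisfiable. Each packing $P_q$ will admit exactly one deletion in any budget-$|\packing|$ solution, and the choice of which vertex to delete in a variable gadget will encode the truth value of that variable.

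First I would design a \emph{variable gadget}: for each variable $x_i$, introduce an induced $P_q$, say $X_i$, that is placed into $\packing$, and identify two distinguished vertices (or short subpaths) of $X_i$ playing the roles of the ``true'' and ``false'' choices, analogous to the distinguished edges $x_i^{\text{T}}$ and $x_i^{\text{F}}$ in \cref{k6 reduction}. The intent is that in any minimum solution exactly one vertex of $X_i$ is deleted, and forcing this deleted vertex to be one of the two distinguished vertices corresponds to assigning a truth value to $x_i$. Next I would design a \emph{clause gadget}: for each clause $C_j$, build induced $P_q$ structures attached to the three literal-vertices of its variables so that the only way to destroy all the clause's induced $P_q$s \emph{without} spending extra deletions (beyond the one per variable gadget) is if at least one of its literals is set to the satisfying value. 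The attachment should be arranged so that a literal being satisfied removes a vertex already accounted for in the variable-gadget deletion, while an unsatisfied clause would force an additional deletion, pushing the cost above $|\packing|$.

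The two directions of the equivalence would then go as follows. For the forward direction, given a satisfying assignment I delete, in each variable gadget $X_i$, the distinguished vertex corresponding to its truth value; I would argue that this destroys all induced $P_q$s in every $X_i$, and that, because each clause has a satisfied literal, the corresponding deletion also destroys all induced $P_q$s in the clause gadget, so the total number of deletions is exactly $|\packing|$. For the converse, given a solution $S$ of size $|\packing|$, I use vertex-disjointness of the packing to conclude that $S$ contains exactly one vertex from each packing $P_q$; a normalization argument (as in \cref{k6 hard}, where an arbitrary deletion can be moved to a distinguished edge) shows that we may assume each deleted vertex in a variable gadget is one of the two distinguished vertices, yielding a well-defined assignment, and the fact that all clause $P_q$s are destroyed within this budget forces every clause to be satisfied.

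The main obstacle, and the part that needs the most care, is the gadget construction itself: unlike cliques, induced paths $P_q$ are fragile, so I must ensure that \emph{no unintended induced $P_q$s} appear in $G$ (the analogue of \cref{only generated cliques}), and that the gadgets for different variables and clauses only interact through the intended identified vertices. In particular, connecting a $P_q$ to another $P_q$ can easily create longer induced paths or new induced $P_q$s that would spoil the lower bound $h(\packing)=|\packing|$ or the tightness of the reduction. I would therefore prove a structural lemma characterizing all induced $P_q$s of $G$ (showing each lies within a single variable or clause gadget, or is exactly one of the intended connecting paths), and verify that deleting one distinguished vertex per gadget cleanly eliminates them. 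Handling the uniform parameter $q\ge 3$, rather than a single fixed $q$, adds bookkeeping because the gadget sizes and the positions of the distinguished vertices must scale with $q$ while preserving the ``exactly one deletion per packing path'' property.
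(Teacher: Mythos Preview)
Your plan has a real gap: a single packing~$P_q$ per variable cannot cope with a variable that occurs in several clauses. With budget~$|\packing|$ you delete exactly one vertex from each~$X_i$, but if $x_i$ appears in clauses~$C_{j_1},\dots,C_{j_r}$ then that one deleted vertex must simultaneously break the $P_q$s of all $r$ clause gadgets. This forces all those clause gadgets to be attached at the \emph{same} distinguished vertex of~$X_i$, which creates a high-degree vertex and a host of unintended induced~$P_q$s running through it; your promised structural lemma (``every induced $P_q$ lies in a single gadget'') would then fail. The normalization step you propose (moving an arbitrary deletion to a distinguished vertex) also breaks down once several clause gadgets hang off the same vertex, since moving the deletion may repair one clause~$P_q$ while creating another.

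The paper avoids exactly this by a different construction. It reduces from $q$\textsc{-SAT} (so each clause has $q$ literals and the clause gadget is literally an induced~$P_q$ on its $q$ literal vertices), and it replaces your single~$X_i$ by a \emph{variable cycle} of length~$4\occ(i)$ with alternating true/false vertices, one segment per occurrence of~$x_i$. Short attachment~$P_{q-2}$s glued to the cycle force every deletion to land on a cycle vertex, and the alternating structure then forces the deletions to be either all true vertices or all false vertices, propagating a consistent truth value to every occurrence. The packing consists of one~$P_q$ per segment (attachment path plus the two segment vertices), so the budget is exactly one deletion per segment rather than one per variable. Your outline would need this occurrence-duplication idea (or an equivalent consistency gadget) before the correctness argument can go through.
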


\noindent The reduction is from $q$\textsc{-SAT}:

\begin{construction}\label[construction]{Pl construction}
  Let $\phi$ be a Boolean formula
  with variables $x_1, \ldots , x_n$
  and clauses~$C_1,\allowbreak\ldots,C_m$.
  We assume that each clause~$C_j$
  contains exactly $q$~pairwise distinct variables.
  We construct a graph~$G$
  and a maximal vertex-disjoint packing~$\packing$ of~$P_q$s
  as follows; an illustration of the construction is given in Figure~\ref{fig:pq-cons}.

\begin{figure}[t]\centering
  
  \begin{tikzpicture}[x=0.9cm,y=0.9cm]
    \tikzstyle{packing} = [color=lightgray,line cap=round, line
    join=round, line width=15pt]
    \tikzstyle{deleted} = [dashed]
    \tikzstyle{inserted} = [very thick]
    \tikzstyle{vertex} = [color=black,fill=black,circle]
    \begin{scope}[rotate=30]
      \node[vertex,label=left:$v_{i}^{j-1,T}$] (x1) at (-50:3) {};
      \node[vertex,label=left:$v_{i}^{j-1,F}$] (x2) at (-30:3) {};
      \node[vertex,label=left:$v_{i}^{j,T}$] (x3) at (-10:3) {};
      \node[vertex,label=below left:$v_{i}^{j,F}$] (x4) at (10:3) {};
      \node[vertex,label=below left:$v_{i}^{j+1,T}$] (x5) at (30:3) {};
      \node[vertex,label=left :$v_{i}^{j+1,F}$] (x6) at (50:3) {};
      \node[vertex] (a11) at (-50:4) {};
      \node[vertex] (a21) at (-30:4) {};
      \node[vertex] (a41) at (10:4) {};
      \node[vertex] (a51) at (30:4) {};
      \node[vertex] (a61) at (50:4) {};
    \end{scope}

    \begin{scope}[shift={(10,0)}, rotate=150]
      \node[vertex] (y1) at (-50:3) {};
      \node[vertex] (y2) at (-30:3) {};
      \node[vertex,label=below right:$v_{r}^{j',T}$] (y3) at (-10:3) {};
      \node[vertex] (y4) at (10:3) {};
      \node[vertex] (y5) at (30:3) {};
      \node[vertex] (y6) at (50:3) {};
      \node[vertex] (b11) at (-50:4) {};
      \node[vertex] (b21) at (-30:4) {};
      \node[vertex] (b41) at (10:4) {};
      \node[vertex] (b51) at (30:4) {};
      \node[vertex] (b61) at (50:4) {};
    \end{scope}

    \begin{scope}[shift={(5,8)}, rotate=-90]
      \node[vertex] (z1) at (-50:3) {};
      \node[vertex] (z2) at (-30:3) {};
      \node[vertex] (z3) at (-10:3) {};
      \node[vertex,label=below right:$v_{s}^{j'',F}$] (z4) at (10:3) {};
      \node[vertex] (z5) at (30:3) {};
      \node[vertex] (z6) at (50:3) {};
      \node[vertex] (c11) at (-50:4) {};
      \node[vertex] (c21) at (-30:4) {};
      \node[vertex] (c31) at (-10:4) {};
      \node[vertex] (c51) at (30:4) {};
      \node[vertex] (c61) at (50:4) {};
    \end{scope}

    \foreach \s[evaluate=\s as \sp using int(\s+1)] in {1,...,5}
    {      
      \draw (x\s)--(x\sp.center);
      \draw (y\s)--(y\sp.center);
      \draw (z\s)--(z\sp.center);
    }
    
    \foreach \s in {1,2,4,5,6}
      {      
        \draw (x\s)--(a\s1);
        \draw (y\s)--(b\s1);
      }

      \foreach \s in {1,2,3,5,6}
      {      
        \draw (z\s)--(c\s1);
      }

    \draw (x3) edge[very thick,out=40,in=170] (y3); 
    \draw (y3) edge[very thick,out=160,in=280] (z4);

    \begin{pgfonlayer}{background}
        \draw[packing] (a11.center)--(x1.center)--(x2.center);
        \draw[packing] (a41.center)--(x4.center)--(x3.center);
        \draw[packing] (a51.center)--(x5.center)--(x6.center);
        \draw[packing] (b11.center)--(y1.center)--(y2.center);
        \draw[packing] (b41.center)--(y4.center)--(y3.center);
        \draw[packing] (b51.center)--(y5.center)--(y6.center);
        \draw[packing] (c11.center)--(z1.center)--(z2.center);
        \draw[packing] (c31.center)--(z3.center)--(z4.center);
        \draw[packing] (c51.center)--(z5.center)--(z6.center);
    \end{pgfonlayer}

  \end{tikzpicture}

  \caption{An illustration of Construction~\ref{Pl construction} for~$q=3$. The figure shows parts of the variable cycles for three variables~$x_i,x_r,x_s$ that occur in the clause~$C_t=(x_i\vee x_r\vee \neg x_s)$. The packing~$P_3$s are highlighted by a gray background.}
  \label{fig:pq-cons}

\end{figure}
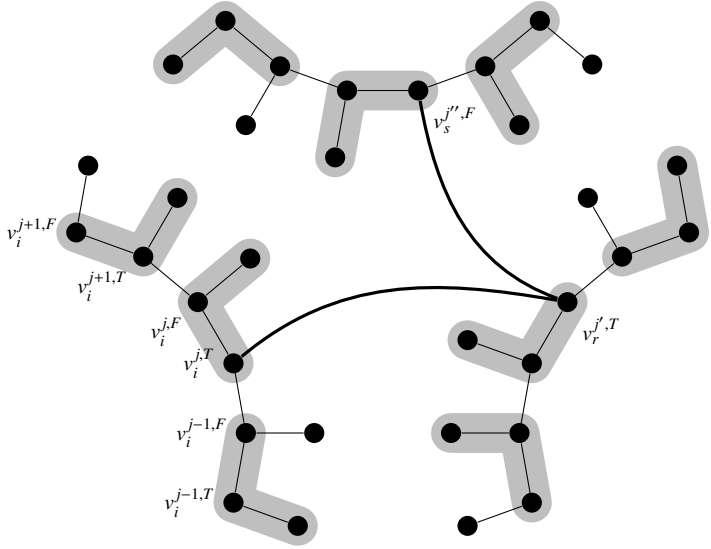

  First,
  we introduce variable gadgets,
  which will ensure that a solution to~\PLVD{}
  corresponds to an assignment of~$\phi$.
  In the following,
  let $\occ(i)$~denote
  the number of occurrences of~$x_i$ and~$\neg x_i$
  in clauses of~$\phi$.
  For each variable~$x_i$, add~$4\occ(i)$ vertices:
  $v_{i}^{j,\text{T}}$ and~$v_{i}^{j,\text{F}}$,
  where~$1\le j\le 2\occ(i)$.
  Call~$v_i^{j,\text{T}}$ a \emph{true vertex}
  and~$v_i^{j,\text{F}}$ a \emph{false vertex}.
  Create an induced cycle on the true and false vertices
  by adding the edge set
  \begin{align*}
    E_i:={}& \Bigl\{\{v_i^{j,\text{T}},v_i^{j,\text{F}}\}
             \mid 1\le j\le 2\occ(i)\Bigr\}
             \cup
             \Bigl\{\{v_i^{j,\text{F}},v_i^{j+1,\text{T}}\}
             \mid 1\le j< 2\occ(i)\Bigr\}
             \cup
             \Bigl\{\{v_i^{2\occ(i),\text{F}},v_i^{1,\text{T}}\}\Bigr\}.
  \end{align*}
  Call this cycle the \emph{variable cycle} of~$x_i$.
  
  Then, for each even~$j$,
  attach to~$v_{i}^{j,\text{T}}$ an induced~\(P_{q-2}\), that is,
  make one of its degree-one vertices adjacent to~$v_{i}^{j,\text{T}}$.
  Then, again for each even~$j$,
  attach to~$v_{i}^{j,\text{F}}$ an induced~\(P_{q-2}\)
  in the same fashion.
  These paths are called
  the \emph{attachment paths} of the $j$th segment
  of the variable cycle of~$x_i$.
  
  Now, for each variable~$x_i$,
  assign to each clause~$C_t$ containing~$x_i$ or~$\neg x_i$
  a unique number~$p \in \{1, \ldots , \occ(i)\}$.
  Consider the number~$j=2p-1$.
  We will use vertex~$v_i^{j,\text{T}}$ or~$v_i^{j,\text{F}}$
  to build the clause gadget for clause~$C_t$.
  If~$C_t$ contains the literal~$x_i$, then attach an
  induced~\(P_{q-2}\) to~$v_{i}^{j,\text{F}}$.
  Otherwise,
  attach an induced~\(P_{q-2}\) to~$v_{i}^{j,\text{T}}$.
  As above,
  call the path the \emph{attachment path}
  of the~$j$th segment of the cycle.
  Now, let~$\num_i^t:=v_i^{j,\text{T}}$
  if $C_t$~contains~$x_i$,
  and let $\num_i^t:=v_i^{j,\text{F}}$
  if $C_t$~contains~$\neg x_i$.
  Call these vertices the \emph{literal vertices} of clause~$C_t$,
  denoted~$\Pi_t$.
  The construction of~$G$ is completed as follows. For each~$\Pi_t$ add an arbitrary set of edges to~$G$ such that~$G[\Pi_t]$ is an induced~$P_q$.
  The $P_q$-packing~$\packing$ contains
  one (arbitrary) attachment path
  plus the two segment vertices
  from each segment
  of each variable cycle.
\end{construction}

\begin{proof}[of \cref{thm:pl-hard}]
  Let $G$ be the graph output by \cref{Pl construction} and let~$\packing$ be
  the $P_q$-packing.
  We show that $\phi$ has a satisfying assignment if and only if $G$ can be made
  $P_q$-free by exactly~$|\packing|$ vertex deletions (that is, $\ell=0$).

  Assume that~$\phi$ has a satisfying assignment.
  For each true variable~$x_i$ in this assignment,
  delete all true vertices in its variable gadget,
  that is, $v_i^{j,\text{T}}$ for $1\le j\le 2\occ(i)$.
  For each false variable~$x_i$ in this assignment,
  delete all false vertices in its variable gadget,
  that is, $v_i^{j,\text{F}}$ for $1\le j\le 2\occ(i)$.
  Denote this vertex set by~$S$ and observe that $|S|=|\packing|$.
  Moreover,
  observe that each vertex on the variable cycle for~$x_i$
  is either deleted
  or both of its neighbors on the cycle are deleted.
  Every~$P_q$ in~$G$
  contains at least one vertex from a variable cycle
  as the attachment paths are too short to induce~$P_q$s. Thus, to show~$P_q$-freeness of~$G[V\setminus S]$ it is sufficient to show that no vertex from a variable cycle is in a~$P_q$.

  Consider an undeleted vertex in the variable cycle for~$x_i$.
  Assume, without loss of generality,
  that this is a true vertex~$v_i^{j,\text{T}}$.
  If~$j$ is even, then~$v_i^{j,\text{T}}$ is not in a~$P_q$
  as its neighbors on the cycle are deleted
  and its only other neighbor is in an attachment path.
  If~$j$ is odd and the
  clause~$C_t$ corresponding to the~$j$th segment of the cycle
  contains~$\neg x_i$, then the only neighbor of~$v_i^{j,\text{T}}$ in~$G[V\setminus S]$
  is in an attachment path. It remains to show that~$v_i^{j,\text{T}}$ is not
  in a~$P_q$ if~$C_t$ contains~$x_i$. The only neighbors
  of~$v_i^{j,\text{T}}$ in~$G[V\setminus S]$ are in~$\Pi_t$.
  Observe that~$G[\Pi_t]$ is an induced~$P_q$
  and that,
  in~$G[V\setminus S]$,
  every vertex on this path is deleted
  or its neighbors in~$V\setminus \Pi_t$ are deleted.
  Hence, the connected
  component of~\(G[V\setminus S]\) containing~$v_i^{j,\text{T}}$
  is an induced subgraph
  of~$G[\Pi_t]$. Since the assignment is satisfying, at least one
  vertex of~$\Pi_t$ is deleted. Thus, this connected component has at
  most~$q-1$ vertices and does not contain a~$P_q$.
 
  Conversely,
  let~$S\subseteq V$ be a size-$|\packing|$ vertex set
  such that~$G[V\setminus S]$ is~$P_q$-free.
  First, observe that,
  without loss of generality,
  for each variable cycle
  either all true or all false vertices are deleted:
  No vertex in an attachment path~$P$ is deleted
  since it is always as good to delete
  the vertex in the variable cycle
  that has a neighbor in~$P$.
  Hence, at least one vertex of each segment is deleted
  since, otherwise,
  one of the~$P_q$'s in~$\packing$ is not destroyed.
  This already requires $|\packing|$~vertex deletions
  and thus \emph{exactly} one vertex for each segment
  of each variable cycle
  is deleted.
  Finally, by construction,
  every adjacent pair of vertices in the variable cycle
  forms a~$P_q$ with some attachment path.
  Therefore, one of the two vertices is deleted,
  which implies that either every even or every odd vertex
  of the cycle is deleted. 

  Hence, the vertex deletions in the variable cycle
  define a truth assignment~$\beta$ to~$x_1,\ldots , x_n$:
  If all true vertices of the variable cycle of~$x_i$ are deleted, 
  then set~$\beta(x_i):=\true$;
  otherwise, set~$\beta(x_i):=\false$.
  This assignment is satisfying:
  Since $G[V\setminus S]$~is $P_q$-free, for each clause~$C_t$,
  at least one vertex $\num_i^t$ of~$\Pi_t$ is deleted.
  Without loss of generality,
  let~$\num_i^t= v_i^{j,\text{T}}$, that is,
  $C_t$~contains the literal~$x_i$.
  Then, $\beta(x_i)=\true$ and thus $\beta$~satisfies clause~$C_t$.
\qed\end{proof}

\noindent \cref{thm:pl-hard} easily transfers to a hardness result for the generalization of \textsc{Vertex Cover} to \(d\)-uniform hypergraphs:

\decprob{\boldmath$d$-\textsc{Uniform Hitting
    Set with Packing}} {A hypergraph~$H=(V,E)$ with $|e|=d$ for
  all~$e\in E$, a set~$\packing\subseteq E$ of pairwise
  vertex-disjoint hyperedges, and an integer~$k$.}{Is there a vertex
  set~$V'\subseteq V$ of size at most~$k$ such that~$\forall e\in E:
  V'\cap e\neq \emptyset$?}

\noindent An instance \((G,\packing,k)\) of \PLVD{} can easily be transformed into an equivalent instance~\((H,\packing,k)\) of \(q\)-\textsc{Uniform Hitting Set with Packing} by taking the hypergraph~\(H\) on the same vertex set as~\(G\) having a hyperedge~\(e\) if and only if \(G[e]\) is a~\(P_q\).  The packing~\(\packing\) and~\(k\) stay unchanged, and so does~\(\ell\).  Thus, we obtain the following result:

\begin{corollary}\label[corollary]{cor:hs}
  For every~$d\ge 3$, $d$-\textsc{Uniform Hitting Set with Packing} is NP-hard even if~$\ell=0$.
\end{corollary}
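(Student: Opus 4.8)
The plan is to give a straightforward polynomial-time reduction from \PLVD{}, whose NP-hardness for~$\ell=0$ is established in \cref{thm:pl-hard}. Given an instance $(G,\packing,k)$ of \PLVD{} with $G=(V,E)$, I would construct the $q$-uniform hypergraph $H=(V,E')$ on the same vertex set, where a $q$-element set $e\subseteq V$ is a hyperedge of~$H$ if and only if $G[e]$ is an induced~$P_q$. Since $q=d$ is fixed, enumerating all $q$-element subsets of~$V$ and testing each for inducing a~$P_q$ takes $|V|^{O(1)}$ time, so the construction runs in polynomial time.

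The heart of the argument is the equivalence of solutions. A vertex set $S\subseteq V$ satisfies that $G[V\setminus S]$ is $P_q$-free if and only if $S$ intersects every hyperedge of~$H$: indeed, $G[V\setminus S]$ contains an induced~$P_q$ precisely when there is a $q$-element set $e\subseteq V\setminus S$ with $G[e]$ an induced~$P_q$, that is, precisely when some hyperedge~$e$ of~$H$ satisfies $e\cap S=\emptyset$. Thus $S$ is a solution of \PLVD{} on~$G$ if and only if it is a hitting set for~$H$, and the two instances have solutions of exactly the same sizes.

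It then remains to check that the packing and the parameter~$\ell$ transfer unchanged. Each induced~$P_q$ in~$\packing$ is by definition a hyperedge of~$H$, and the vertex-disjointness of the members of~$\packing$ is preserved verbatim, so $\packing$ is a valid set of pairwise vertex-disjoint hyperedges for the hitting-set instance. Keeping~$k$ unchanged, the lower bound on both sides equals~$|\packing|$: every vertex-disjoint~$P_q$ forces at least one vertex deletion, and every vertex-disjoint hyperedge forces at least one hitting-set vertex. Hence $\ell=k-|\packing|$ is identical in the two instances, so the reduction maps $\ell=0$ instances to $\ell=0$ instances, and NP-hardness of $d$-\textsc{Uniform Hitting Set with Packing} for $\ell=0$ follows for every $d=q\ge 3$.

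Since the reduction is essentially a change of viewpoint rather than a new construction, there is no substantial obstacle here. The only point requiring a moment's care is confirming that the two packing lower bounds coincide exactly (both equal~$|\packing|$), so that the value of~$\ell$ is \emph{preserved} rather than merely bounded; this is what lets the hardness statement carry over for the specific value $\ell=0$.
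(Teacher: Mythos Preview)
Your proposal is correct and follows exactly the paper's approach: both reduce from \PLVD{} by taking the hypergraph on the same vertex set whose hyperedges are the $q$-element sets inducing a~$P_q$, keeping $\packing$ and~$k$ (and hence~$\ell$) unchanged. Your write-up simply spells out the equivalence and the preservation of~$\ell$ in more detail than the paper does.
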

  
\noindent \cref{cor:hs} shows that the known above-guarantee fixed-parameter algorithms for \textsc{Vertex Cover}~\cite{RO09,CPPW13,LNR+14,GP15} do not generalize to \(d\)-\textsc{Uniform Hitting Set}.

\section{Conclusion}
\label{sec:conclusion}
It is open to extend our framework to further problems. The
most natural candidates appear to be problems where the forbidden induced subgraph has four vertices. Examples are \textsc{Cograph Editing}~\cite{LWGC12} which is
the problem of destroying all induced~$P_4$s, \textsc{$K_4$-free
  Editing}, \textsc{Claw-free Editing}, and \textsc{Diamond-free Deletion}~\cite{FGKNU11,SS15}. Another direction could be to investigate edge completion problems that allow for subexponential-time algorithms~\cite{DFPV15}. In the case of
vertex-deletion problems, \textsc{Triangle Vertex Deletion} appears to
be the most natural open case. Furthermore, it would be nice to obtain
more general theorems separating the tractable from the hard cases for
this parameterization.
For \textsc{Cluster Editing} and \textsc{Triangle Deletion} improved
running times \shorten{by better search trees} are desirable. Maybe more
importantly, it is open to determine the complexity of \textsc{Cluster
  Editing} and \textsc{Feedback Arc Set in Tournaments} parameterized
above the size of \emph{edge-disjoint} packings of forbidden
induced subgraphs.
Finally,  our framework offers an interesting tradeoff between
running time and power of generic data reduction rules. Exploring such
tradeoffs seems to be a rewarding topic for the future. The generic
rules presented in this work can be easily implemented, which asks for
subsequent experiments to evaluate their effectiveness.
\bibliographystyle{spbasic}
\bibliography{ag-editing}

\end{document}